\documentclass[11pt]{article}


\usepackage{xspace}
\usepackage{verbatim}
\usepackage[dvipsnames]{xcolor}
\usepackage{amsmath}
\usepackage{amssymb}
\usepackage{amsthm}
\usepackage{bbm}
\usepackage[showonlyrefs]{mathtools}
\usepackage{mathstyle}
\usepackage{empheq}
\usepackage{amstext,amssymb,amsfonts}
\usepackage{fullpage}
\usepackage{nicefrac}
\usepackage{bm}

\usepackage[linesnumbered,ruled,vlined]{algorithm2e}




\usepackage{textcomp,setspace}

\usepackage{nameref}
\usepackage[pagebackref,colorlinks,linkcolor=blue,filecolor = blue,citecolor = blue, urlcolor = blue, hyperfootnotes=false]{hyperref}
\usepackage{color}
\usepackage[capitalize, noabbrev, nameinlink]{cleveref}

\usepackage{thmtools}
\usepackage{thm-restate}

\declaretheorem[within=section]{theorem}
\declaretheorem[sibling=theorem]{corollary}

\declaretheorem[sibling=theorem]{lemma}

\declaretheorem[sibling=theorem]{definition}

\declaretheorem[sibling=theorem]{remark}

\declaretheorem[sibling=theorem]{assumption}
\declaretheorem[sibling=theorem]{example}

\crefname{assumption}{Assumption}{Assumptions}
%

\newcounter{termcounter}
\renewcommand{\thetermcounter}{\Alph{termcounter}}
\crefname{term}{term}{terms}
\creflabelformat{term}{\textup{(#2#1#3)}}

\makeatletter
\def\term{\@ifnextchar[\term@optarg\term@noarg}
\def\term@optarg[#1]#2{%
  \textup{(#1)}%
  \def\@currentlabel{#1}%
  \def\cref@currentlabel{[][2147483647][]#1}%
  \cref@label[term]{#2}}
\def\term@noarg#1{%
  \refstepcounter{termcounter}%
  \textup{(\thetermcounter)}%
  \cref@label[term]{#1}}
\makeatother

\newcommand{\ignore}[1]{}






\newcommand{\eval}{\mbox{$\mathsf{EVAL}$}~}
\newcommand{\samp}{\mbox{$\mathsf{SAMP}$}~}


\newcommand{\poly}{\mathrm{poly}}

\newcommand{\KL}{\mathrm{KL}}



\newcommand{\brac}[1]{[#1 ]}
\newcommand{\Brac}[1]{\left[#1 \right]}
\newcommand{\set}[1]{\{#1\}}







\definecolor{DSred}{rgb}{1,0,0}


\renewcommand{\leq}{\leqslant}
\renewcommand{\geq}{\geqslant}
\renewcommand{\ge}{\geqslant}
\renewcommand{\le}{\leqslant}
\renewcommand{\epsilon}{\varepsilon}
\newcommand{\eps}{\epsilon}



\newcommand{\R}{\mathbb{R}}

\newcommand{\cA}{\mathcal A}

\newcommand{\cC}{\mathcal C}
\newcommand{\cD}{\mathcal D}

\newcommand{\cL}{\mathcal L}

\newcommand{\cP}{\mathcal P}
\newcommand{\cQ}{\mathcal Q}

\newcommand{\cS}{\mathcal S}


\newcommand{\Esymb}{{\bf E}}

\newcommand{\Psymb}{{\bf Pr}}

\DeclareMathOperator*{\E}{\Esymb}

\DeclareMathOperator*{\ProbOp}{\Psymb}

\renewcommand{\Pr}{\ProbOp}


\newcommand{\ex}[1]{\E\brac{#1}}

\newcommand{\argEx}[2]{\E_{#1}\Brac{#2}}


\newcommand{\dtv}{\mbox{${d}_{\mathrm TV}$}}

\DeclareMathOperator*{\Pa}{{\mathsf{Pa}}}





\def\notes{1}
 \newcommand{\anote}[1]{\ifnum\notes=1{\mnote{Arnab: #1}}\fi}
 \newcommand{\snote}[1]{\ifnum\notes=1{\mnote{Sutanu: #1}}\fi}
 \newcommand{\vnote}[1]{\ifnum\notes=1{\mnote{Vinod: #1}}\fi}
\newcommand{\knote}[1]{\ifnum\notes=1{\mnote{Kuldeep: #1}}\fi}

\newcommand{\cdo}[0]{\mathsf{do}}

\newcommand{\cbn}{\textsc{CBN}} 

\title{Efficient Distance Approximation for Structured High-Dimensional Distributions via Learning\footnote{Author names are in alphabetical order}}
\author{
Arnab Bhattacharyya\thanks{National University of Singapore. Supported in part by Start-up Grant WBS R252000A33133 and an Amazon Research Award.}\\
arnabb@nus.edu.sg
\and
Sutanu Gayen\thanks{National University of Singapore. Supported in part by AB's Start-up Grant WBS R252000A33133.}\\
sutanugayen@gmail.com
\and
Kuldeep S. Meel\thanks{National University of Singapore.}\\meel@comp.nus.edu.sg
\and
N.~V.~Vinodchandran\thanks{University of Nebraska, Lincoln. Research mostly conducted while visiting National University of Singapore.}\\vinod@unl.edu
}

\begin{document}

\maketitle
\begin{abstract}
We design efficient distance approximation algorithms for several classes of structured high-dimensional distributions. Specifically, we show algorithms for the following problems:
\begin{itemize}
\item Given sample access to two Bayesian networks $P_1$ and $P_2$ over known directed acyclic graphs $G_1$ and $G_2$ having $n$ nodes and bounded in-degree, approximate $\dtv(P_1,P_2)$ to within additive error $\eps$ using $\poly(n,\eps)$ samples and time
\item Given sample access to two ferromagnetic Ising models $P_1$ and $P_2$ on $n$ variables with bounded width, approximate $\dtv(P_1, P_2)$ to within additive error $\eps$ using $\poly(n,\eps)$ samples and time
\item Given sample access to two $n$-dimensional gaussians $P_1$ and $P_2$, approximate $\dtv(P_1, P_2)$ to within additive error $\eps$ using $\poly(n,\eps)$ samples and time
\item
Given access to observations from two causal models $P$ and $Q$ on $n$ variables that are defined over known causal graphs, approximate $\dtv(P_a, Q_a)$ to within additive error $\eps$ using $\poly(n,\eps)$ samples, where $P_a$ and $Q_a$ are the interventional distributions obtained by the intervention $\cdo(A=a)$ on $P$ and $Q$ respectively for a particular variable $A$
\end{itemize}
Our results are the first efficient distance approximation algorithms for these well-studied problems. They are derived using a simple and general connection to distribution learning algorithms. The distance approximation algorithms imply new efficient algorithms for {\em tolerant} testing of closeness of the above-mentioned structured high-dimensional distributions.  
\end{abstract}

\newpage


\section{Introduction}
A fundamental challenge in statistics and computer science is to devise hypothesis tests that use a small number of samples. A classic problem of this type is {\em identity testing} (or, {\em goodness-of-fit testing}): given samples from an unknown distribution $P$ over a domain $\cS$, does $P$ equal a specific reference distribution $Q$? A sequence of works \cite{Pan08, BatuFRSW13, Valiant:2014:AIP:2706700.2707449, ChanDVV14} in the property testing literature has pinned down the finite sample complexity of this problem. It is known that with $O(|\cS|^{1/2}\eps^{-2})$ samples from $P$, one can,  with probability at least $2/3$, distinguish whether $P=Q$ or whether $\dtv(P,Q)>\eps$; also, $\Omega(|\cS|^{1/2}\eps^{-2})$ samples are necessary for this task. A related problem is {\em closeness testing} (or, {\em two-sample testing}): given samples from two unknown distributions $P$ and $Q$ over $\cS$, does $P=Q$? Here, it is known that $\Theta(|\cS|^{2/3}\eps^{-4/3} + |\cS|^{1/2}\eps^{-2})$ samples are necessary and sufficient to distinguish $P=Q$ from $\dtv(P,Q)>\eps$ with probability at least $2/3$. The corresponding algorithms for both identity and closeness testing run in time polynomial in $|\cS|$ and $\eps^{-1}$. 

However, in order to solve these testing problems in many real-life settings, there are two issues that need to be surmounted.
\begin{itemize}
\item
\textbf{High dimensions:}
In typical applications, the data is described using a huge number of (possibly redundant) features; thus, each item in the dataset is represented as a point in a high-dimensional space. If $\cS = \Sigma^n$, then from the results quoted above, identity testing or closeness testing for arbitrary probability distributions over $\cS$ requires $2^{\Omega(n)}$ many samples, which is clearly unrealistic. Hence, we need to restrict the class of input distributions.
\item
\textbf{Approximation:} A high-dimensional distribution requires a large number of parameters to be specified. So, for identity testing, it is unlikely that we can ever hypothesize a reference distribution $Q$ such that it exactly equals the data distribution $p$. Similarly, for closeness testing, two data distributions $P$ and $Q$ are most likely not exactly equal. Hence, we would like to design {\em tolerant} testers for identity and closeness that distinguish between $\dtv(P,Q)\leq \eps_1$ and $\dtv(P,Q)>\eps_2$ where $\eps_1$ and $\eps_2$ are user-supplied parameters.
\end{itemize}

In this work, we design sample- and time-efficient tolerant identity and closeness testers for natural classes of distributions over $\Sigma^n$. More precisely, we focus on {\em distance approximation} algorithms:
\begin{definition}
Let $\cD_1, \cD_2$ be two families of distributions over $\Sigma^n$.
A {\em distance approximation algorithm for $(\cD_1, \cD_2)$} is a randomized algorithm $\cA$ which takes as input $\eps \in (0,1)$, and sample access to two unknown distributions $P \in \cD_1, Q \in \cD_2$. The algorithm $\cA$ returns as output a value $\gamma \in [0,1]$ such that, with probability at least $2/3$:
$$\gamma - \eps \leq \dtv(P,Q) \leq \gamma + \eps.$$ 
If $\cD_1=\cD_2=\cD$, then we refer to such an algorithm as a {\em distance approximation algorithm for $\cD$}. 
\end{definition}
\begin{remark}
{The success probability can be amplified to $1-\delta$ by taking the median of $O(\log \delta^{-1})$ independent repetitions of the algorithm with success probability $2/3$.}
\end{remark}

The distance approximation problem and the tolerant testing problem are equivalent in the setting we consider. A distance approximation algorithm for $(\cD_1, \cD_2)$ immediately gives a tolerant closeness  testing algorithm for  two input distributions $P \in \cD_1$ and $Q \in \cD_2$  with the same asymptotic sample and time complexity bounds. Also a tolerant closeness testing algorithm for distributions in $\cD_1$ and $\cD_2$ gives a distance approximation algorithm for $(\cD_1, \cD_2)$, although with slightly worse sample and time complexity bounds (resulting from a binary search approach). Indeed this connection was explored in the property testing setting in~\cite{DBLP:journals/jcss/ParnasRR06} which established a general translation result. Thus, in the rest of this paper we will focus on the distance approximation problem and the results translate to appropriate tolerant testing problems. The bounds on the sample and time complexity will be phrased in terms of the description lengths of $\cD_1$ and $\cD_2$.

\ignore{\subsection{Organization}
The rest of the paper is organized as follows. In \cref{sec:new}, we summarize our results on distance approximation for structured high dimensional distributions. We also summarize prior related work at the end of this section. In \cref{sec:approx},
we present our main algorithm that approximates the distance between two distributions which can be sampled and approximately evaluated. In \cref{sec:bn}, we apply our technique to approximate the distance between two  high-dimensional Bayesian networks. \ignore{Due to space limitations, we defer the details 
of the rest of the results to the Appendix. }}

\section{New Results}\label{sec:new}

We design new sample and time efficient distance approximation algorithms for several well-studied families of high-dimensional distributions given sample access. We accomplish this by prescribing  a general strategy for designing distance approximation algorithms. In particular, we first design an algorithm to approximate the distance between a pairs of distributions. However, this algorithm needs  both sample access and an approximate evaluation oracle. We crucially observe that a learning algorithm 
that outputs a representation of the unknown distribution given sample access, can often efficiently simulate 
the approximation oracle. Thus the final algorithm only needs sample access. This general strategy  coupled with appropriate learning algorithms, leads to a number of new distance approximation algorithms (and hence new tolerant testers) for well-studied families of high-dimensional probability distributions. 

\subsection{Distance Approximation from \eval Approximators}
Given a family of distributions $\cD$, a learning algorithm for $\cD$  is an algorithm $\cL$ that on input $\eps \in (0,1)$ and sample access to a distribution $P$ promised to be in $\cD$, returns the description of a distribution $\hat{P}$ such that with probability at least $2/3$, $\dtv(P, \hat{P}) \leq \eps$. It turns out that for many natural distribution families $\cD$ over $\Sigma^n$, one can easily modify known learning algorithms for $\cD$ to efficiently output not just a description of $\hat{P}$ but the value of $\hat{P}(x)\coloneqq \Pr_{X \sim \hat{P}}[X=x]$ for any $x \in \Sigma^n$. More precisely, they yield what we call {\em \eval approximators}:
\begin{definition}\label{def:oracle}
Let $P$ be a distribution over a finite set $U$. A function $E_P:U \to [0,1]$ is a {\em $(\beta,\gamma)$- \eval approximator} for $P$ if there exists a distribution $\hat{P}$ over $U$ such that
\begin{itemize}
\item $\dtv(P,\hat{P})\le \beta$
\item $\forall x \in U$, $(1-\gamma)\cdot\hat{P}(x) \le E_P(x) \le (1+\gamma)\cdot\hat{P}(x)$
\end{itemize}
\end{definition}
\noindent Typically, the learning algorithm outputs parameters that describe $\hat{P}$, and then $\hat{P}(x)$ can be computed (or approximated) efficiently in terms of these parameters.

\begin{example}\label{ex:prod}
Suppose $\cD$ is the family of product distributions on $\{0,1\}^n$. That is, any $P \in \cD$ can be described in terms of $n$ parameters $p_1, \dots, p_n$ where each $p_i$ is the probability of the $i$'th coordinate being $1$. It is folklore that there is a learning algorithm which gets $O(n\eps^{-2})$ samples from $P$ and returns the parameters $\hat{p}_1, \dots, \hat{p}_n$ of a product distribution $\hat{P}$ satisfying $\dtv(P,\hat{P})\leq \eps$ with probability $2/3$. It is clear that given $\hat{p}_1, \dots, \hat{p}_n$, we can compute $\hat{P}(x)$ for any $x \in \{0,1\}^n$ in linear time as:
\[
\hat{P}(x) = \prod_{i=1}^n \left(x_i \cdot \hat{p}_i + (1-x_i)\cdot (1-\hat{p}_i)\right)
\]
Thus, there is an algorithm that takes as input sample access to any product distribution $P$, has sample and time complexity $O(n\eps^{-2})$, and returns a circuit implementing an $(\eps, 0)$-\eval approximator for $P$. Moreover, any call to the circuit returns in $O(n)$ time.
\end{example}

We establish the following link between \eval approximators and distance approximation.
\begin{theorem}\label{thm:maininf}
Suppose we have sample access to distributions $P$ and $Q$ over a finite set. Also, suppose we have access to $(\epsilon, \epsilon)$-\eval approximators for $P$ and $Q$. Then, with probability at least 2/3, $\dtv(P,Q)$ can be approximated to within $O(\epsilon)$ additive error using $O(\epsilon^{-2})$ samples from $P$ and $O(\epsilon^{-2})$ calls to the two \eval approximators.
\end{theorem}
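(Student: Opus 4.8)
The plan is to estimate $\dtv(P,Q)$ through the elementary identity $\dtv(\mu,\nu)=\E_{x\sim\mu}\bigl[\max(0,\,1-\nu(x)/\mu(x))\bigr]$, applied not to $P,Q$ themselves — whose pointwise values we cannot access — but to the distributions $\hat P,\hat Q$ that underlie the given \eval approximators. Since $E_P$ and $E_Q$ are $(\epsilon,\epsilon)$-\eval approximators, \cref{def:oracle} supplies distributions $\hat P,\hat Q$ with $\dtv(P,\hat P)\le\epsilon$ and $\dtv(Q,\hat Q)\le\epsilon$; by the triangle inequality $|\dtv(P,Q)-\dtv(\hat P,\hat Q)|\le 2\epsilon$, so it suffices to approximate $\dtv(\hat P,\hat Q)$ to within $O(\epsilon)$. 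For this I use the exact identity $\dtv(\hat P,\hat Q)=\E_{x\sim\hat P}[f(x)]$ where $f(x)\seteq\max\bigl(0,\,1-\hat Q(x)/\hat P(x)\bigr)$, with the convention $f(x)=0$ whenever $\hat P(x)=0$; this holds because $\sum_x\hat P(x)f(x)=\sum_{x:\hat P(x)>\hat Q(x)}\bigl(\hat P(x)-\hat Q(x)\bigr)=\dtv(\hat P,\hat Q)$.

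Next I would change the sampling measure from $\hat P$ to $P$: because $0\le f\le 1$, the standard estimate $|\E_\mu f-\E_\nu f|\le(\sup f-\inf f)\,\dtv(\mu,\nu)$ gives $|\E_{x\sim\hat P}[f]-\E_{x\sim P}[f]|\le\dtv(P,\hat P)\le\epsilon$. So it remains to estimate $\E_{x\sim P}[f]$ using only samples from $P$ and calls to $E_P,E_Q$. We cannot evaluate $f$ exactly, but we can evaluate $\tilde f(x)\seteq\max\bigl(0,\,1-E_Q(x)/E_P(x)\bigr)$, with $\tilde f(x)\seteq 0$ when $E_P(x)=0$. The crucial pointwise claim is $|f(x)-\tilde f(x)|=O(\epsilon)$ for every $x$: when $\hat P(x)>0$ the approximator bounds force $E_Q(x)/E_P(x)$ to lie within a multiplicative factor $\tfrac{1+\epsilon}{1-\epsilon}$ of the true ratio $r(x)\seteq\hat Q(x)/\hat P(x)$, so the two arguments of the $1$-Lipschitz clamp $t\mapsto\max(0,1-t)$ differ by at most $\tfrac{2\epsilon}{1-\epsilon}r(x)$; since the clamp vanishes for $t\ge 1$, $f(x)$ and $\tilde f(x)$ can both be nonzero only when $r(x)\le\tfrac{1+\epsilon}{1-\epsilon}=O(1)$, and there $\tfrac{2\epsilon}{1-\epsilon}r(x)=O(\epsilon)$; the cases $\hat P(x)=0$ are handled by the conventions. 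Consequently $|\E_{x\sim P}[f]-\E_{x\sim P}[\tilde f]|=O(\epsilon)$.

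Finally, since $\tilde f$ takes values in $[0,1]$, I would draw $m=O(\epsilon^{-2})$ i.i.d.\ samples $x_1,\dots,x_m$ from $P$, evaluate $\tilde f(x_i)$ using one call to $E_P$ and one to $E_Q$ each, and output $\hat\gamma\seteq\min\bigl(1,\tfrac1m\sum_{i=1}^m\tilde f(x_i)\bigr)$. Hoeffding's inequality, with a suitable constant in $m$, yields $|\hat\gamma-\E_{x\sim P}[\tilde f]|\le\epsilon$ with probability at least $2/3$. Chaining the four $O(\epsilon)$-sized discrepancies (approximator TV error, change of measure, $f$ versus $\tilde f$, empirical concentration) gives $|\hat\gamma-\dtv(P,Q)|=O(\epsilon)$ with probability at least $2/3$, using $O(\epsilon^{-2})$ samples from $P$ and $O(\epsilon^{-2})$ oracle calls.

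The step I expect to demand the most care is the pointwise bound $|f(x)-\tilde f(x)|=O(\epsilon)$: the multiplicative error in the density ratio is innocuous where $\hat P(x)$ is large, but $r(x)$ can be arbitrarily large where $\hat P(x)$ is tiny, and the only reason this does not wreck the estimate is that the clamp annihilates every such term — so the Lipschitz argument is ever invoked solely in the regime $r(x)\le\tfrac{1+\epsilon}{1-\epsilon}$. The boundary/support-mismatch cases ($\hat P(x)=0$ while $P(x)>0$, forcing $E_P(x)=0$ and a division by zero) also need the stated conventions to be matched against the exact identity $\dtv(\hat P,\hat Q)=\E_{x\sim\hat P}[f]$; this is routine once written out.
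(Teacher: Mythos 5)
Your proposal is correct and follows essentially the same route as the paper: the same estimator (the empirical average of $\max\bigl(0,\,1-E_Q(x)/E_P(x)\bigr)$ over $O(\epsilon^{-2})$ samples from $P$), the same triangle-inequality reduction to $\dtv(\hat P,\hat Q)$, the same change of measure between $\hat P$ and $P$, and the same case analysis for the multiplicative-error term (your three cases are exactly the paper's sets $S_1,S_2,S_3$). The only cosmetic difference is the order in which you perform the measure swap and the $f$-versus-$\tilde f$ comparison, which does not alter the argument.
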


Thus, in the context of \cref{ex:prod}, the above theorem immediately implies a distance approximation algorithm for product distributions using $O(n \eps^{-2})$ samples and time. \cref{thm:maininf} extends the work of Canonne and Rubinfeld~\cite{DBLP:conf/icalp/CanonneR14} who considered the setting $\beta = \gamma = 0$. We discuss the relation to prior work in \cref{sec:prior}.

 
\subsection{Bayesian Networks}A standard way to model structured high-dimensional distributions is through {\em Bayesian networks}. A Bayesian network describes how a collection of random variables can be generated one-at-a-time in a directed fashion, and they have been used to model beliefs in a wide variety of domains (see \cite{JN07, KF09} for many pointers to the literature). Formally, a probability distribution $P$ over $n$ variables $X_1, \dots, X_n \in \Sigma$ is said to be a {\em Bayesian network on a directed acyclic graph $G$} with $n$ nodes if\footnote{We use the notation $X_S$ to denote $\{X_i : i \in S\}$ for a set $S \subseteq [n]$.} for every $i \in [n]$, $X_i$ is conditionally independent of $X_{\text{non-descendants}(i)}$ given $X_{\text{parents}(i)}$. Equivalently, $P$ admits the factorization:
\begin{equation}\label{eqn:bnfactor}
P(x) \coloneqq \Pr_{X \sim P}[X=x]= \prod_{i=1}^n \Pr_{X\sim P}[X_i = x_i \mid \forall j \in {\rm parents}(i), X_j = x_j] \qquad \text{for all } x \in \Sigma^n
\end{equation}
For example, product distributions are Bayesian networks on the empty graph.

Invoking our framework of distance approximation via \eval approximators on Bayesian networks, we obtain the following:
\begin{restatable}{theorem}{bnmain}\label{thm:bn-main}
Suppose $G_1$ and $G_2$ are two DAGs on $n$ vertices with in-degree at most $d$. Let $\cD_1$ and $\cD_2$ be the family of Bayesian networks on $G_1$ and $G_2$ respectively. Then, there is a distance approximation algorithm for $(\cD_1, \cD_2)$  that gets $m=\tilde{O}(|\Sigma|^{d+1} n \eps^{-2})$ samples and runs in $O(|\Sigma|^{d+1} mn)$ time.
\end{restatable}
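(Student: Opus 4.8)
The plan is to instantiate \cref{thm:maininf}. It suffices to show that, from $m = \tilde O(|\Sigma|^{d+1} n \eps^{-2})$ samples and in $O(|\Sigma|^{d+1} m n)$ time, one can build an $(\eps,0)$-\eval approximator for any Bayesian network $P$ on a known DAG of in-degree $d$ (and, by the same argument, for $Q$ on $G_2$); feeding these two approximators into \cref{thm:maininf} costs only an extra $O(\eps^{-2})$ samples and oracle calls and outputs a number within $O(\eps)$ of $\dtv(P,Q)$, which after rescaling $\eps$ is exactly what the theorem asks for. The useful observation is that, on a \emph{known} graph, ``building an \eval approximator'' is essentially the same task as ``learning'': if $\hat P$ is the distribution obtained by estimating $P$'s conditional probability tables (CPTs) along the graph, then $\hat P(x) = \prod_{i=1}^n \hat P(x_i \mid x_{\pa(i)})$ can be evaluated \emph{exactly} by $n$ table look-ups, so we get $\gamma = 0$ for free and only have to guarantee $\dtv(P,\hat P) \le \eps$.

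Concretely, for each node $i$, parent configuration $a \in \Sigma^{\pa(i)}$ and value $b \in \Sigma$ write $\theta_{i,a,b} = \Pr_{X\sim P}[X_i = b \mid X_{\pa(i)} = a]$, so that $P(x) = \prod_i \theta_{i,x_{\pa(i)},x_i}$ and there are $O(|\Sigma|^{d+1})$ parameters per node. From the $m$ samples, tabulate for every node the counts of the pair $(X_{\pa(i)},X_i)$, let $\hat\theta_{i,a,\cdot}$ be the additively smoothed empirical conditional distribution of $X_i$ over the samples with $X_{\pa(i)} = a$ (say, add $\tfrac12$ to each of the $|\Sigma|$ counts; for a configuration $a$ not seen at all, use the uniform distribution on $\Sigma$), and output $\hat P(x) = \prod_i \hat\theta_{i,x_{\pa(i)},x_i}$. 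This is a genuine distribution on $\Sigma^n$ equipped with an exact $O(n)$-time evaluator; tabulating the CPTs over $m$ samples and $n$ nodes and then evaluating the $O(\eps^{-2})$ products of \cref{thm:maininf} is routine bookkeeping within $O(|\Sigma|^{d+1}mn)$ time.

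The heart of the matter is the sample bound, i.e.\ $\Pr[\dtv(P,\hat P)\le\eps] \ge 2/3$. The key structural fact is that, since $P$ and $\hat P$ factor over the \emph{same} graph, the KL divergence decomposes \emph{exactly}:
\[
\KL(P\,\|\,\hat P) \;=\; \sum_{i=1}^n\; \E_{a\sim P_{\pa(i)}}\!\bigl[\,\KL\bigl(\theta_{i,a,\cdot}\,\big\|\,\hat\theta_{i,a,\cdot}\bigr)\,\bigr],
\]
where $P_{\pa(i)}$ is the marginal of $P$ on $X_{\pa(i)}$. Conditioned on the list of parent configurations seen across the $m$ samples, the values of $X_i$ in those samples are independent draws from the true conditionals, so $\E[\KL(\theta_{i,a,\cdot}\|\hat\theta_{i,a,\cdot})\mid N_{i,a}]$ is the KL risk of the add-$\tfrac12$ estimator on $N_{i,a}$ samples, which is $O(|\Sigma|/N_{i,a})$ once $N_{i,a} \gtrsim |\Sigma|$ and is never worse than $O(\log(m+|\Sigma|))$. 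Split the configurations of node $i$ into ``heavy'' ones with $P_{\pa(i)}(a) \ge \tau = \tilde\Theta(\eps^2/(n|\Sigma|^{d}))$ and the rest: a Chernoff bound together with a union bound over the $\le n|\Sigma|^{d}$ configurations shows that, except with probability $1/6$, every heavy $N_{i,a}$ is at least $\tfrac12 m\,P_{\pa(i)}(a)$, and on that event the heavy configurations contribute $\sum_i\sum_{a\text{ heavy}} P_{\pa(i)}(a)\cdot O(|\Sigma|/(m P_{\pa(i)}(a))) = O(n|\Sigma|^{d+1}/m)$ to $\E[\KL(P\|\hat P)]$, while the light ones contribute at most $\sum_i |\Sigma|^{d}\tau\cdot O(\log(m+|\Sigma|)) = O(\eps^2)$. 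Hence $\E[\KL(P\|\hat P)\mid\text{good event}] = O(n|\Sigma|^{d+1}/m) + O(\eps^2) = O(\eps^2)$ for $m = \tilde\Theta(n|\Sigma|^{d+1}\eps^{-2})$, so by Pinsker $\E[\dtv(P,\hat P)\mid\text{good}] \le \sqrt{\tfrac12\,\E[\KL\mid\text{good}]} = O(\eps)$, and Markov (after adjusting constants) gives $\dtv(P,\hat P)\le\eps$ with the required probability.

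I expect two places to carry the real weight. First, it is essential to apply Pinsker \emph{once}, to $\KL(P\|\hat P)$ as a whole: the more obvious TV hybrid bound $\dtv(P,\hat P) \le \sum_i \E_{a\sim P_{\pa(i)}}\dtv(\theta_{i,a,\cdot},\hat\theta_{i,a,\cdot})$, even after optimizing within each node by Cauchy--Schwarz, takes a square root $n$ times and only yields $m = \tilde O(n^2|\Sigma|^{d+1}\eps^{-2})$ --- an extra factor of $n$. Second, one must handle rarely- and never-seen parent configurations, for which the empirical conditional is poor or undefined and $\KL$ can be as large as $\Theta(\log|\Sigma|)$ or infinite: the additive smoothing is what keeps every term finite, and the KL sum must be split so that low-probability configurations are charged only their (small) total mass rather than their full estimation error. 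Making the ``conditioning on the counts'' concentration precise, and checking that the smoothing moves $\hat P$ only $o(\eps)$ in total variation, are the fiddly parts; everything else is standard.
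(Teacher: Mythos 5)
Your proposal is correct and follows essentially the same route as the paper: learn the conditional probability tables with Laplace-style smoothing to get a Bayesian network $\hat P$ on the known graph, use the exact factorized evaluation as an $(\eps,0)$-\eval approximator, bound $\KL(P\,\|\,\hat P)$ via the exact node-and-parent-configuration decomposition with a heavy/light split, Chernoff plus union bound on the heavy counts, then Pinsker and Markov, and finally plug into \cref{thm:maininf}. The only differences are cosmetic (the paper thresholds low counts to uniform and reduces general $\Sigma$ to the binary case by encoding, while you smooth throughout and treat general alphabets directly), so the argument matches the paper's proof of \cref{thm:bn-main} via \cref{thm:bnlearning}.
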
 

We design a learning algorithm for Bayesian networks  on a known DAG  $G$ that uses $\tilde{O}(n \epsilon^{-2} |\Sigma|^{d+1})$ samples where $d$ is the maximum in-degree. 
It returns another Bayesian network $\hat{P}$ on $G$, described in terms of the conditional probability distributions $X_i \mid x_{\text{parents}(i)}$ for all $i \in [n]$ and all settings of $x_{\text{parents}(i)} \in \Sigma^{\deg(i)}$. Given these conditional probability distributions, we can easily obtain $\hat{P}(x)$ for any $x$, and hence, an $(\eps,0)$-\eval approximator for $P$, by using (\ref{eqn:bnfactor}).  \cref{thm:bn-main} then follows from \cref{thm:maininf}.

\cref{thm:bn-main} extends the works of Daskalakis et al.~\cite{DBLP:conf/colt/DaskalakisP17} and Canonne et al.~\cite{DBLP:conf/colt/CanonneDKS17}  who designed efficient  {\em non-tolerant} identity and closeness testers for Bayesian networks. Their arguments appear to be inadequate to design tolerant testers. In addition, their results for general Bayesian networks were restricted to the case when $G_1 = G_2$. \cref{thm:bn-main} immediately gives efficient {\em tolerant} identity and closeness testers for Bayesian networks even when $G_1\neq G_2$. Canonne et al.~\cite{DBLP:conf/colt/CanonneDKS17} obtain better sample complexity but they make certain {\em balancedness} assumption on each conditional probability distribution. Without such assumptions, the sample complexity of our algorithm is optimal.

\subsection{Ising Models}\label{sec:intro-ising}
 Another widely studied model of high-dimensional distributions is the {\em Ising model}. It was originally introduced in statistical physics as a way to study spin systems (\cite{Isi25}) but has since emerged as a versatile framework to study other systems with pairwise interactions, e.g., social networks (\cite{MS10}), learning in coordination games (\cite{Ell93}), phylogeny trees in evolution (\cite{Ney71, Far73, Cav78}) and image models for computer vision (\cite{GG86}). Formally, a distribution $P$ over variables $X_1, \dots, X_n \in \{-1,1\}$ is an {\em Ising model} if for all $x \in \{-1,1\}^n$:
\begin{equation}\label{eqn:ising}
P(x) = \frac{\exp\left(\sum_{i\neq j \in [n]} A_{ij} x_i x_j + \theta \sum_{i \in [n]} x_i\right)}{{\sum_{z \in \set{-1,1}^n} \exp\left(\sum_{i\neq j \in [n]} A_{ij} z_i z_j + \theta \sum_{i\in[n]} z_i\right)}}
\end{equation}
where $\theta \in \R$ is called the {\em external field} and $A_{ij}$ are called the {\em interaction terms}. An Ising model is called {\em ferromagnetic} if all $A_{ij} \geq 0$. The {\em width} of an Ising model as in (\ref{eqn:ising}) is $\max_i \sum_j |A_{ij}| + |\theta|$. 

Invoking our framework on Ising models, we obtain:
\begin{restatable}{theorem}{isingmain}\label{thm:ising-main}
Let $\cD$ be the family of ferromagnetic Ising models having width at most $d$. Then, there is a distance approximation algorithm for $\cD$ with sample complexity $m = e^{O(d)}\epsilon^{-4}n^8\log ({n\over \epsilon})$ and runtime $O(mn^2+\epsilon^{-2} n^{17} \log n)$. 
\end{restatable}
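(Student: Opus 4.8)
The plan is to reduce to our general framework, \cref{thm:maininf}: it suffices to construct, for any ferromagnetic width-$d$ Ising model $P$, an $(\epsilon,\epsilon)$-\eval approximator that can be built from $\poly(n,\epsilon^{-1})$ samples and time; then \cref{thm:maininf}, applied with the two approximators for $P$ and $Q$, yields a distance approximation algorithm for $\cD$ using $O(\epsilon^{-2})$ further samples and $O(\epsilon^{-2})$ \eval calls, and a final constant rescaling of $\epsilon$ finishes. So all the content is in building the \eval approximator.

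\textbf{Step 1 (parameter learning).} First I would invoke a known polynomial-time algorithm for learning bounded-width Ising models from samples (e.g.\ the multiplicative-weights algorithm of Klivans--Meka): given $m = e^{O(d)}\,\alpha^{-4}\log(n/\epsilon)$ samples from $P$, it outputs, with high probability, estimates $\hat A_{ij}$ and $\hat\theta$ with $|\hat A_{ij}-A_{ij}|\le\alpha$ for all $i\neq j$ and $|\hat\theta-\theta|\le\alpha$. I will take $\alpha=\Theta(\epsilon/n^2)$, which makes $m=e^{O(d)}\epsilon^{-4}n^8\log(n/\epsilon)$ as claimed. Since $A_{ij}\ge 0$, replacing each $\hat A_{ij}$ by $\max(\hat A_{ij},0)$ only doubles the per-parameter error and guarantees the learned model $\hat P=P_{\hat A,\hat\theta}$ is again ferromagnetic (its external field is the single scalar $\hat\theta$, hence automatically uniform).

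\textbf{Step 2 (parameters $\to$ TV).} Next I would show $\dtv(P,\hat P)\le\epsilon$. Writing $w(x)=\exp\!\big(\sum_{i\neq j}\hat A_{ij}x_ix_j+\hat\theta\sum_i x_i\big)$, $\hat Z=\sum_z w(z)$, and similarly $w'(x),Z'$ for the true parameters, for every $x$ the ratio $w(x)/w'(x)=\exp\!\big(\sum_{i\neq j}(\hat A_{ij}-A_{ij})x_ix_j+(\hat\theta-\theta)\sum_i x_i\big)$ lies in $[e^{-\delta_0},e^{\delta_0}]$ with $\delta_0\le n^2\alpha$; summing over $z$ gives $Z'/\hat Z\in[e^{-\delta_0},e^{\delta_0}]$ as well, so $\hat P(x)/P(x)\in[e^{-2\delta_0},e^{2\delta_0}]$ for all $x$, whence $\dtv(P,\hat P)\le e^{2\delta_0}-1=O(n^2\alpha)\le\epsilon$ for a suitable constant in $\alpha$. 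Thus $\hat P$ meets the ``$\beta=\epsilon$'' requirement of an $(\epsilon,\epsilon)$-\eval approximator.

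\textbf{Step 3 (evaluation via the ferromagnetic partition function), the main obstacle.} It remains to evaluate $\hat P(x)=w(x)/\hat Z$; the factor $w(x)$ is computed exactly in $O(n^2)$ time, but $\hat Z$ is a partition function, and this is exactly where ferromagnetism is essential (for general Ising models $\hat Z$ has no FPRAS under standard assumptions, so this route fails). Here I would invoke the Jerrum--Sinclair FPRAS for the partition function of a ferromagnetic Ising model with a uniform external field (if $\hat\theta<0$, first apply the spin-flip symmetry $x\mapsto -x$ to reduce to $\hat\theta\ge 0$): in time $\poly(n)\cdot\eta^{-2}$ it returns $\tilde Z\in[(1-\eta)\hat Z,(1+\eta)\hat Z]$ with high probability. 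I would run it \emph{once} with $\eta=\epsilon/2$ and set $E_P(x)=w(x)/\tilde Z$; on that single good event, $(1-\epsilon)\hat P(x)\le E_P(x)\le(1+\epsilon)\hat P(x)$ for all $x$ simultaneously, so $E_P$ is an $(\epsilon,\epsilon)$-\eval approximator for $P$. Running Steps 1--3 for both $P$ and $Q$ and feeding the two approximators into \cref{thm:maininf} gives the theorem: the sample count is dominated by the learning step, and the runtime is $O(mn^2)$ (learning) plus $\epsilon^{-2}n^{O(1)}\log n$ for the two Jerrum--Sinclair runs (the $\log n$ and the exponent $17$ coming from their mixing-time bound) plus only $O(\epsilon^{-2}n^2)$ for the \eval calls since $\tilde Z$ is reused, i.e.\ $O(mn^2+\epsilon^{-2}n^{17}\log n)$. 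Apart from Step 3, the remaining work is bookkeeping: the $\max(\cdot,0)$ rounding to keep the learned model ferromagnetic, the parameters-to-TV estimate of Step 2, and balancing the three error sources ($\beta$ from learning, $\gamma$ from Jerrum--Sinclair, and the $O(\epsilon)$ from \cref{thm:maininf}) with a union bound over the $O(1)$ failure events.
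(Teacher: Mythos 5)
Your proposal is correct and follows essentially the same route as the paper: learn the parameters via Klivans--Meka, truncate negative interaction terms to restore ferromagnetism, approximate the partition function with the Jerrum--Sinclair FPRAS to get an $(O(\epsilon),O(\epsilon))$-\eval approximator, and finish with \cref{thm:maininf}. The only cosmetic difference is that you re-derive the pointwise multiplicative guarantee from per-parameter error bounds (and justify truncation in $\ell_\infty$), whereas the paper cites Klivans--Meka's pointwise $(1\pm\epsilon)$ guarantee directly and justifies truncation via their $\|T-\hat{T}\|_1$ bound.
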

We use the parameter learning algorithm by Klivans and Meka~\cite{KM17} that learns the parameters $\hat{\theta}, \hat{A}_{ij}$ of another Ising model $\hat{P}$ such that $\hat{P}(x)$ is a $(1\pm \eps)$ approximation of $P(x)$ for every $x$. This results holds for any Ising model, ferromagnetic or not. But in order to get an \eval approximator, we need to compute $\hat{P}(x)$ from $\hat{\theta}, \hat{A}_{ij}$. In general, the partition function (i.e., the sum in the denominator of \cref{eqn:ising}) may be $\mathsf{\#P}$-hard to compute, but for ferromagnetic Ising models, Jerrum and Sinclair~\cite{JS93} gave a PTAS for this problem. Thus, we obtain an $(\eps, \eps)$-\eval approximator for ferromagnetic Ising models that runs in polynomial time, and then \cref{thm:ising-main} follows from \cref{thm:maininf}.

Daskalakis et al.~\cite{DBLP:journals/tit/DaskalakisD019} studied independent testing and identity testing for Ising models and design {\em non-tolerent} testers.
Their sample and time complexity have polynomial dependence on the width instead of exponential (as in our case), but their algorithms seem to be inherently non-tolerant. In contrast, our distance approximation algorithm leads to a tolerant closeness-testing algorithm for ferromagnetic Ising models.  
Also, \cref{thm:ising-main} offers a template for distance approximation algorithms whenever the partition function can be approximated efficiently. In particular, Sinclair et al~\cite{SST14} showed a PTAS for computing the partition function of anti-ferromagnetic Ising models in certain parameter regimes.

We also show that we can efficiently approximate the distance to uniformity for any Ising model, whether ferromagnetic or not. Below, $U$ is the uniform distribution over $\{-1,1\}^n$.
\begin{restatable}{theorem}{isingsupp}\label{thm:ising-unif}
There is an algorithm which, given independent samples from an unknown Ising model  $P$ over $\{-1,1\}^n$ with width at most $d$, takes $m=O(e^{O(d)}\epsilon^{-4}n^8\log ({n/\epsilon})+\epsilon^{-7}\log^3 {1\over \epsilon})$ samples, $O(mn^2+\epsilon^{-7}n^2\log^3 {1\over \epsilon})$ time and returns a value $e$ such that $|e-\dtv(P,U)|\le \epsilon$ with probability at least 7/12, where $U$ is the uniform distribution over $\{-1,1\}^n$. 
\end{restatable}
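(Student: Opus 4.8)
The plan is to reduce the problem to estimating $\dtv(\hat P, U)$ for a \emph{learned} Ising model $\hat P$ whose parameters are known explicitly, and then to exploit that the optimal distinguisher between an Ising model and the uniform distribution is a one‑dimensional threshold test on the Hamiltonian — a quantity one can evaluate in $O(n^2)$ time, in contrast to the partition function. This is precisely why \cref{thm:maininf} does not apply directly: building an $(\eps,\eps)$-\eval approximator for $P$ would require a multiplicative approximation of the partition function of an arbitrary Ising model, which is $\HashP$-hard. Concretely, I would first run the Klivans--Meka parameter learner~\cite{KM17} to obtain, with constant probability, parameters $\hat\theta,\hat A$ of an Ising model $\hat P$ with $\hat P(x)\in(1\pm\eps)P(x)$ for all $x$, hence $\dtv(P,\hat P)\le\eps$; this costs $e^{O(d)}n^8\eps^{-4}\log(n/\eps)$ samples. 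By the triangle inequality $|\dtv(P,U)-\dtv(\hat P,U)|\le\eps$, so it suffices to estimate $\dtv(\hat P,U)$ to additive error $O(\eps)$.

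\textbf{The threshold structure.} Write $\hat H(x)=\sum_{i\neq j}\hat A_{ij}x_ix_j+\hat\theta\sum_i x_i$, so that $\hat P(x)/U(x)=2^n e^{\hat H(x)}/\hat Z$ is a strictly increasing function of $\hat H(x)$. Hence $\{x:\hat P(x)<U(x)\}$ is a threshold event $\{\hat H(x)<t^*\}$ with $t^*=\ln(\hat Z/2^n)$, and the signed measure $U-\hat P$ is non‑negative exactly on $\{\hat H\le t^*\}$ and vanishes on $\{\hat H=t^*\}$. A short calculation then gives
\[
\dtv(\hat P,U)=\max_{t\in\R}\left(\Pr_{x\sim U}[\hat H(x)\le t]-\Pr_{x\sim \hat P}[\hat H(x)\le t]\right),
\]
where the maximum is attained at $t^*$. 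Crucially, the right‑hand side depends on the parameters $\hat\theta,\hat A$ only through $\hat H$, never through $\hat Z$.

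\textbf{Estimation.} Next I would estimate the two univariate CDFs $F_U(t)=\Pr_{x\sim U}[\hat H(x)\le t]$ and $F_{\hat P}(t)=\Pr_{x\sim \hat P}[\hat H(x)\le t]$. Samples for $F_U$ are free — draw $x$ uniformly and evaluate $\hat H(x)$ in $O(n^2)$ time. For $F_{\hat P}$ we cannot sample $\hat P$ efficiently, but we may use the samples from $P$ instead: since each $\{\hat H\le t\}$ is a single event, $|\Pr_{x\sim P}[\hat H(x)\le t]-\Pr_{x\sim \hat P}[\hat H(x)\le t]|\le\dtv(P,\hat P)\le\eps$ uniformly in $t$. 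A Dvoretzky--Kiefer--Wolfowitz‑type uniform‑convergence bound (or, using that the likelihood ratio is monotone and hence the displayed function of $t$ is unimodal, a ternary search) shows that $\poly(1/\eps)$ samples of each type suffice to estimate both CDFs in sup‑norm to error $\eps$; the algorithm then outputs $\max_t(\hat F_U(t)-\hat F_P(t))$, the maximum ranging over the finitely many observed values of $\hat H$. Adding up the sup‑norm errors (including the extra $\eps$ from the $P$-for-$\hat P$ substitution and the $\eps$ from $\dtv(P,\hat P)$) yields additive error $O(\eps)$ for $\dtv(P,U)$ after rescaling; taking the product of the constant success probabilities of the learner and the estimator gives the stated $7/12$. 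The running time is the Klivans--Meka learning time plus $O(n^2)$ per sample for evaluating $\hat H$, with no partition‑function computation — which is exactly why the $n^{17}$ term appearing in \cref{thm:ising-main} does not occur here.

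\textbf{Main obstacle.} The crux is the structural step: recognizing that the distance of an Ising model to uniformity is computable from a single one‑dimensional statistic, so that the intractable partition function never enters. The secondary wrinkle is that we can only \emph{approximately} sample from the learned model $\hat P$; this is harmless precisely because every quantity we estimate is the probability of an event, on which the $\dtv(P,\hat P)$ slack enters only linearly — so substituting $P$-samples for $\hat P$-samples costs a controllable additive $\eps$.
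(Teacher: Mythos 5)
Your proposal is correct, but it takes a genuinely different route from the paper. The paper's proof shares your first step (learn $\hat\theta,\hat A$ via Klivans--Meka, \cref{thm:IsingLearn}), but then it reduces to the distance-to-uniformity estimator of Canonne et al.~\cite{DBLP:journals/siamcomp/CanonneRS15} in the sampling-plus-\textsf{PCOND} model: it observes that all \textsf{PCOND} accesses in that algorithm go through a \textsc{Compare} subroutine whose only job is to estimate ratios of probabilities of pairs of points, and such ratios $N_x/N_z$ can be computed from the learned parameters in $O(n^2)$ time without ever touching the partition function; the $\eps^{-7}\log^3(1/\eps)$ terms in \cref{thm:ising-unif} are inherited from that tester. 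You instead exploit the structural fact that $\hat P(x)/U(x)$ is monotone in the Hamiltonian $\hat H(x)$, so $\dtv(\hat P,U)=\max_t\bigl(\Pr_U[\hat H\le t]-\Pr_{\hat P}[\hat H\le t]\bigr)$, and you estimate the two one-dimensional CDFs by a DKW/Kolmogorov--Smirnov argument, substituting $P$-samples for $\hat P$-samples at an additive cost of $\dtv(P,\hat P)\le\eps$ (valid since each threshold set is a single event, uniformly in $t$, and the empirical maximum over observed values of $\hat H$ suffices because both empirical CDFs are step functions). Your route is self-contained and quantitatively stronger: beyond learning, it needs only $O(\eps^{-2})$ samples and $O(n^2\eps^{-2})$ time rather than the $\eps^{-7}\log^3(1/\eps)$ overhead, comfortably within the stated bounds; what the paper's route buys is a black-box reuse of an existing tester requiring no structural observation about the optimal distinguisher, and a template that extends to settings where only pairwise ratio comparisons (not a monotone one-dimensional statistic) are available.
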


The proof of \cref{thm:ising-unif} again proceeds by learning the parameters $\hat{\theta}, \hat{A}$ of an Ising model $\hat{P}$ that is a multiplicative approximation fo $P$. As we mentioned earlier, computing the partition function is in general hard, but now, we can efficiently estimate the ratio ${P}(x)/{P}(y)$ between any two $x,y \in \{-1,1\}^n$. At this point, we invoke the uniformity tester shown by Canonne et al.~\cite{DBLP:journals/siamcomp/CanonneRS15} that uses samples from the input distribution as well as pairwise conditional samples (the so-called \textsf{PCOND} oracle model). 

\subsection{Multivariate Gaussians}
\cref{thm:maininf} applies also when $\Sigma$ is not finite, e.g., the reals. Then, in the definition of the $(\beta,\gamma)$-\eval approximator $E_P$ for a distribution $P$, we require that there is a distribution $\hat{P}$ such that $\dtv(P,\hat{P})\leq \beta$ and $E_P$ is a $(1\pm \gamma)$-approximation of the {\em probability density function} of $\hat{P}$ at any $x$.

The most prominent instance in which we can apply our framework in this setting is for the class of multivariate gaussians, again another widely used model for high-dimensional distributions used throughout the natural and social sciences (see, e.g., \cite{MDLW18}). There are two main reasons for their ubiquity. Firstly, because of the central limit theorem, any physical quantity that is a population average is approximately distributed as a gaussian. Secondly, the gaussian distribution has maximum entropy among all real-valued distributions with a particular mean and covariance; therefore, a gaussian model places the least restrictions beyond the first and second moments of the distribution.

For $\mu \in \R^n$ and positive definite $\Sigma \in \R^{n \times n}$, the distribution $N(\mu, \Sigma)$ has the density function:
\begin{equation}\label{eqn:gaussian}
N(\mu, \Sigma; x) = \frac{1}{(2\pi)^{n/2} \sqrt{\det(\Sigma)}} \exp\left(-\frac12(x-\mu)^\top \Sigma^{-1} (x-\mu)\right)
\end{equation}
Invoking our framework on multivariate gaussians, we obtain:
\begin{restatable}{theorem}{gaussianmain}\label{thm:gaussians-main}
Let $\cD$ be the family of multivariate gaussian distributions, $\{ N(\mu, \Sigma): \mu \in \R^n, \Sigma \in \R^{n \times n}, \Sigma \succ 0\}$. Then, there is a distance approximation algorithm for $\cD$ with sample complexity $O(n^2 \eps^{-2})$ and runtime $O(n^\omega \eps^{-2})$ (where $\omega > 2$ is the matrix multiplication constant).
\end{restatable}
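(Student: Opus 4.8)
The plan is to apply the framework of \cref{thm:maininf}: it suffices to build, using sample access alone, an $(\eps,\eps)$-\eval approximator for an arbitrary $N(\mu,\Sigma)\in\cD$, and then hand the approximators for $P$ and for $Q$ to \cref{thm:maininf}. Such an approximator is obtained in two steps. First, learn parameters $\hat\mu\in\R^n$ and $\hat\Sigma\succ0$ of a gaussian $\hat P=N(\hat\mu,\hat\Sigma)$ with $\dtv(N(\mu,\Sigma),\hat P)\le\eps$. Second, note that once $\hat\Sigma^{-1}$ and $\det\hat\Sigma$ are available, the density \eqref{eqn:gaussian} of $\hat P$ can be evaluated \emph{exactly} at any query point $x$, so the resulting procedure is an $(\eps,0)$-\eval approximator, hence in particular an $(\eps,\eps)$-\eval approximator, for $N(\mu,\Sigma)$.

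For the learning step, take $m=O(n^2\eps^{-2})$ i.i.d.\ samples $x^{(1)},\dots,x^{(m)}$ from $P$ and form the empirical mean $\hat\mu=\frac1m\sum_t x^{(t)}$ and empirical covariance $\hat\Sigma=\frac1m\sum_t(x^{(t)}-\hat\mu)(x^{(t)}-\hat\mu)^\top$; this is the classical problem of learning a gaussian in total variation distance, for which $\Theta(n^2\eps^{-2})$ samples are necessary and sufficient. The one point that needs care is the accuracy guarantee: with high constant probability these estimators satisfy $\dtv(N(\mu,\Sigma),N(\hat\mu,\hat\Sigma))\le\eps$ \emph{uniformly over all} $\Sigma\succ0$, with no dependence on the conditioning of $\Sigma$. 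This holds because total variation distance is affine invariant: applying the map $z\mapsto\Sigma^{-1/2}(z-\mu)$ reduces the claim to the standard gaussian $N(0,I)$, for which $\hat\mu$ and $\hat\Sigma$ are just the empirical mean and covariance of $m$ i.i.d.\ $N(0,I)$ vectors. Standard concentration then gives $\|\hat\mu\|_2=O(\sqrt{n/m})$ and $\|\hat\Sigma-I\|_F=O(n/\sqrt m)$ with high constant probability, and combining this with the elementary estimate $\dtv\bigl(N(0,I),N(\hat\mu,\hat\Sigma)\bigr)\le C\bigl(\|\hat\mu\|_2+\|\hat\Sigma-I\|_F\bigr)$, valid when the right-hand side is below a small absolute constant (via Pinsker and the Taylor expansion of the gaussian KL divergence), yields $\dtv\le\eps$ once $m=\Theta(n^2\eps^{-2})$. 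Since $m\gg n$ and the true $\Sigma\succ0$, $\hat\Sigma\succ0$ almost surely, so these conclusions are meaningful; if preferred one may regularize by $\hat\Sigma\leftarrow\hat\Sigma+\eps^2 I$.

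For the evaluation step, given $\hat\mu,\hat\Sigma\succ0$ compute a Cholesky factorization $\hat\Sigma=LL^\top$ in $O(n^\omega)$ time; this yields $\det\hat\Sigma=\prod_i L_{ii}^2$ and allows solving linear systems with $\hat\Sigma$ in $O(n^2)$ time. The density $N(\hat\mu,\hat\Sigma;x)$ is then read off from \eqref{eqn:gaussian} by evaluating the quadratic form $(x-\hat\mu)^\top\hat\Sigma^{-1}(x-\hat\mu)$, at $O(n^2)$ cost per query. Carrying this out for both $P$ and $Q$ produces the two \eval approximators; \cref{thm:maininf} then returns a value within $O(\eps)$ of $\dtv(P,Q)$ using $O(\eps^{-2})$ additional samples from $P$ and $O(\eps^{-2})$ calls to the approximators, and rescaling $\eps$ by a constant makes the error at most $\eps$. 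Amplifying the two learning steps and \cref{thm:maininf} to sufficiently high constant success probability and taking a union bound gives overall success probability at least $2/3$.

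Summing up, the samples used are $O(n^2\eps^{-2})$ to learn $P$ and $Q$ plus $O(\eps^{-2})$ more from $P$, i.e.\ $O(n^2\eps^{-2})$ in total; and the running time is dominated by the matrix operations on the two empirical covariances (Cholesky and the associated solves) together with the $O(\eps^{-2})$ density evaluations, giving the claimed $O(n^\omega\eps^{-2})$ bound. The only nonroutine ingredient is the learning-accuracy claim of the second paragraph — achieving $\dtv\le\eps$ at the optimal sample size $O(n^2\eps^{-2})$ uniformly over all positive-definite $\Sigma$; everything else is standard linear algebra layered on \cref{thm:maininf}.
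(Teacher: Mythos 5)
Your proposal is correct and follows essentially the same route as the paper: learn $\hat P=N(\hat\mu,\hat\Sigma)$ and $\hat Q$ from $O(n^2\eps^{-2})$ samples via the empirical mean and covariance (the paper invokes this as a folklore learning theorem, which you additionally sketch via affine invariance, concentration, and Pinsker), observe that exact density evaluation after an $O(n^\omega)$ determinant/factorization step yields $(\eps,0)$-\eval approximators, and conclude by the continuous analog of \cref{thm:maininf}. The only differences are cosmetic (Cholesky versus a generic determinant computation, and your explicit proof of the folklore step), so no gap to report.
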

It is folklore that for any $P = N(\mu, \Sigma)$, the empirical mean $\hat{\mu}$ and empirical covariance $\hat{\Sigma}$ obtained from $O(n^2\eps^{-2})$ samples from $P$ determines a gaussian $\hat{P} = N(\hat{\mu}, \hat{\Sigma})$ satisfying $\dtv(P,\hat{P}) \leq \eps$ with probability at least 3/4. To get an \eval approximator, we need evaluations of $N(\hat{\mu}, \hat{\Sigma}; x)$ for any $x$ as in (\ref{eqn:gaussian}). Since $\det(\hat{\Sigma})$ is computable in time $O(n^\omega)$, \cref{thm:gaussians-main} follows from \cref{thm:maininf}.

This result is interesting because there is no closed-form expression known for the total variation distance between two gaussians of specified mean and covariance. Devroye et al.~\cite{DMR18} give expressions for lower- and upper-bounding the total variation distance that are a constant multiplicative factor away from each other. On the other hand, our approach (see \cref{cor:tv-approx}) yields a polynomial time randomized algorithm that, given $\mu_1, \Sigma_1, \mu_2, \Sigma_2$,  approximates the total variation distance $\dtv(N(\mu_1, \Sigma_1), N(\mu_2, \Sigma_2))$ upto $\pm \eps$ additive error.  

\subsection{Interventional Distributions in Causal Models}
A {\em causal model} for a system of random variables describes not only how the variables are correlated but also how they would change if they were to be externally set to prescribed values. To be more formal, we can use the language of  {\em causal Bayesian networks} due to Pearl~\cite{Pearl00}. A causal Bayesian network is a Bayesian network with an extra {\em modularity} assumption: for each node $i$ in the network, the dependence of $X_i$ on $X_{{\rm parents}(i)}$ is an autonomous mechanism that does not change even if other parts of the network are changed. 

Suppose $\cP$ is a causal Bayesian network over variables $X_1, \dots, X_n$ on a directed acyclic graph $G$ with nodes labeled $\{1,\dots, n\}$.
The nodes in $G$ are partitioned into two sets: {\em observable} $V$ and {\em hidden} $U$. A sample from the observational distribution $P$ yields the values of variables $X_V = \{X_i : \in V\}$. The modularity assumption allows us to define the result of {\em interventions} on causal Bayesian networks. An intervention is specified by a subset $S \subseteq V$ and an assignment $s \in \Sigma^{|S|}$. In the resulting interventional distribution, the variables in $S$ are fixed to $s$, while the variables $X_i$ for $i \notin S$ are sampled in topological order as it would have been in the original Bayesian network, according to the conditional probability distribution $X_i \mid X_{{\rm parents}(i)}$, where $X_{{\rm parents}(i)}$ consist of either variables previously sampled in the topological order or variables in $S$ set by the intervention. Finally, the variables in $U$ are marginalized out. The resulting distribution on $X_V$ is denoted $P_s$.

The question of inferring the interventional distribution from samples is a fundamental one. We focus on {\em atomic interventions}, i.e., where the intervention is on a single node $A \in V$. In this case, Tian and Pearl~\cite{TP02b, tian-thesis} exactly characterized the graphs $G$ such that for any causal Bayesian network $\cP$ on $G$ and for any assignment $a \in \Sigma$ to $X_A$, the interventional distribution $P_a$ is {\em identifiable}\footnote{That is, there exists a well-defined function mapping $P$ to $P_a$ but which may not be computationally effective.} from the observational distribution $P$ on $X_V$. For identification to be computationally effective, it is also natural to require a {\em strong positivity} condition on $P$, defined in \cref{sec:cbn}.
We show that we can efficiently estimate the distances between interventional distributions of causal Bayesian networks whenever the identifiability and strong positivity conditions are met:
\begin{theorem}[Informal]\label{thm:cbninf}
Suppose $\cP, \cQ$ are two unknown causal Bayesian networks on two known graphs $G_1$ and $G_2$ on a common observable set $V$ containing a special node $A$ and having bounded in-degree and c-component size. Suppose $G_1$ and $G_2$ both satisfy the identifiability condition, and the observational distributions $P$ and $Q$ satisfy the strong positivity condition. 

Then there is an algorithm which for any $a \in \Sigma$ and parameter $\eps \in (0,1)$ returns a value $e$ such that
$|e-\dtv(P_a,Q_a)|\le \epsilon$ with probability at least 2/3 using $\mathrm{poly}(|\Sigma|, n, \eps^{-1})$ samples from the observational distributions $P$ and $Q$ and running in time $\mathrm{poly}(|\Sigma|, n, \eps^{-1})$.
\end{theorem}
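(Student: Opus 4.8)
The plan is to reduce to \cref{thm:maininf} by constructing, from observational samples alone, an $(\eps,\eps)$-\eval approximator for each interventional distribution $P_a$ and $Q_a$, together with sample access to a distribution that is within $\eps$ in total variation distance of each of them. The starting point is the identification theory of Tian and Pearl~\cite{TP02b,tian-thesis}: when $G_1$ satisfies their identifiability condition for the atomic intervention $\cdo(A=a)$, the interventional density $P_a(x_V)$ admits a closed-form expression --- the c-component factorization --- as a product, over the c-components of the relevant subgraph, of ``Q-factors'', where each Q-factor is obtained from the observational conditional probabilities $\Pr_{X\sim P}[X_i = x_i \mid X_{\mathrm{parents}(i)} = x_{\mathrm{parents}(i)}]$ by a fixed sequence of products, ratios, and marginalization sums. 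Because $G_1$ has bounded in-degree and bounded c-component size, every conditional that appears is over $O(1)$ variables, and the total number of arithmetic operations (including the marginalizations, each a sum over $\Sigma^{O(1)}$ assignments) used to assemble $P_a(x)$ from these conditionals is $\poly(|\Sigma|,n)$.

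First I would estimate all the observational conditionals appearing in Tian's formula from samples. The strong positivity condition on $P$ (\cref{sec:cbn}) guarantees that every such conditional --- and every denominator that appears as a ratio --- is bounded below by a constant $c>0$, so with $m=\poly(|\Sigma|,n,\eps^{-1})$ observational samples the empirical conditionals are, simultaneously for all $i$ and all parent assignments, within a $(1\pm\eps')$ multiplicative factor of the truth with high probability, for $\eps' = \eps/\poly(n)$ (each relevant bin is hit $\poly(n/\eps)$ times, by a Chernoff-plus-union-bound argument). Substituting these estimates into Tian's formula defines an ``empirical interventional density'' $\widehat{P}_a$. Multiplicative error is preserved by products, by ratios (where it only doubles, using the lower bounds on denominators), and by sums of nonnegative terms, so after propagating through the $\poly(n)$-size arithmetic circuit we get $\widehat{P}_a(x) \in (1\pm\eps)\,P_a(x)$ for every $x$; renormalizing $\widehat{P}_a$ changes it by a further $(1\pm\eps)$ factor and makes it a genuine distribution with $\dtv(P_a,\widehat{P}_a)\le\eps$. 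Hence $x\mapsto\widehat{P}_a(x)$, computable in $\poly(|\Sigma|,n)$ time per query, is an $(\eps,\eps)$-\eval approximator for $P_a$; the same construction applied to $\cQ$ on $G_2$ gives one for $Q_a$, and since with $X_A$ fixed both $P_a$ and $Q_a$ live on $\Sigma^{V}$, the distances compare.

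For sample access, note that although we cannot perform interventions and so have no direct samples from $P_a$, the empirical c-factors describe a valid generative (``mutilated network'') process: fix $X_A=a$ and sample the remaining observable variables in topological order using the learned conditionals, marginalizing out hidden nodes. This yields i.i.d.\ draws from $\widehat{P}_a$ (and from $\widehat{Q}_a$). Since $\dtv(P_a,\widehat{P}_a)\le\eps$, replacing $P_a$ by $\widehat{P}_a$ as the sampling source in \cref{thm:maininf} affects the output by only $O(\eps)$. Running the algorithm of \cref{thm:maininf} on $\widehat{P}_a,\widehat{Q}_a$ with the two \eval approximators therefore returns, using $O(\eps^{-2})$ samples and \eval queries, a value within $O(\eps)$ of $\dtv(\widehat{P}_a,\widehat{Q}_a)$, which by the triangle inequality is within $O(\eps)$ of $\dtv(P_a,Q_a)$; rescaling $\eps$ by a constant, amplifying the intermediate success probabilities, and folding all costs together gives the claimed $\poly(|\Sigma|,n,\eps^{-1})$ sample and time bounds.

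The main obstacle --- where I expect most of the work to lie --- is making the Tian--Pearl identification \emph{quantitatively} effective: pinning down exactly which observational conditionals enter the c-component factorization, checking that bounded in-degree and bounded c-component size genuinely keep their arities and the circuit size polynomial, and carrying out the error-propagation bookkeeping so that a $(1\pm\eps/\poly(n))$ per-conditional accuracy yields a $(1\pm\eps)$ pointwise approximation of $P_a$ (the marginalization sums and the ratio steps are the delicate parts, since one must verify that multiplicative error neither blows up when many near-estimates are summed nor when a small estimated denominator is inverted). A secondary point to handle with care is phrasing the strong positivity assumption so that it transfers to \emph{all} the intermediate conditional distributions that Tian's formula references, not merely to the conditionals along one fixed topological order.
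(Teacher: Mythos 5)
Your top-level architecture matches the paper's: build \eval access and sample access to the two interventional distributions from observational data alone, then run the algorithm of \cref{thm:main}. The paper, however, obtains these black-box from the learning/sampling results of \cite{BGKMV20} (\cref{thm:doEVAL,thm:doSAMP}) and only has to do a short triangle-inequality bookkeeping; you instead try to rebuild that machinery from the Tian--Pearl identification formula, and this is where there is a genuine gap. Your key step asserts that strong positivity ``guarantees that every such conditional --- and every denominator that appears as a ratio --- is bounded below by a constant,'' and from this you conclude a \emph{pointwise multiplicative} guarantee $\widehat{P}_a(x)\in(1\pm\eps)P_a(x)$ for all $x$. But the strong positivity assumption used here (\cref{ass:bal}) only lower-bounds $P(Z=z)$ for $Z=\Pa^+(S_1)$, the c-component of $A$ together with its parents. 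The c-component factorization $P_a(v)=\bigl(\sum_{x_A}Q[S_1]\bigr)\prod_{j\neq 1}Q[S_j]$ also involves the Q-factors of every \emph{other} c-component (equivalently, the factor $P(v)/Q[S_1](v_Z)$), whose conditionals are conditioned on events that may have arbitrarily small or zero probability; no assumption lower-bounds them, so uniform $(1\pm\eps/\poly(n))$ multiplicative estimation of all those conditionals --- and hence pointwise multiplicative approximation of $P_a$ --- is information-theoretically impossible with $\poly$ samples (consider any $x$ with exponentially small $P_a(x)$). You flag this as a ``secondary point'' about transferring positivity, but the assumption simply does not transfer, and your argument as written would require an assumption strictly stronger than the theorem's hypotheses.

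The fix is to aim lower, exactly as the paper does: one only needs an $(\eps,0)$-\eval approximator, i.e.\ exact evaluation of \emph{some} distribution $\hat{P}_a$ with $\dtv(P_a,\hat{P}_a)\le\eps$. The rarely-conditioned factors should be learned in an averaged (TV/KL) sense, as in the Bayes-net learner of \cref{sec:bn}, with only the $Q[S_1]$-part (where \cref{ass:bal} applies) handled multiplicatively; one then has to show that TV error in the learned observational factors inflates by at most a function of $\alpha$ when passed through the interventional reweighting, which is precisely where the $1/\alpha$ in the sample complexity of \cref{thm:doEVAL,thm:doSAMP} comes from and is absent from your accounting. Separately, since no samples from $P_a$ exist, the sampler must be a learned generator (your ``mutilated network'' idea is the right shape, and is what \cref{thm:doSAMP} provides), and then the paper's final step --- treating the learned evaluators as $(\eps/5,0)$-\eval approximators for the generator distributions $\tilde{P}_a,\tilde{Q}_a$ via the triangle inequality and invoking \cref{thm:main} --- goes through; your last paragraph is essentially this step and is fine.
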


We again use the framework of \eval approximators to prove the theorem, but there is a complication: we do not get samples from the distributions $P_a$ and $Q_a$, but only from $P$ and $Q$. 
We build on a recent work (\cite{BGKMV20}) that shows how to efficiently learn and sample from interventional distributions of atomic interventions using observational samples, assuming the identifiability and strong positivity conditions. 

\cref{thm:cbninf} solves a very natural problem. To concoct a somewhat realistic example, suppose a biologist wants to compare how a particular point mutation affects the activity of other genes for Africans and for Europeans. Because of ethical reasons, she cannot conduct randomized controlled trials by actively inducing the mutation, but she can draw random samples from the two populations. It is reasonable to assume that the graph structure of the regulatory network is the same for all individuals, and we further assume that the causal graph over the genes of interest is known (or can be learned through other methods). Also, suppose that the gene expression levels can be discretized. She can then, in principle, use the algorithm proposed in \cref{thm:cbninf} to test whether the effect of the mutation is approximately the same for Africans and Europeans. 

\ignore{
In practice we are often faced with the challenge of processing samples from a large unknown probability distribution. In such scenarios our goal is to infer certain properties of interest about the underlying distribution. In theoretical computer science classical versions of these questions are being studied from a property testing viewpoint since last 20 years. These efforts have led us to understand many of the important questions of the form does the underlying distribution $\mathcal{D}$ is close to having certain property or not. As an example in the uniformity testing problem we now know that if $\mathcal{D}$ is over the sample space $\mathcal{S}$, we can test whether $\mathcal{D}$ is $\epsilon$ close (in total variation distance) to a uniform distribution on $\mathcal{S}$ or not in $O(\sqrt{|\mathcal{S}|}/\epsilon^2)$ samples with probability 2/3~\cite{Valiant:2014:AIP:2706700.2707449} and this question cannot be decided in general in $o(\sqrt{|\mathcal{S}|}/\epsilon^2)$ samples~\cite{Pan08}.

In machine learning applications the samples often come from a multidimensional distribution with thousands or more dimensions. If each dimension is over a sample space $[\ell]=\{1,2,...,\ell\}$ and there are $n$ dimensions, the sample space size becomes exponentially large. In that case the lower bound of Paninski~\cite{Pan08} tells us $\Omega(\ell^{n/2}/\epsilon^2)$ samples are needed for the basic problem of uniformity testing which is prohibitive in practice. This lower bound can be interpreted from the birthday paradox: the uniform distribution over $\mathcal{S}$ cannot be distingushed from the uniform distribution on a random subset of $\mathcal{S}$ of size $|\mathcal{S}|/2$ by sampling unless at least one collision is observed. In an effort to circumvent this lower bound 3 simultaneous papers~\cite{DBLP:conf/colt/CanonneDKS17,DBLP:conf/colt/DaskalakisP17,DBLP:journals/tit/DaskalakisD019} ruled out such atypical constructions by strengthening our hypothesis about the unknown distribution. These papers assumed the unknown distribution comes from certain restricted classes of high dimensional distributions such as ising models and bayesian networks and gave efficient testers for the problem of testing such structured distributions over $[\ell]^n$. This paper makes further progress on these structured distribution testing problems.
}

\subsection{Improving Success of Learning Algorithms Using Distance Estimation}
Finally we give a link between efficient distance approximation algorithms and boosting the success probability of learning algorithms.  Specifically, let $\mathcal{D}$ be a family of distributions for which we have a learning algorithm $\mathcal{A}$ in $\dtv$ distance $\epsilon$ that succeeds with probability 3/4. Suppose there is also a distance approximation algorithm $\mathcal{B}$ for $\mathcal{D}$. We prescribe a method to  combine the two algorithms $\mathcal{A}$ and $\mathcal{B}$ to learn an unknown distribution from $\mathcal{D}$ with probability at least $(1-\delta)$. To the best of our knowledge, this connection has not been stated explicitly in the literature. The proof of the following theorem is given in \cref{sec:boost}.

\begin{restatable}{theorem}{boostmain}\label{thm:boost}
Let $\mathcal{D}$ be a family of distributions. Suppose there is a learning algorithm $\mathcal{A}$ which for any $P\in \mathcal{D}$ takes $m_{\mathcal{A}}(\epsilon)$ samples from $P$ 
and in time $t_{\mathcal{A}}(\epsilon)$ outputs a distribution $P_1$ such that $\dtv(P,P_1)\le \epsilon$ with probability at least 3/4. Suppose there is a distance approximation 
algorithm $\mathcal{B}$ for $\mathcal{D}$ that given any two completely specified distributions $P_1$ and $P_2$ estimates $\dtv(P_1,P_2)$ up to an additive error $\epsilon$ in 
$t_{\mathcal{B}}(\epsilon,\delta)$ time with probability at least $(1-\delta)$. Then there is an algorithm that uses $\mathcal{A}$ and $\mathcal{B}$ as subroutines, takes $O(m_{\mathcal{A}}
(\epsilon/4)\log {1\over \delta})$ samples from $P$, runs in $O(t_{\mathcal{A}}(\epsilon/4)\log {1\over \delta}+t_{\mathcal{B}}(\epsilon/4,{\delta\over 210000 \log^2 {2\over \delta}})\log^2 {1\over \delta})$ time and returns a distribution $\hat{P}$ such that $\dtv(P,\hat{P})\le \epsilon$ with probability at least $1-\delta$.  
\end{restatable}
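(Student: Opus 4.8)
The plan is to run $\mathcal{A}$ several times and then use $\mathcal{B}$ to perform a hypothesis‑selection step in the style of the Scheffé/Devroye--Lugosi tournament, but with an important twist: since $\mathcal{A}$ outputs \emph{distributions} rather than numbers, one cannot simply take a coordinatewise median of $O(\log(1/\delta))$ independent runs. One must instead produce a batch of candidate hypotheses, a strict majority of which are close to $P$, and then single out one good candidate. The further complication over the textbook tournament is that $\mathcal{B}$ does \emph{not} take samples from $P$ --- it only compares two completely specified distributions --- so the usual comparison of a candidate against an empirical Scheffé set drawn from $P$ is unavailable, and we are restricted to pairwise comparisons \emph{among the candidates themselves}.

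First I would run $\mathcal{A}$ with accuracy $\epsilon/4$ independently $k=\Theta(\log(1/\delta))$ times, each run using fresh samples from $P$, obtaining candidate distributions $P_1,\dots,P_k$. This uses $O(m_{\mathcal{A}}(\epsilon/4)\log(1/\delta))$ samples and $O(t_{\mathcal{A}}(\epsilon/4)\log(1/\delta))$ time. Call $P_i$ \emph{good} if $\dtv(P,P_i)\le \epsilon/4$. Since each run is good with probability at least $3/4$ and the runs are independent, a Chernoff bound shows that, for a suitable absolute constant in the choice of $k$, strictly more than $k/2$ of the $P_i$ are good except with probability at most $\delta/2$.

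Next, for every pair $i<j$ I would call $\mathcal{B}$ on the (completely specified) pair $(P_i,P_j)$ with accuracy a suitably small constant multiple of $\epsilon$ and failure probability $\delta'=\delta/(210000\log^{2}(2/\delta))$, recording the estimate $d_{ij}=d_{ji}$ (and setting $d_{ii}=0$); a union bound over the $\binom{k}{2}$ calls, together with $k=\Theta(\log(1/\delta))$, guarantees that except with probability at most $\delta/2$ every $d_{ij}$ lies within the chosen accuracy of $\dtv(P_i,P_j)$, and the total cost of this step is $O\!\big(t_{\mathcal{B}}(\epsilon/4,\delta/(210000\log^{2}(2/\delta)))\log^{2}(1/\delta)\big)$. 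The algorithm then outputs any candidate $P_{i^\star}$ that is \emph{central}, meaning $d_{i^\star j}\le\tau$ for strictly more than $k/2$ indices $j$, where $\tau$ is a fixed threshold of order $\epsilon$; if no central candidate exists (an event contained in the two failure events above) it outputs an arbitrary $P_i$. Conditioning on the two good events, which jointly hold with probability at least $1-\delta$: by the triangle inequality any two good candidates are within $\epsilon/2$, hence their estimate is at most $\tau$, so every good candidate is central and a central $P_{i^\star}$ exists; conversely, a central $P_{i^\star}$ is within $\tau$ of more than $k/2$ candidates (in estimate, hence within $\tau+O(\epsilon)$ in true distance), and since more than $k/2$ candidates are good, pigeonhole yields a good $j^\star$ with $\dtv(P_{i^\star},P_{j^\star})\le \tau+O(\epsilon)$, whence $\dtv(P,P_{i^\star})\le \epsilon/4+\tau+O(\epsilon)$, which is at most $\epsilon$ once $\tau$ and the accuracy fed to $\mathcal{B}$ are taken to be small enough constant multiples of $\epsilon$.

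The conceptual content --- recognizing that an \emph{all-pairs} tournament using only $\mathcal{B}$ (which never sees $P$) suffices, because the good candidates form a majority "clique" of small pairwise distance that can therefore be detected by a majority-vote criterion --- is the crux, but it is a familiar move. The genuinely delicate part is the bookkeeping: choosing the accuracy parameters of $\mathcal{A}$ and $\mathcal{B}$ and the threshold $\tau$ so that the three successive triangle-inequality losses sum to at most $\epsilon$, and splitting the overall failure budget $\delta$ between the Chernoff step and the union bound over $\binom{k}{2}=\Theta(\log^{2}(1/\delta))$ comparisons so that the per-call failure probability handed to $\mathcal{B}$ comes out exactly as in the statement.
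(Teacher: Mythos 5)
Your proposal is correct and matches the paper's own argument: the paper likewise runs $\mathcal{A}$ at accuracy $\epsilon/4$ for $R=O(\log\frac{1}{\delta})$ independent repetitions, uses $\mathcal{B}$ with per-call failure probability $\delta/(2R^2)$ to estimate all pairwise distances, and selects a candidate close (under the estimated distances, threshold of order $\epsilon$) to a majority of the others, concluding by Chernoff, a union bound, pigeonhole, and the triangle inequality. The only cosmetic difference is that the paper picks the candidate maximizing the count of near neighbors with a $2R/3$ quorum rather than your ``central with strict majority'' rule, which changes nothing in the analysis.
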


To achieve the above result we repeat $\mathcal{A}$ independently $R=O(\log {1\over \delta})$ times which guarantees at least $2R/3$ successful repetitions from Chernoff's bound except $\delta$ probability, which we condition on. Sucessful repetitions must produce distributions which are pairwise $2\epsilon$ close by triangle inequality. We approximate the pairwise distances between all pairs of repetitions up to an additive $\epsilon$ and then find out a repetition whose learnt distribution $\hat{P}$ has the most number of other repetitions within $3\epsilon$ distance. The later number must be at least $2R/3-1$, guaranteeing $\hat{P}$ must have a successful repetition within $3\epsilon$ distance. Thus $\hat{P}$ must be at most $4\epsilon$ close to $P$ from triangle inequality.

\ignore{\subsection{Organization}
So far we have summarized our main findings regarding distance approximation for structured high dimensional distributions. In Section 
2 we present our main algorithm which approximates distance between two distributions which can be evaluated as well as sampled. In Section 3 we apply our main algorithm to approximate the distance between two  high dimensional bayesian networks. Due to paucity of space we could not provide details for all other findings which could be found in the Appendix.
}

\subsection{Previous work}\label{sec:prior}

Prior work most related to our work is in the area of distribution
testing. The topic of distribution testing is rooted in statistical
hypothesis testing and goes back to Pearson's chi-squared test in
1900. In theoretical computers science, distribution testing research
is relatively new and focuses on designing hypothesis testers with
optimal sample complexity. Goldreich and Ron~\cite{GoldreichR11}
investigated uniformity testing (distinguishing whether an input
distribution $P$ is uniform over its support or $\eps$-far from
uniform in total variation distance) and designed a tester with sample
complexity $O(m/\eps^4)$ (where $m$ is the size of the sample
space). Paninski~\cite{Pan08} showed that $\Theta(\sqrt{m}/\eps^2)$ samples are
necessary for uniformity testing, and gave an optimal tester when
$\epsilon>m^{-1/4}$.  Batu et al.~\cite{BatuFRSW13} initiated the
investigation of identity (goodness-of-fit) testing and closeness
(two-sample) testing and gave testers with sample complexity
$\tilde{O}(\sqrt{m}/\eps^6)$ and $\tilde{O}(m^{2/3}
\textrm{poly}(1/\eps))$ respectively. Optimal bounds
for these testing problems were obtained in
Valiant and Valiant~\cite{Valiant:2014:AIP:2706700.2707449} ($\Theta(\sqrt{m}/\eps^2)$)
and Chan et al.~\cite{ChanDVV14}
($\Theta(\max(m^{2/3}\eps^{-4/3},\sqrt{m}\eps^{-2}))$)
respectively. Tolerant versions of these testing problems have very
different sample complexity.  In particular, Valiant and Valiant~\cite{ValiantV11,ValiantV10} showed that tolerant uniformity,
identity, and closeness testing with respect to the total variation
distance have a sample complexity of $\Theta(m/\log
m)$. Since the seminal papers of
Goldreich and Ron and Batu et al., distribution testing grew into a
very active research topic and a wide range of properties of
distributions have been studied under this paradigm. This research led
to sample-optimal testers for many distribution properties.  We refer
the reader to the surveys \cite{DBLP:journals/eccc/Canonne15,DBLP:journals/crossroads/Rubinfeld12} and references
therein for more details and results on the topic.

When the sample space is a
high-dimensional space (such as $\{0,1\}^n)$), the testers designed
for general distributions require exponential number of samples
($2^{\Omega(n)}$) if the sample space is $\{0,1\}^n$ for a constant
$\epsilon$). Thus structural assumptions are to be made to design
efficient ($\poly(n,1/\epsilon)$) and practical testers for many of
the testing problems. The study of testing high-dimensional
distributions with structural restrictions was initiated only very
recently. The work that is most closely related to our work appears
in~\cite{DBLP:journals/tit/DaskalakisD019,DBLP:conf/colt/CanonneDKS17,DBLP:conf/colt/DaskalakisP17,10.5555/3327546.3327616}
(these works also give good expositions to other prior work on this
topic). These papers consider distributions coming from graphical
models including Ising models and Bayes nets.
In Daskalakis et al.~\cite{DBLP:journals/tit/DaskalakisD019}, the authors consider distributions that are
drawn from an Ising model and show that identity testing and {\em
  independence testing} (testing whether an unknown distribution is
close to a product distribution) can be done with
$\poly(n,1/\epsilon)$ samples where $n$ is the number nodes in the
graph associated with the Ising model.
In Canonne et al.~\cite{DBLP:conf/colt/CanonneDKS17}
and Daskalakis et al.~\cite{DBLP:conf/colt/DaskalakisP17}, the authors consider identity
testing and closeness testing for distributions given by Bayes
networks of bounded in-degree. Specifically, they design algorithms
with sample complexity $\tilde{O}(2^{3(d+1)/4} n/\eps^2)$ that test
closeness of distributions over the same Bayes net with $n$ nodes and
in-degree $d$. They also show that $\Theta(\sqrt{n}/\eps^2)$ and
$\Theta(\max(\sqrt{n}/\eps^2, n^{3/4}/\eps))$ samples are necessary
and sufficient for identity testing and closeness testing respectively
of pairs of product distributions (Bayes net with empty
graph). Finally, in Acharya et al.\cite{10.5555/3327546.3327616}, the
authors investigate testing problems on {\em causal Bayesian networks}
as defined by Pearl~\cite{Pearl00} and design efficient
$(\poly(n,1/\eps))$ testing algorithms for certain identity and
closeness testing problems for them. All these papers consider
designing non-tolerant testers and leave open the problem of designing
efficient testers that are tolerant for high-dimensional distributions
which is the main focus in this paper.

Our main technical result builds on the work of Canonne and Rubinfeld~\cite{DBLP:conf/icalp/CanonneR14}.  They consider a {\em
  dual access model} for testing distributions. In this model, in
addition to independent samples, the testing algorithm has also access
to an evaluation oracle that gives probability of any item in the
sample space. They establish that having access to evaluation oracle
leads to testing algorithms with sample complexity independent of the
size of the sample space. Indeed, in order to design testing
algorithms, they give an algorithm to additively estimate the total
variation distance between two unknown distributions in the dual
access model.  Our distance estimation algorithm is a direct extension
of this algorithm.

Another access model considered in the literature
for which such domain independent results are obtained is the {\em
  conditional sampling model} introduced independently in
Chakraborty et al.~\cite{DBLP:journals/siamcomp/ChakrabortyFGM16} and Canonne et al.~\cite{DBLP:conf/soda/CanonneRS14}. In this model, the tester has access to a conditional sampling oracle that given a
subset $S$ of the sample space outputs a random sample from the
unknown distribution {\em conditioned on $S$}. The conditional sampling model lends itself to algorithms for testing uniformity and testing identity to a known distribution with sample complexity  $\tilde{O}(1/\eps^2)$. Building on Chakraborty et al.~\cite{DBLP:journals/siamcomp/ChakrabortyFGM16}, Chakraborty and Meel~\cite{CM19} proposed a tolerant testing algorithm with sample complexity independent of domain size for testing uniformity of a sampler that takes in a Boolean formula $\varphi$ as input and the sampler's output generates a distribution over the witnesses of $\varphi$. \ignore{We note that some of the results obtained in this work for dual access model can be extended to conditional sampling model as well.}


\section{Distance Approximation Algorithm}\label{sec:approx}

In this section, we prove \cref{thm:maininf} which underlies all the other results in this work. In fact, we show
the following theorem that is more detailed.

\begin{theorem}\label{thm:main}
Suppose we have sample access to distributions $P$ and $Q$ over a finite set. Also, suppose we can make calls to two circuits $\cC_P$ and $\cC_Q$ which implement $(\beta,\gamma)$-\eval approximators for $P$ and $Q$ respectively.   Let $T$ be the maximum running time for any call to $\cC_P$ or $\cC_Q$. 

Then for any $\epsilon, \delta>0$, $\dtv(P,Q)$ can be approximated up to an additive error ${2\gamma\over 
1-\gamma} + 3\beta+ \epsilon$ with probability at least $1-\delta$, using $O(\epsilon^{-2}\log \delta^{-1})$ samples from $P$ and $O(\epsilon^{-2}\log \delta^{-1} \cdot T)$ runtime.
\end{theorem}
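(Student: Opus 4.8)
The plan is to prove \cref{thm:main} via the classical total-variation identity
\[
\dtv(P,Q) \;=\; \E_{x\sim P}\!\left[\max\!\left(0,\,1-\tfrac{Q(x)}{P(x)}\right)\right],
\]
which is exactly what drives the Canonne--Rubinfeld estimator in the exact-oracle case $\beta=\gamma=0$. The algorithm is the obvious one: draw $m=\Theta(\epsilon^{-2}\log\delta^{-1})$ i.i.d.\ samples $x_1,\dots,x_m$ from $P$, query $\cC_P$ and $\cC_Q$ at each $x_i$, set $g(x_i)=\max(0,\,1-\cC_Q(x_i)/\cC_P(x_i))$ when $\cC_P(x_i)>0$ and $g(x_i)=0$ otherwise, and output the empirical mean $\widehat\gamma=\tfrac1m\sum_i g(x_i)$. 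Since $g$ takes values in $[0,1]$, Hoeffding's inequality gives $\bigl|\widehat\gamma-\E_{x\sim P}[g(x)]\bigr|\le\epsilon$ except with probability $\delta$, using $m$ samples from $P$, $2m$ oracle calls, and $O(mT)$ total time, which matches the claimed resource bounds. So the entire content of the theorem reduces to the deterministic estimate $\bigl|\E_{x\sim P}[g(x)]-\dtv(P,Q)\bigr|\le \tfrac{2\gamma}{1-\gamma}+3\beta$.

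To establish that estimate I would bring in the distributions $\widehat P,\widehat Q$ guaranteed by \cref{def:oracle}, so that $\dtv(P,\widehat P),\dtv(Q,\widehat Q)\le\beta$ and $\cC_P,\cC_Q$ are pointwise $(1\pm\gamma)$-multiplicative approximations of $\widehat P,\widehat Q$, and then compare $\E_{x\sim P}[g]$ to $\dtv(P,Q)$ in three hops. (i) Define $\phi(x)=\max(0,\,1-\widehat Q(x)/\widehat P(x))$ where $\widehat P(x)>0$ and $\phi(x)=0$ otherwise; applying the identity above to $\widehat P,\widehat Q$ gives $\E_{x\sim\widehat P}[\phi(x)]=\dtv(\widehat P,\widehat Q)$, and the triangle inequality for $\dtv$ gives $|\dtv(\widehat P,\widehat Q)-\dtv(P,Q)|\le2\beta$. (ii) To swap the sampling distribution from $\widehat P$ to $P$: since $\phi\in[0,1]$, writing $\phi=\tfrac12+\psi$ with $\|\psi\|_\infty\le\tfrac12$ and using $\sum_x(P(x)-\widehat P(x))=0$ yields $\bigl|\E_{x\sim P}[\phi]-\E_{x\sim\widehat P}[\phi]\bigr|\le\tfrac12\|P-\widehat P\|_1=\dtv(P,\widehat P)\le\beta$ (the $x$ with $\widehat P(x)=0$ contribute $0$ to both sides, by our convention). (iii) To replace $\phi$ by $g$, prove the pointwise claim $|g(x)-\phi(x)|\le\tfrac{2\gamma}{1-\gamma}$ for every $x$ with $P(x)>0$, which immediately gives $\bigl|\E_{x\sim P}[g]-\E_{x\sim P}[\phi]\bigr|\le\tfrac{2\gamma}{1-\gamma}$. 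Chaining (i)--(iii) produces the required bound, and combining with the Hoeffding step and one more triangle inequality finishes the theorem.

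The one step that requires genuine care — and the main obstacle — is the pointwise claim in (iii), since the ratio $\widehat Q(x)/\widehat P(x)$ may be arbitrarily large and a naive Lipschitz bound on $t\mapsto\max(0,1-t)$ is therefore useless. The clean way is a short case analysis on where $r:=\widehat Q(x)/\widehat P(x)$ and $s:=\cC_Q(x)/\cC_P(x)$ sit relative to $1$, using that $s=r\rho$ for some $\rho\in\bigl[\tfrac{1-\gamma}{1+\gamma},\tfrac{1+\gamma}{1-\gamma}\bigr]$: if $r,s\le1$ the difference is $r\,|\rho-1|\le\tfrac{2\gamma}{1-\gamma}$; if exactly one of $r,s$ exceeds $1$, then that one being barely above $1$ forces the other to exceed $\tfrac{1-\gamma}{1+\gamma}$, so the difference is at most $1-\tfrac{1-\gamma}{1+\gamma}=\tfrac{2\gamma}{1+\gamma}\le\tfrac{2\gamma}{1-\gamma}$; and if $r,s>1$ both sides are $0$. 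The degenerate case $\widehat P(x)=0$ is harmless: then $\cC_P(x)\le(1+\gamma)\widehat P(x)=0$, so $g(x)=0=\phi(x)$, and in any event such $x$ carry $P$-mass at most $\dtv(P,\widehat P)\le\beta$. All remaining ingredients — the total-variation identity, Hoeffding's inequality, and the triangle inequalities — are routine.
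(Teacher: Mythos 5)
Your proposal is correct and follows essentially the same route as the paper: the identical estimator $\max(0,1-\cC_Q(x)/\cC_P(x))$ averaged over samples from $P$, with the error split into $2\beta$ (triangle inequality between $(P,Q)$ and $(\hat P,\hat Q)$), $\beta$ (change of sampling measure using boundedness of the integrand), and $2\gamma/(1-\gamma)$ (your three ratio cases $r,s\lessgtr 1$ are exactly the paper's partition $S_1,S_2,S_3$), plus a Chernoff/Hoeffding step. The only differences are the order in which the measure swap and the function swap are performed, which does not change the error budget.
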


Note that the \eval approximators  in \cref{thm:main} must return rational numbers with bounded denominators as they are implemented by circuits with bounded running time. The exact model of computation for the circuits does not matter so much, so we omit its discussion. 

We now turn to the proof of \cref{thm:main}. As mentioned in the Introduction, if $\cC_P$ and $\cC_Q$ were $(0,0)$-\eval approximators, the result already appears in \cite{DBLP:conf/icalp/CanonneR14}. The proof below analyzes how having nonzero $\beta$ and $\gamma$ affects the error bound.
\ignore{
We will consider high dimensional distributions over $\Sigma^n$ for some alphabet $\Sigma$. Our goal is to design algorithms that is efficient in both time and sample complexity that  has polynomial dependency on $n$. In this section, we do not explicitly write the efficiency parameters as the exact dependency will be determined by the class of probability distributions we are interested in. 

For a distribution $P$ and an element $i$ in the sample space, $P_i$ denotes $i$'s probability.  

\begin{definition}\label{def:oracle}
Let $P$ be a distribution over $\Sigma^n$. A function $E_P:\Sigma^n \rightarrow [0,1]$ is a $(\beta,\gamma)$ eval approximator for $P$ if there exists a distribution $\hat{P}$ over $\Sigma^n$ such that
\begin{itemize}
\item $\dtv(P,\hat{P})\le \beta$
\item $\forall i \in \Sigma^n$, $(1-\gamma)\hat{P}_i \le E_P(i) \le (1+\gamma)\hat{P}_i$
\end{itemize}
\end{definition}

For two distributions having eval approximator as above, we give Algorithm~\ref{algo:main} for estimating their total variation distance efficiently.

\begin{theorem}\label{thm:main}
Let $P$ and $Q$ be two distributions for which $E_P$ and $E_Q$ are the two $(\beta,\gamma)$ eval approximators respectively.  Then for any $\epsilon>0$, $\dtv(P,Q)$ can be approximated up to an additive error ${2\gamma\over 
1-\gamma} + 3\beta+ \epsilon$ with probability $\geq (1-\delta)$, using $m$ samples from $P$ and $m$ calls to $E_P$ and $E_Q$ where $m=O(\log {1\over \delta}/\epsilon^2)$.
\end{theorem}
}

\begin{figure}
\begin{algorithm}[H]
\SetKwInOut{Input}{Input}
\SetKwInOut{Output}{Output}
\Input{Sample access to distribution $P$; oracle access to circuits $\cC_P$ and $\cC_Q$.}
\Output{Approximate value of $\dtv(P,Q)$}
\For{$i=1, \dots, t=O(\eps^{-2}\log {\delta^{-1}})$}{
Draw a sample $x$ from $P$\;
$\alpha \gets \cC_P(x)$\;
$\beta \gets \cC_Q(x)$\;
$c_i \gets 1_{\alpha>\beta}\left(1-\frac{\beta}{\alpha}\right)$;
}
\Return{${1\over t}\sum_{i=1}^t c_i$}
\caption{Distance approximation}
\label{algo:main}
\end{algorithm}
\end{figure}
\begin{proof}
We invoke \cref{algo:main}. Notice that the algorithm only requires sample access to one of the two distributions but to both of the \eval approximators. Let $\hat{P}$ be the distribution $\beta$-close to $P$ which is approximated by the output of $\cC_P$; similarly define $\hat{Q}$. 

We have $|\dtv(P,Q)-\dtv(\hat{P},\hat{Q})|\le \dtv(P,\hat{P})+\dtv(Q,\hat{Q}) \le 2\beta$ from the triangle inequality. Hence, it is sufficient to approximate $\dtv(\hat{P},\hat{Q})$ additively up to  ${2\gamma\over 
1-\gamma} + \beta+ \epsilon$.  
{\allowdisplaybreaks
\begin{align*}
\dtv(\hat{P},\hat{Q})&={1\over 2}\sum_x |\hat{P}(x)-\hat{Q}(x)|\\
&=\sum_{x:\hat{P}(x)>\hat{Q}(x)} (\hat{P}(x)-\hat{Q}(x))\\
&=\sum_{x:\hat{P}(x)>\hat{Q}(x)} \left(1-\frac{\hat{Q}(x)}{\hat{P}(x)}\right)\hat{P}(x) &&\tag{Since $\hat{P}(x)> 0$}\\
&=\E_{x \sim \hat{P}}\left[1_{\hat{P}(x)>\hat{Q}(x)}\left(1-\frac{\hat{Q}(x)}{\hat{P}(x)}\right)\right]\\
\end{align*}}

From the above, if we have complete access (both evaluation and sample) to $\hat{P}$ and $\hat{Q}$, then we can estimate the distance with $O({1\over \epsilon^2}\log {1\over \delta})$ samples and evaluations. However as we have only approximate evaluations of $\hat{P}$ and $\hat{Q}$ and samples from the original distribution $P$, we need some additional arguments. Let $E_P$ and $E_Q$ be the functions implemented by the circuits $\cC_P$ and $\cC_Q$ respectively.

\begin{align*}
\dtv(\hat{P},\hat{Q})&=\sum_{x} 1_{\hat{P}(x)>\hat{Q}(x)}\left(1-\frac{\hat{Q}(x)}{\hat{P}(x)}\right)\hat{P}(x)\\
&= \underbrace{\sum_{x} 1_{E_P(x)>E_Q(x)}\left(1-\frac{E_Q(x)}{E_P(x)}\right)\hat{P}(x)}_A+ \\ 
& \qquad\underbrace{\sum_{x} \left[1_{\hat{P}(x)>\hat{Q}(x)}\left(1-\frac{\hat{Q}(x)}{\hat{P}(x)}\right)-1_{E_P(x)>E_Q(x)}\left(1-\frac{E_Q(x)}{E_P(x)}\right)\right]\hat{P}(x)}_B\\
\end{align*}

We start with an upper bound for the absolute value of the error term $B$. We consider the partition of sample space into $S_1,S_2$ and $S_3$, where $S_1=\{x:1_{\hat{P}(x)>\hat{Q}(x)}=1_{E_P(x)>E_Q(x)}\}$, $S_2=\{x:1_{\hat{P}(x)>\hat{Q}(x)}>1_{E_P(x)>E_Q(x)}\}$ and $S_3=\{x:1_{\hat{P}(x)>\hat{Q}(x)}<1_{E_P(x)>E_Q(x)}\}$.
{\allowdisplaybreaks
\begin{align*}
|B|&=\left|\sum_{x} \left[1_{\hat{P}(x)>\hat{Q}(x)}\left(1-\frac{\hat{Q}(x)}{\hat{P}(x)}\right)-1_{E_P(x)>E_Q(x)}\left(1-\frac{E_Q(x)}{E_P(x)}\right)\right]\hat{P}(x)\right|\\
&\le \sum_{x} \left|\left[1_{\hat{P}(x)>\hat{Q}(x)}\left(1-\frac{\hat{Q}(x)}{\hat{P}(x)}\right)-1_{E_P(x)>E_Q(x)}\left(1-\frac{E_Q(x)}{E_P(x)}\right)\right]\hat{P}(x)\right|\\
&= \sum_{x \in S_1} 1_{\hat{P}(x)>\hat{Q}(x)} \left|\frac{\hat{Q}(x)}{\hat{P}(x)}-\frac{E_Q(x)}{E_P(x)}\right| \hat{P}(x)+ \sum_{x \in S_2} 1_{\hat{P}(x)>\hat{Q}(x)} \left(1-\frac{\hat{Q}(x)}{\hat{P}(x)}\right)\hat{P}(x)  +\\
&\qquad\qquad\sum_{x \in S_3} 1_{E_P(x)>E_Q(x)} \left(1-\frac{E_Q(x)}{E_P(x)}\right)\hat{P}(x) \\
\end{align*}}
For $x$ in $S_1$ with $\hat{P}(x)>\hat{Q}(x)$, ${(1-\gamma)\over(1+\gamma)}{\hat{Q}(x)\over \hat{P}(x)} \le {E_Q(x)\over E_P(x)} \le {(1+\gamma)\over(1-\gamma)}{\hat{Q}(x)\over \hat{P}(x)}$ so that
$
\left|{\hat{Q}(x)\over \hat{P}(x)} - {E_Q(x)\over E_P(x)}\right| \le {2\gamma\over 1-\gamma} {\hat{Q}(x)\over \hat{P}(x)}<{2\gamma\over 1-\gamma}.$
For $x$ in $S_2$, $\hat{P}(x)>\hat{Q}(x)$ implies $E_P(x)\leq E_Q(x)$ and hence, $(1-\gamma) \hat{P}(x) \le E_P(x) \le E_Q(x) \le (1+\gamma)\hat{Q}(x)$ so that $\hat{Q}(x)/\hat{P}(x)\ge \frac{1-\gamma}{1+\gamma}$. For $x$ in $S_3$, $E_P(x)>E_Q(x)$ implies $\hat{P}(x) \leq \hat{Q}(x)$, and hence, $\frac{E_Q(x)}{E_P(x)} \ge \frac{(1-\gamma)\hat{Q}(x)}{(1+\gamma)\hat{P}(x)} \ge \frac{1-\gamma}{1+\gamma}$. Therefore:
\begin{align*}
|B| &\le \sum_{x\in S_1}{2\gamma\over 1-\gamma} \hat{P}(x) + \sum_{x\in S_2} {2\gamma\over 1+\gamma} \hat{P}(x) + \sum_{x\in S_3} {2\gamma\over 1+\gamma} \hat{P}(x)\\ 
&\le {2\gamma\over 1-\gamma}
\end{align*}

Now consider the term $A$:
\begin{align*}
A&=\sum_{x} 1_{E_P(x)>E_Q(x)}\left(1-\frac{E_Q(x)}{E_P(x)}\right)\hat{P}(x)\\
&= \underbrace{\sum_{x} 1_{E_P(x)>E_Q(x)}\left(1-\frac{E_Q(x)}{E_P(x)}\right){P}(x)}_C + \sum_{x} 1_{E_P(x)>E_Q(x)}\left(1-\frac{E_Q(x)}{E_P(x)}\right)(\hat{P}(x)-P(x)).
\end{align*}
Note that: $\left|\sum_{x} 1_{E_P(x)>E_Q(x)}\left(1-\frac{E_Q(x)}{E_P(x)}\right)(\hat{P}(x)-P(x))\right|\leq \sum_x |\hat{P}(x)-P(x)| \leq \beta$. So, $|\dtv(\hat{P},\hat{Q}) - C | \le {2\gamma\over 1-\gamma}+\beta$. 
We can rewrite $C$ as $\argEx{x\sim P} {1_{E_P(x)>E_Q(x)}\left(1-{ E_Q(x) \over E_P(x)}\right)}$.
Since $1_{E_P(x)>E_Q(x)}\left(1-{ E_Q(x) \over E_P(x)}\right)$ lies in $[0,1]$, by the Chernoff bound, we can estimate the expectation up to $\epsilon$ additive error with probability at least $(1-\delta)$ by averaging $O({1\over \epsilon^2}\log {1\over \delta})$ samples from $P$. 
\end{proof}

\cref{thm:main} can be extended to the case that $P$ and $Q$ are distributions over $\R^n$ with infinite support. We change \cref{def:oracle} so that $E_P(x)$ is a $(1\pm \gamma)$-approximation of  $\hat{f}(x)$ where $\hat{f}(x)$ is the probability density function for $\hat{P}$. Then, \cref{thm:main} and \cref{algo:main} continue to hold as stated. In the proof, we merely have to replace the summations with the appropriate integrals. 


\section{Bayesian networks}\label{sec:bn}
First we apply our distance estimation algorithm for tolerant testing of high dimensional distributions coming from bounded in-degree Bayesian networks. Bayesian networks defined below are popular probabilistic graphical models for describing high-dimensional distributions succinctly. 

\begin{definition}
A  {\em Bayesian network} $P$ on a directed acyclic graph $G$ over the vertex set $[n]$ is a joint distribution of the $n$ random variables $(X_1,X_2,\dots,X_n)$ over the sample space $\Sigma^n$ such that for every $i \in [n]$ $X_i$ is conditionally independent of $X_{\emph{non-descendants}(i)}$ given $X_{\emph{parents}(i)}$, where for $S\subseteq [n]$, $X_S$ is the joint distribution of $(X_i: i \in S)$, and parents and non-descendants are defined from $G$. 

$P$ factorizes as follows:
\begin{equation}\label{eqn:bnsecbn}
P(x) \coloneqq \Pr_{X \sim P}[X=x]= \prod_{i=1}^n \Pr_{X\sim P}[X_i = x_i \mid \forall j \in {\rm parents}(i), X_j = x_j] \qquad \text{for all } x \in \Sigma^n
\end{equation}

Hence a Bayesian network can be completely described by a set of conditional distributions for every variable $X_i$, for every fixing of its parents $X_{\emph{parents}(i)}$. 
\end{definition}
\ignore{
For a Bayesian network distribution $P$, we denote the underlying graph by $G_P$. 
Daskalakis and Pan~\cite{DBLP:conf/colt/DaskalakisP17} gave a sample- and time-efficient algorithm for testing $P=Q$ versus $\dtv(P,Q)\ge \epsilon$ where $P$ and $Q$ are two unknown distributions coming from a common Bayesian network with a small indegree $(\le d)$ on $n$ variables with alphabet $\Sigma$ whose underlying graph is known. Their algorithm takes $m=\tilde{O}(|\Sigma|^{3(d+1)/4}n/\epsilon^2)$ samples and $O(mn)$ runtime and succeeds with probability 2/3. When the graph is common but unknown but its maximum indegree $d$ is known they gave an alternate algorithm with the same sample complexity and $O(mn^d)$ runtime. In this section we give a sample- and time-efficient algorithm that takes two Bayesian networks on potentially different but known pairs of graphs and distinguishes the cases $\dtv(P,Q)< \epsilon/2$ versus $\dtv(P,Q)\ge \epsilon$. Thus our tester has tolerance and it removes the assumption that the two underlying graphs need to be the same. More generally our tester is able to handle the case when $Q$ is any distribution with an efficient oracle as defined in Definition~\ref{def:oracle}.
}

To construct an \eval approximator for a Bayesian network, we first learn it using an efficient algorithm. Such a learning algorithm was claimed in the appendix of \cite{DBLP:conf/colt/CanonneDKS17} but the analysis there appears to be incomplete \cite{CanComm}. We 
show the following proper learning algorithm for Bayesian networks that uses the optimal sample complexity.
\ignore{This approach is known as testing-by-learning approach in distribution testing. For unstructured distributions on sample space of size $n$ such an approach becomes an overkill since identity testing for them can be performed with $O(\sqrt{n})$~\cite{Valiant:2014:AIP:2706700.2707449} samples while learning would require $\Omega(n)$ samples. For Bayesian networks identity testing has a lower bound which scales as $\Omega(n)$~\cite{DBLP:journals/tit/DaskalakisD019} and the sample complexity of a learning algorithm for them scales as $O(n)$~\cite{DBLP:conf/colt/CanonneDKS17}.}

\begin{theorem}\label{thm:bnlearning}
There is an algorithm that given a parameter $\epsilon>0$ and sample access to an unknown Bayesian network distribution $P$ on a known directed acyclic graph $G$ of in-degree at most $d$, returns a Bayesian network $\hat{P}$ on $G$ such that $\dtv(P,\hat{P}) \leq \epsilon$ with probability $\geq 9/10$. Letting $\Sigma$ denote the range of each variable $X_i$, the algorithm takes $m=O(|\Sigma|^{d+1}n\log(|\Sigma|^{d+1} n)\epsilon^{-2})$ samples and runs in $O(|\Sigma|^{d+1} mn)$ time.  
\end{theorem}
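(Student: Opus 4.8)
The plan is to learn $P$ by its empirical conditional distributions: fix a topological order of $G$, draw $m$ i.i.d.\ samples from $P$, and for every node $j\in[n]$ and parent-configuration $a\in\Sigma^{\deg(j)}$ let $m_a$ be the number of samples whose coordinates on ${\rm parents}(j)$ equal $a$. Set $\hat P(X_j\mid a)$ to the empirical conditional distribution of $X_j$ over those $m_a$ samples, \emph{smoothed} so as to remain well-defined and full-support even when $m_a$ is small --- for instance by Laplace's rule $\hat P(X_j=\sigma\mid a)=(c_\sigma+1)/(m_a+|\Sigma|)$, where $c_\sigma$ counts the samples in configuration $a$ having $X_j=\sigma$. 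Output the Bayesian network $\hat P$ on $G$ determined by these conditionals (equivalently, by \eqref{eqn:bnsecbn}). This is manifestly a proper learner; processing the $m$ samples over the $n$ nodes costs $O(mn)$ time and normalizing the at most $n|\Sigma|^{d+1}$ conditional-probability entries costs $O(n|\Sigma|^{d+1})$, matching the claimed $O(|\Sigma|^{d+1}mn)$ bound.

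For correctness I would first reduce $\dtv(P,\hat P)$ to the per-conditional errors. Let $P^{(j)}$ be the Bayesian network on $G$ using $P$'s conditionals for the first $j$ nodes of the topological order and $\hat P$'s for the remaining ones, so $P^{(n)}=P$ and $P^{(0)}=\hat P$. Consecutive hybrids differ only in node $j$'s conditional, and since nodes $1,\dots,j-1$ use $P$'s conditionals in both, their joint marginal --- hence the marginal on ${\rm parents}(j)$ --- agrees with that of $P$; by data processing and the conditioning identity for total variation, $\dtv(P^{(j)},P^{(j-1)})\le\E_a[\dtv(P_{X_j\mid a},\hat P_{X_j\mid a})]$, where $a$ is drawn from $P$'s marginal on ${\rm parents}(j)$. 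Summing over $j$ gives $\dtv(P,\hat P)\le\sum_j\E_a[\dtv(P_{X_j\mid a},\hat P_{X_j\mid a})]$. Here a TV-only analysis is lossy: bounding each term by $\E[\dtv]=O(\sqrt{|\Sigma|/m_a})$ and using Cauchy--Schwarz ($\sum_a\sqrt{p_a}\le|\Sigma|^{\deg(j)/2}$, writing $p_a:=\Pr[X_{{\rm parents}(j)}=a]$ under $P$) only yields $m=\tilde O(n^2|\Sigma|^{d+1}\epsilon^{-2})$, since adding $n$ square-root terms and inverting costs a spurious factor of $n$. To recover the linear-in-$n$ bound, I would run the same decomposition with the KL divergence, whose chain rule over the factorization is an \emph{exact} identity, $\KL(P\|\hat P)=\sum_j\E_a[\KL(P_{X_j\mid a}\|\hat P_{X_j\mid a})]$ (finite because the smoothed $\hat P$ has full support), and pass back via Pinsker's inequality $\dtv(P,\hat P)\le\sqrt{\KL(P\|\hat P)/2}$ at the end.

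Now take expectation over the samples. Fix $j$ and $a$, write $p_a:=\Pr[X_{{\rm parents}(j)}=a]$, and note $m_a\sim\mathrm{Bin}(m,p_a)$. Conditioned on $m_a=\ell$, the $X_j$-values of the samples in configuration $a$ are $\ell$ i.i.d.\ draws from $P_{X_j\mid a}$, so $\E[\KL(P_{X_j\mid a}\|\hat P_{X_j\mid a})\mid m_a=\ell]$ is the expected KL risk of the smoothed empirical estimator from $\ell$ samples, which is $O(|\Sigma|\log m/(\ell+1))$ uniformly over $\ell\ge 0$ (smoothing is exactly what keeps this finite at $\ell=0$). Using the exact identity $\E[1/(m_a+1)]=(1-(1-p_a)^{m+1})/((m+1)p_a)\le 1/((m+1)p_a)$, the weight $p_a$ cancels, and
\[
\E[\KL(P\|\hat P)]\;\le\;\sum_j\sum_a p_a\cdot O\!\left(\frac{|\Sigma|\log m}{m\,p_a}\right)\;=\;O\!\left(\frac{|\Sigma|\log m}{m}\right)\sum_j|\Sigma|^{\deg(j)}\;=\;O\!\left(\frac{n|\Sigma|^{d+1}\log m}{m}\right).
\]
Choosing $m=O(n|\Sigma|^{d+1}\log(n|\Sigma|^{d+1})\,\epsilon^{-2})$ makes this at most $\epsilon^2/20$, so by Markov's inequality $\KL(P\|\hat P)\le\epsilon^2/2$ with probability $\ge 9/10$, whence $\dtv(P,\hat P)\le\epsilon/2\le\epsilon$ by Pinsker; rescaling $\epsilon$ by a constant completes the proof.

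The crux --- where a careless argument either loses a factor of $n$ or produces a vacuous bound --- is the treatment of rare parent-configurations: when $m_a$ is zero or tiny the empirical conditional is unreliable, an \emph{un}smoothed estimator makes $\KL(P_{X_j\mid a}\|\hat P_{X_j\mid a})$ infinite, and a purely TV-based telescoping cannot exploit that the conditional errors within a single node are weighted by probabilities summing to $1$. Smoothing together with the exact KL chain rule handles both at once. An alternative that avoids smoothing altogether is to run the identical argument with the squared Hellinger distance in place of KL --- it is bounded by $1$ regardless of $m_a$, admits a sub-additive decomposition over the factorization (up to a universal constant), has empirical risk $O(|\Sigma|/(\ell+1))$, and controls total variation via $\dtv\le\sqrt2\,\dhel$ --- giving the same sample bound up to constants. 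The remaining steps are routine: verifying the data-processing/telescoping details, the standard estimator risk bound, and bookkeeping the logarithmic factor (which is the source of the $\log(|\Sigma|^{d+1}n)$ in the statement).
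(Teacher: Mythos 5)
Your proof is correct, and its skeleton coincides with the paper's: both decompose $\KL(P,\hat P)$ node-by-node via the exact chain rule for two Bayes nets on the same DAG (\cref{thm:klsub}), bound each conditional's expected KL risk by the add-one (Laplace) estimator guarantee (\cref{thm:exKLlearn}), and finish with Markov plus Pinsker. Where you genuinely diverge is in the treatment of rare parent configurations, and your route is arguably cleaner. The paper's \cref{algo:bn} thresholds at $N_{i,a}\ge t$, falls back to the uniform conditional below threshold, splits tuples into heavy and light, and conditions on a Chernoff-plus-union-bound event before applying the risk bound (\cref{thm:bnlearningapp}); it also proves the binary case first and reduces general $\Sigma$ to binary by bit-encoding, inflating the in-degree to $(d+1)\log|\Sigma|$. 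You instead smooth every conditional, condition on $m_a=\ell$ (correctly noting the $X_j$-values in those samples are i.i.d.\ from $P_{X_j\mid a}$), and integrate out the binomial count via $\E[1/(m_a+1)]\le 1/((m+1)p_a)$, so the weight $p_a$ cancels exactly --- no heavy/light case analysis, no high-probability conditioning event, and the general alphabet is handled directly. One small bookkeeping point: the $\log m$ you insert into the estimator risk is spurious --- the add-one estimator satisfies $\E[\KL]\le(|\Sigma|-1)/(\ell+1)$ uniformly in $\ell\ge 0$ (at $\ell=0$ it outputs uniform and $\KL\le\log|\Sigma|$) --- and if it were real, your stated $m$ would pick up an extra $\log(1/\eps)$ to beat $\eps^2/20$. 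Using the log-free bound, your argument actually yields $m=O(n|\Sigma|^{d+1}\eps^{-2})$, i.e.\ it shows the logarithmic factor in the theorem's sample complexity (an artifact of the paper's Chernoff conditioning) is unnecessary.
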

\ignore{
\begin{theorem}[\cite{DBLP:journals/tit/DaskalakisD019}]\label{thm:bntestinglb}
Given sample access to two unknown Bayesian network distributions $P$ and $Q$ over $\{0,1\}^n$ so that $G_P=G_Q$ and $G_P$ is known,  testing $P=Q$ versus $\dtv(P,Q)\ge \epsilon$ with probability $\geq 2/3$ requires $\Omega(n/\epsilon^2)$ samples. 
\end{theorem}
}
\ignore{ Our algorithm has optimal dependence on $n$ and $\epsilon$ from Theorem~\ref{thm:bntestinglb}
This directly gives us a distance estimation algorithm for Bayesian networks. 
\begin{theorem}
Suppose $G_1$ and $G_2$ are two known DAGs on $n$ vertices with in-degree at most $d$. Let $P_1$ and $P_2$ be two unknown Bayesian networks over $\Sigma^n$ on $G_1$ and $G_2$ respectively.
Then, there is an algorithm  that takes $m=\tilde{O}(|\Sigma|^{d+1} n \eps^{-2})$ samples, $O(|\Sigma|^{d+1} mn)$ time and returns a number $e$ such that $|e-\dtv(P_1,P_2)|\le \epsilon$ with probability 2/3.
\end{theorem}
}
This directly gives us a distance estimation algorithm for Bayesian networks. 

\bnmain*
\begin{proof}
Given samples from $P_1$ and $P_2$ we first learn them as $\hat{P}_1$ and $\hat{P}_2$ using \cref{thm:bnlearning} in $\dtv$ distance $\epsilon/4$. This step costs $m=O(|\Sigma|^{d+1}n\log(|\Sigma|^{d+1} n)\epsilon^{-2})$ samples and $O(|\Sigma|^{d+1}mn)$ time and succeeds with probability 4/5. $\hat{P}_1$ and $\hat{P}_2$ gives efficient $(\epsilon/4,0)$-\eval approximators from~\cref{eqn:bnsecbn}. It follows from~\cref{thm:main} that we can estimate $\dtv(P_1,P_2)$ up to an $\epsilon$ additive error using $O(\epsilon^{-2})$ additional samples from $P_1$ except for 1/5 probability.
\end{proof}

Our distance estimation algorithm has optimal dependence on $n$ and $\epsilon$ from the following non-tolerant identity testing lower bound of Daskalakis et al.
\begin{theorem}[\cite{DBLP:journals/tit/DaskalakisD019}]\label{thm:bntestinglb}
Given sample access to two unknown Bayesian network distributions $P_1$ and $P_2$ over $\{0,1\}^n$ on a common known graph,  testing $P=Q$ versus $\dtv(P,Q)\ge \epsilon$ with probability $\geq 2/3$ requires $\Omega(n\epsilon^{-2})$ samples. 
\end{theorem}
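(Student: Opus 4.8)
My first move would be to look for a reduction from a classical lower bound rather than construct a hard instance from scratch. It is enough to realize \emph{arbitrary} distributions over a domain of size $N$ as Bayesian networks on \emph{some} fixed DAG on $n$ vertices, with $N$ as large as possible. Take $k=\lceil 2\log_2 n\rceil$ and let $G$ be the complete DAG on vertices $\{1,\dots,k\}$ together with $n-k$ isolated vertices; since a complete DAG imposes no conditional-independence constraint, any joint distribution $D$ on $\{0,1\}^k$ is a Bayesian network on $G$. The map $D\mapsto P_D$, where $P_D$ puts $D$ on the first $k$ coordinates and a point mass on the remaining $n-k$, is total-variation preserving, $\dtv(P_D,P_{D'})=\dtv(D,D')$, and turns a sample of $D$ into a sample of $P_D$. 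Taking $D_0$ uniform on $\{0,1\}^k$ --- a domain of size $N=\Theta(n^2)$ --- Paninski's lower bound~\cite{Pan08} says distinguishing $D=D_0$ from $\dtv(D,D_0)\ge\epsilon$ with probability $2/3$ requires $\Omega(\sqrt N/\epsilon^2)=\Omega(n/\epsilon^2)$ samples, and the reduction transfers this to testing identity of Bayesian networks on the known graph $G$. There is no hard step here: one only checks that $G$ realizes all distributions on $\{0,1\}^k$ and that padding with deterministic coordinates preserves total variation.

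An alternative, closer to the graphical-models literature (and, I believe, the route behind the bound of \cite{DBLP:journals/tit/DaskalakisD019}), is the standard $\chi^2$ / mixture (Ingster-type) method applied to an Ising-type instance. Let $G$ be the complete DAG on $[n]$, let $P$ be uniform on $\{0,1\}^n$, and let the hard family be $\{Q_M\}$ over perfect matchings $M$ of $[n]$ with $Q_M(x)\propto\exp\!\big(\theta\sum_{(i,j)\in M}(2x_i-1)(2x_j-1)\big)$; each $Q_M$ is a Bayesian network on $G$. Since $Q_M$ is a product of $n/2$ identically distributed correlated pairs, $\dtv(Q_M,P)$ is the same for all $M$ and equals $\Theta(\sqrt n\,\theta)$, so one picks $\theta\asymp\epsilon/\sqrt n$ to make it exactly $\epsilon$. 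With $t=\tanh\theta$, the Fourier identity $Q_M(x)/P(x)=\prod_{(i,j)\in M}\big(1+t(2x_i-1)(2x_j-1)\big)$ yields $\E_{x\sim P}\big[\tfrac{Q_M(x)}{P(x)}\tfrac{Q_{M'}(x)}{P(x)}\big]=\prod_C\big(1+t^{2\ell(C)}\big)$, the product over the alternating cycles $C$ of the $2$-regular multigraph $M\cup M'$ (a shared edge being a $2$-cycle, $\ell=1$). The mixture $\chi^2$ over $m$ samples is $\E_{M,M'}\big[(\prod_C(1+t^{2\ell(C)}))^m\big]-1$; two independent random perfect matchings share only $O(1)$ edges (the count is essentially $\mathrm{Poisson}(1/2)$) and each longer cycle contributes a factor $1+O(t^4)$, so this is dominated by $\E_S[(1+t^2)^{Sm}]=\exp\!\big(\Theta((1+t^2)^m-1)\big)=\exp(\Theta(mt^2))=\exp(\Theta(m\epsilon^2/n))$ whenever $mt^2=O(1)$. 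This is below a fixed constant once $m\le c\,n/\epsilon^2$, so by Cauchy--Schwarz the $m$-fold mixture is within $\dtv<1/3$ of $P^{\otimes m}$, and no tester can be correct with probability $2/3$ on $P$ and on every $Q_M$ with $o(n/\epsilon^2)$ samples.

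The reduction-based proof has essentially no obstacle. For the $\chi^2$ route the real work is the combinatorics: one must show that the contribution of the long alternating cycles and the upper tail of the number of short cycles/shared edges in $M\cup M'$ do not spoil the $\exp(\Theta(mt^2))$ estimate --- i.e., that $\E_{M,M'}\big[\prod_C(1+t^{2\ell(C)})^m\big]$ is genuinely controlled by the $\ell(C)=1$ terms even after raising to the $m$-th power --- which needs the small-overlap and few-short-cycles properties of a union of two random perfect matchings quantified carefully. The remaining ingredients (the Fourier identity, the calibration of $\theta$, and the final Le Cam step) are routine.
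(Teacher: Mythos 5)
The paper never proves this statement: it is quoted from \cite{DBLP:journals/tit/DaskalakisD019} and used as a black box, so the comparison here is with the cited source rather than with an in-paper argument. Taking the statement literally, your first reduction is correct and complete in outline: a complete DAG on $k=\lceil 2\log_2 n\rceil$ vertices realizes every distribution on $\{0,1\}^k$, padding with deterministic coordinates preserves both total variation and samplability, and Paninski's $\Omega(\sqrt{2^k}\,\epsilon^{-2})=\Omega(n\epsilon^{-2})$ bound for uniformity testing \cite{Pan08} transfers to closeness testing of two unknowns because a closeness tester can be fed self-generated uniform samples. Your second, Ingster/$\chi^2$ route over random perfect matchings is essentially the style of argument used in the cited work, and the key computations (the Fourier identity, the cycle decomposition of $M\cup M'$, calibration $t\asymp\epsilon/\sqrt n$) are right; note that the long alternating cycles need no delicate combinatorics at all, since there are at most $n/2$ cycles and $(1+t^4)^{mn/2}\le e^{O(\epsilon^2)}$ when $m\le c\,n/\epsilon^2$, so only the shared-edge count matters, and its moment generating function at a small constant is bounded (it is asymptotically Poisson with mean $1/2$).

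The substantive caveat is the in-degree of your common known graph: $\Theta(\log n)$ in the first construction and $n-1$ in the second (the $Q_M$ for varying $M$ are only simultaneously Bayesian networks on the complete DAG). The statement as transcribed imposes no degree bound, so you do prove it as written; but in this paper the theorem is invoked to certify that the $\tilde{O}(|\Sigma|^{d+1} n \epsilon^{-2})$ upper bound of \cref{thm:bn-main} has optimal dependence on $n$ and $\epsilon$ for networks of bounded in-degree $d$, and on your hard instances that upper bound degrades to $\tilde{O}(\mathrm{poly}(n)\,\epsilon^{-2})$, so no matching between upper and lower bound actually occurs. The result the citation is relied upon for holds already for in-degree-$1$ known structures, and the known proofs there use a \emph{fixed} matching or tree with rare (``unbalanced'') parent configurations---precisely the phenomenon behind the paper's remark that the improved bounds of \cite{DBLP:conf/colt/CanonneDKS17} require a balancedness assumption. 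To serve as a drop-in replacement for the citation in context, you would need such a bounded-degree hard family; as it stands, your argument establishes a strictly weaker statement than the one the paper is using.
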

It remains to prove \cref{thm:bnlearning}.

\subsection{Learning Bayesian networks}\label{sec:app-bn}

\ignore{We design a learning algorithm for Bayesian networks on a known DAG $G$ that uses $\tilde{O}(n \epsilon^{-2} |\Sigma|^{d+1})$ samples where $d$ is the maximum in-degree. We note 
that in \cite{DBLP:conf/colt/CanonneDKS17}, the authors give a sketch of a learning algorithm that claimed to get the same sample complexity bound. However their analysis is incomplete~\cite{CanonnePersonal}.}

In this section, we prove a strengthened version of \cref{thm:bnlearning} that holds for any desired error probability $\delta$.
\begin{theorem}\label{thm:constLearnBN}
There is an algorithm that given parameters $\epsilon, \delta>0$ and sample access to an unknown Bayesian network distribution $P$ on a known directed acyclic graph $G$ of in-degree at most $d$, returns a Bayesian network $Q$ on $G$ such that $\dtv(P,Q) \leq \epsilon$ with probability $\geq (1-\delta)$. Letting $\Sigma$ denote the alphabet for each variable $X_i$, the algorithm takes $m=O(|\Sigma|^{d+1}n\log(|\Sigma|^{d+1} n)\epsilon^{-2}\log {1\over \delta})$ samples and runs in $O(|\Sigma|^{d+1} mn)$ time.  
\end{theorem}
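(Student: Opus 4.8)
The plan is to reduce learning the whole network in total variation to learning the $\le n|\Sigma|^d$ conditional distributions $P_{X_i\mid x_{\mathrm{pa}(i)}}$ separately, combining the per-conditional errors quadratically rather than linearly (a linear combination via the triangle inequality for $\dtv$ would be too lossy for the optimal sample bound). For the combination step I would invoke a square-Hellinger subadditivity bound for Bayesian networks on a common DAG, in the spirit of Daskalakis and Pan~\cite{DBLP:conf/colt/DaskalakisP17}: for Bayes nets $P,Q$ on $G$,
\[
\dhelsqr(P,Q)\;\le\;\sum_{i=1}^n\E_{\pi}\!\left[\dhelsqr\!\left(P_{X_i\mid\pi},\,Q_{X_i\mid\pi}\right)\right],
\]
where $\pi$ ranges over assignments to $X_{\mathrm{pa}(i)}$ drawn from the marginal of $\tfrac12(P+Q)$ on those coordinates --- summing out one sink at a time in the Hellinger affinity and applying $\sqrt{ab}\le\tfrac12(a+b)$ produces exactly this mixture reference measure. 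Since $\dtv\le\sqrt2\,\dhel$, it then suffices to build a Bayes net $Q$ on $G$ for which the right-hand side is $O(\epsilon^2)$; rescaling $\epsilon$ gives $\dtv(P,Q)\le\epsilon$.

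The estimator: draw $m$ samples; for each node $i$ and assignment $\pi$ to $X_{\mathrm{pa}(i)}$, let $N_{i,\pi}$ count the samples with $X_{\mathrm{pa}(i)}=\pi$; if $N_{i,\pi}\ge\tau:=\Theta(|\Sigma|\log(n|\Sigma|^{d+1}/\delta))$, set $Q_{X_i\mid\pi}$ to the empirical conditional of $X_i$ over those samples, and otherwise to a fixed default. Calling $(i,\pi)$ \emph{heavy} when $N_{i,\pi}\ge\tau$, a Chernoff bound and a union bound over the $\le n|\Sigma|^d$ pairs give, with probability $1-\delta/3$, that every pair with $w_{i,\pi}:=\Pr_P[X_{\mathrm{pa}(i)}=\pi]\ge 2\tau/m$ is heavy and that $w_{i,\pi}=\Theta(N_{i,\pi}/m)$ on every heavy pair. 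Then the light pairs contribute at most $\sum_i\sum_{\pi\ \mathrm{light}}w_{i,\pi}\le n|\Sigma|^d\cdot 2\tau/m$ to the subadditivity sum; for heavy pairs, a coordinate-wise Bernstein estimate of each of the $\le|\Sigma|$ learned conditional probabilities, combined with $\dhelsqr(Q_{X_i\mid\pi},P_{X_i\mid\pi})\le\tfrac12\chi^2(Q_{X_i\mid\pi}\|P_{X_i\mid\pi})$ and a separate direct bound for the rarely occurring symbols, gives $\dhelsqr(Q_{X_i\mid\pi},P_{X_i\mid\pi})=O(|\Sigma|\log(n|\Sigma|^{d+1}/\delta)/N_{i,\pi})$ with high probability per pair; the key point is that the weight $w_{i,\pi}=\Theta(N_{i,\pi}/m)$ cancels the $1/N_{i,\pi}$, so after a union bound over the $\le n|\Sigma|^{d+1}$ triples $(i,\pi,a)$ the heavy pairs contribute only $O(n|\Sigma|^{d+1}\log(n|\Sigma|^{d+1}/\delta)/m)$. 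Both contributions are $O(\epsilon^2)$ once $m=\Omega(|\Sigma|^{d+1}n\log(|\Sigma|^{d+1}n/\delta)\epsilon^{-2})$, which is the claimed bound (bounding $\log(n|\Sigma|^{d+1}/\delta)$ by $\log(|\Sigma|^{d+1}n)\log\tfrac1\delta$), and the runtime is dominated by tallying the samples into the $O(n|\Sigma|^{d+1})$ counters.

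I expect the main obstacle to be the mixture reference measure. The accounting above directly controls $\sum_i\E_{\pi\sim P}[\dhelsqr(P_{X_i\mid\pi},Q_{X_i\mid\pi})]$, but the subadditivity bound also charges $\sum_i\E_{\pi\sim Q}[\dhelsqr(P_{X_i\mid\pi},Q_{X_i\mid\pi})]$, and at this sample budget $Q$ need not be $o(1)$-close to $P$ in total variation --- the linear TV-decomposition, which would give that, is too lossy. I would handle this by showing that $Q$'s own parent-marginal satisfies $\Pr_Q[X_{\mathrm{pa}(i)}=\pi]=O(\max(N_{i,\pi},\tau)/m)$ for every $\pi$, using that $Q$ equals the empirical conditionals on heavy configurations together with a careful choice of the defaults on light configurations so that $Q$ routes no anomalous mass into them; then the same heavy/light accounting applies verbatim to the $Q$-expectation. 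The accompanying need for genuinely high-probability (rather than merely in-expectation) control of each empirical conditional with the correct $|\Sigma|/N$ rate is the other delicate point; everything else is Chernoff/Bernstein plus union-bound bookkeeping. Finally, the $(1-\delta)$ confidence follows either directly as above, or from the $\delta=\tfrac1{10}$ case via the boosting recipe of \cref{thm:boost}: run the learner $O(\log\tfrac1\delta)$ times, estimate all pairwise total-variation distances among the candidates using the exact \eval approximators they induce and \cref{thm:main}, and output any candidate within $O(\epsilon)$ of a majority of the others.
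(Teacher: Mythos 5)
Your overall route---square-Hellinger subadditivity over the DAG plus per-conditional empirical estimation with a heavy/light split---is genuinely different from the paper's, but it has a real gap, and it is exactly the point you flag yourself. The subadditivity bound charges the per-node errors under the mixture $\tfrac12(P+Q)$, and your accounting only controls the $P$-side: the step you propose for the $Q$-side, namely that $\Pr_Q[X_{\pa(i)}=\pi]=O(\max(N_{i,\pi},\tau)/m)$ for \emph{every} parent configuration $\pi$, is asserted rather than proved, and it is not a routine Chernoff/union-bound fact. $Q$'s conditionals are multiplicatively accurate only on conditionally frequent symbols and can overestimate conditionally rare symbols by large factors; these per-level multiplicative errors compound along directed paths, and $\Pr_Q[X_{\pa(i)}=\pi]$ aggregates over exponentially many assignments to the remaining ancestors, so per-path bounds do not transfer to the marginal without a delicate induction along the topological order. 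Choosing the defaults on light configurations cannot repair this, because the anomalous $Q$-mass is generated upstream at heavy configurations. Without this step the argument is circular: you need $Q$ to place little mass on the light events, which is essentially the closeness statement you are trying to prove. (The paper explicitly notes that a previously claimed analysis of this learning result was incomplete; the incompleteness is of precisely this flavor.)

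The paper sidesteps the difficulty by working in KL rather than Hellinger: the chain rule for KL over a common DAG (\cref{thm:klsub}) is an \emph{exact} identity in which the per-node divergences are weighted only by $P$'s parent marginals, so no control of $Q$'s marginals is needed at all. It then uses Laplace-corrected empirical conditionals with the in-expectation bound $\E[\KL(D,\hat D)]\le (k-1)/(z+1)$ (\cref{thm:exKLlearn}), a heavy/light split on $P[\Pi[i,a]]$ (light configurations contribute at most their $P$-mass times an $O(1)$ KL bound), Markov's inequality for constant success probability, Pinsker's inequality to convert KL to $\dtv$ (which is where the $\epsilon^{-2}$ arises), a bit-encoding reduction from general $\Sigma$ to the binary case, and finally the boosting of \cref{thm:boost} to reach success probability $1-\delta$ --- the same device you mention at the end. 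To salvage your route you would need to actually establish the $Q$-marginal comparison (or replace the mixture-weighted subadditivity by a decomposition referencing only $P$, which is exactly what KL provides); as written, that key step is missing.
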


We actually prove a stronger bound on the distance between $P$ and $Q$ in terms of the $\KL$ divergence. The $\KL$ divergence between two distributions $P$ and $Q$ is defined as ${\rm KL}(P,Q)=\sum_i P(i) \ln {P(i) \over Q(i)}$. From Pinsker's inequality, we have $\dtv^2(P,Q) \le 2 \KL(P,Q)$. Thus a  $\dtv$ learning result follows from a $\KL$ learning result. We present \cref{algo:bn} for the binary alphabet case ($\Sigma = \{0,1\}$) and reduce the general case to the binary case afterwards. 

The {\em Laplace corrected empirical estimator} takes $z$ samples from a distribution over $k$ items and assigns to item $i$ the probability $(z_i+1)/(z+k)$ where $z_i$ is the number of occurrences of item $i$ in the samples. 
We will use the following general result for learning a distribution in $\KL$ distance.

\begin{theorem}[\cite{pmlr-v40-Kamath15}]\label{thm:exKLlearn}
Let $D$ be an unknown distribution over $k$ items.  Let $\hat{D}$ be the Laplace corrected empirical distribution of $z$ samples from $D$. Then for $k\ge 2, z\ge 1$, $\ex{\KL(D,\hat{D})} \le (k-1)/(z+1)$.
\end{theorem}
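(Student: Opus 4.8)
The plan is to decompose $\KL(D,\hat D)$ coordinatewise and reduce everything to a one-dimensional binomial estimate. Write $D=(p_1,\dots,p_k)$, let $(Z_1,\dots,Z_k)$ be the vector of counts produced by the $z$ samples, so that $Z_i\sim\mathrm{Bin}(z,p_i)$ marginally and $\hat D(i)=(Z_i+1)/(z+k)$. Coordinates with $p_i=0$ contribute $0$ to $\KL(D,\hat D)$ by the usual convention, and $\hat D(i)\ge 1/(z+k)>0$ always, so no term is ever infinite; hence
\[
\E\bigl[\KL(D,\hat D)\bigr]=\sum_{i:\,p_i>0}p_i\cdot\E\!\left[\ln\frac{p_i(z+k)}{Z_i+1}\right].
\]
It therefore suffices to prove that this inner expectation is at most $(k-1)/(z+1)$ for every such $i$; summing against the $p_i$'s, which sum to at most $1$, then gives the theorem.

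To control the inner expectation I would linearize the logarithm using the pointwise bound $\ln t\le t-1$, valid for all $t>0$, which gives
\[
\E\!\left[\ln\frac{p_i(z+k)}{Z_i+1}\right]\le p_i(z+k)\cdot\E\!\left[\frac{1}{Z_i+1}\right]-1.
\]
The remaining quantity has an exact closed form: using the elementary shift identity $\frac{1}{j+1}\binom{z}{j}=\frac{1}{z+1}\binom{z+1}{j+1}$ and reindexing the binomial sum, one obtains $\E\!\left[\frac{1}{Z_i+1}\right]=\frac{1-(1-p_i)^{z+1}}{(z+1)p_i}\le\frac{1}{(z+1)p_i}$. Substituting this back yields $\E\!\left[\ln\frac{p_i(z+k)}{Z_i+1}\right]\le\frac{z+k}{z+1}-1=\frac{k-1}{z+1}$, exactly as needed.

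There is no genuinely hard step here; the result is an elementary moment computation, and the only points that require care are choosing the linearization $\ln t\le t-1$ rather than $\ln t\ge 1-1/t$ (a direct application of Jensen's inequality to $\E[\ln(Z_i+1)]$ bounds it from above, which is the wrong direction and the trap to avoid) and recalling the binomial identity that evaluates $\E[1/(Z_i+1)]$. I would also remark in passing that the bound $(k-1)/(z+1)$ is optimal up to a constant factor, since the minimax $\KL$ risk of estimating a distribution over $k$ items from $z$ samples is $\Theta(k/z)$, so there is no need to look for a sharper argument; and that the linearization holds realization by realization, giving the slightly stronger pointwise inequality $\KL(D,\hat D)\le\sum_{i}p_i\!\left(\frac{p_i(z+k)}{Z_i+1}-1\right)$.
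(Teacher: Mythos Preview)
Your argument is correct. The decomposition into coordinates, the linearization $\ln t\le t-1$, and the closed-form evaluation of $\E[1/(Z_i+1)]$ via the identity $\tfrac{1}{j+1}\binom{z}{j}=\tfrac{1}{z+1}\binom{z+1}{j+1}$ all go through exactly as you describe, and the final bound $(k-1)/(z+1)$ drops out cleanly. One minor wording quibble: the $p_i$'s sum to exactly $1$, not merely ``at most $1$'', but this changes nothing.

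As for comparison with the paper: there is nothing to compare. The paper does not prove this statement at all; it is quoted as a black-box result from \cite{pmlr-v40-Kamath15} and used as an ingredient in the analysis of the Bayesian-network learning algorithm. Your write-up is a valid, self-contained proof of that quoted result.
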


We will use a $\KL$ local additivity result for Bayesian networks, a proof of which is given in  ~\cite{DBLP:conf/colt/CanonneDKS17}. For a Bayesian network $P$, a  vertex $i$, and a  setting a value  $a$ of its parents, let $\Pi[i,a]$ denote the event that parents of $i$ take value $a$, and let $P(i \mid a)$ denote the distribution at vertex $i$ when its parents takes value $a$. 

\begin{theorem}\label{thm:klsub}
 Let $P$ and $Q$ be two Bayesian networks over the same graph $G$. Then  
$$\KL(P,Q) = \sum_{i} \sum_a P[\Pi[i,a]]\cdot \KL(P(i\mid a),Q(i\mid a))$$
\end{theorem}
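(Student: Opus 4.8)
The plan is to expand both sides directly using the common factorization \cref{eqn:bnsecbn}. Writing $P(x) = \prod_{i=1}^n P(x_i \mid x_{\mathrm{parents}(i)})$ and likewise for $Q$, taking logarithms turns each product into a sum, so that
\[
\KL(P,Q) = \sum_x P(x)\ln\frac{P(x)}{Q(x)} = \sum_{i=1}^n \sum_x P(x)\,\ln\frac{P(x_i\mid x_{\mathrm{parents}(i)})}{Q(x_i\mid x_{\mathrm{parents}(i)})}.
\]
For a fixed $i$, the summand depends on $x$ only through the pair $(x_i, x_{\mathrm{parents}(i)})$, so I would regroup the inner sum by the value $a$ of $X_{\mathrm{parents}(i)}$ and the value $b$ of $X_i$, using $P[\Pi[i,a],X_i=b] = P[\Pi[i,a]]\cdot P(b\mid a)$. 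This collapses the inner sum to $\sum_a P[\Pi[i,a]] \sum_b P(b\mid a)\ln\frac{P(b\mid a)}{Q(b\mid a)} = \sum_a P[\Pi[i,a]]\cdot\KL(P(i\mid a),Q(i\mid a))$, and summing over $i$ gives the claim.

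Alternatively, and perhaps more cleanly, one can fix a topological order $1,\dots,n$ of $G$ and apply the chain rule for KL divergence repeatedly: $\KL(P,Q) = \sum_{i=1}^n \E_{x_{<i}\sim P}\bigl[\KL\bigl(P(X_i\mid X_{<i}=x_{<i}),\,Q(X_i\mid X_{<i}=x_{<i})\bigr)\bigr]$, where $X_{<i}=(X_1,\dots,X_{i-1})$. The Markov property then lets us replace the conditioning on $X_{<i}$ by conditioning on $X_{\mathrm{parents}(i)}$: since in a topological order $\{1,\dots,i-1\}\subseteq\mathrm{non\text{-}descendants}(i)$ and $\mathrm{parents}(i)\subseteq\{1,\dots,i-1\}$, we have $X_i\perp X_{<i}\mid X_{\mathrm{parents}(i)}$ under both $P$ and $Q$, so both conditional laws only depend on $x_{<i}$ through $a=x_{\mathrm{parents}(i)}$. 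The expectation over $x_{<i}$ then becomes $\sum_a P[\Pi[i,a]]\cdot\KL(P(i\mid a),Q(i\mid a))$, and again summing over $i$ finishes.

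The steps are essentially bookkeeping, so the only care needed is with degenerate cases. First, when $P[\Pi[i,a]]=0$ the corresponding term is read as $0$, consistent with the fact that such $x$ contribute nothing on the left either. Second, if $Q(b\mid a)=0$ while $P(b\mid a)>0$ for some $(i,a,b)$ reachable under $P$, then $Q(x)=0<P(x)$ for some $x$ in the support of $P$, so $\KL(P,Q)=+\infty$; one checks the right-hand side is then also $+\infty$, so the identity holds in $[0,\infty]$. Finally, the interchange of the sums over $i$ and over $x$ (equivalently, the re-association in the chain-rule version) needs justification when $\KL(P,Q)<\infty$; here the point is that within each fixed $i$ the contributions already group into the \emph{nonnegative} blocks $P[\Pi[i,a]]\,\KL(P(i\mid a),Q(i\mid a))$, so one is reordering a series of nonnegative terms and Tonelli applies. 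Ensuring that no cancellation is being exploited before reordering is the main (and only mildly delicate) point; the rest is direct computation.
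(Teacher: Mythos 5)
Your proof is correct. Note that the paper itself does not prove this identity; it simply cites Canonne et al.\ for it, so there is no in-paper argument to compare against. Both of your routes are the standard ones: the first is the direct expansion of $\ln\frac{P(x)}{Q(x)}$ via the common factorization followed by regrouping by $(x_i, x_{\mathrm{parents}(i)})$, and the second is the KL chain rule along a topological order combined with the Markov property $X_i \perp X_{<i} \mid X_{\mathrm{parents}(i)}$ (which, as you say, must hold under both $P$ and $Q$ — this is where the hypothesis of a common graph $G$ is used). Your handling of the degenerate cases ($P[\Pi[i,a]]=0$ terms, and $Q(b\mid a)=0$ with $P(b\mid a)>0$ forcing both sides to $+\infty$) is the right level of care; the appeal to Tonelli is unnecessary here since $\Sigma$ is finite, so all sums are finite and the only issue is avoiding $\infty-\infty$, which cannot arise because each per-coordinate log-ratio lies in $(-\infty,+\infty]$ whenever $P(x)>0$ — but this extra caution is harmless.
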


\begin{figure}
\begin{algorithm}[H]
\SetKwInOut{Input}{Input}
\SetKwInOut{Output}{Output}
\Input{Samples from an unknown Bayesian network $P$ over $\{0,1\}^n$ on a known graph $G$ of in-degree $\le d$, parameters $m,t$}
\Output{A Bayesian network $Q$ over $G$}
Get $m$ samples from $P$\;
\For{every vertex $i$}{
\For{every fixing $a$ of $i$'s parents}{
$N_{i,a} \gets$ the number of samples where $i$'s parents are set to $a$\;
\eIf{$N_{i,a} \ge t$}{
$Q(i\mid a) \gets$ the Laplace corrected empirical distribution at node $i$ in the subset of samples where $i$'s parents are set to $a$\;
}{
$Q(i\mid a) \gets$ uniformly random bit\;
} 
}
}
\caption{Fixed-structure Bayesian network learning}
\label{algo:bn}
\end{algorithm}
\end{figure}

\begin{lemma}\label{thm:bnlearningapp}
For $m= 24n2^d \log (n2^d)/\epsilon$ and $t=12\log (n2^d)$, \cref{algo:bn} satisfies $KL(P,Q)\le 6\epsilon$ with probability at least 3/4 over the randomness of sampling.
\end{lemma}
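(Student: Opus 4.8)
\emph{Proof sketch.} The plan is to bound the \emph{expectation} of $\KL(P,Q)$ over the random draw of the $m$ samples (the only randomness, since $P$ and the weights $p_{i,a}:=P[\Pi[i,a]]$ are fixed) and then apply Markov's inequality at deviation $6\epsilon$, which forces the failure probability below $1/4$. By \cref{thm:klsub} applied to the two networks $P$ and $Q$ on the common graph $G$,
\[
\KL(P,Q)\;=\;\sum_{i}\sum_{a} p_{i,a}\cdot \KL\big(P(i\mid a),Q(i\mid a)\big),
\]
where for each $i$ the events $\Pi[i,a]$ partition the sample space, so $\sum_a p_{i,a}=1$ and the total number of nonzero terms is $\sum_i 2^{|\mathrm{parents}(i)|}\le n2^d$. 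Writing $\KL_{i,a}:=\KL(P(i\mid a),Q(i\mid a))$, it suffices to show $\ex{\KL(P,Q)}=\sum_{i,a}p_{i,a}\ex{\KL_{i,a}}\le\epsilon$.

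Next I would analyze a single cell $(i,a)$ conditioned on the count $N_{i,a}=z$. The key point is that, conditioned on $N_{i,a}=z$, the coordinate-$i$ values of the $z$ samples whose parents are set to $a$ are i.i.d.\ draws from $P(i\mid a)$; this must be extracted from the factorization \cref{eqn:bnsecbn} together with independence of the $m$ samples. Hence, on $\{N_{i,a}\ge t\}$, where $Q(i\mid a)$ is the Laplace-corrected empirical of those $z$ i.i.d.\ draws, \cref{thm:exKLlearn} with $k=2$ gives $\ex{\KL_{i,a}\mid N_{i,a}=z}\le 1/(z+1)$; and on $\{N_{i,a}<t\}$, where $Q(i\mid a)=\mathrm{Bern}(1/2)$, we have $\KL_{i,a}\le\ln 2$ deterministically (the $\KL$ divergence to the uniform distribution on two points is at most $\ln 2$). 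Splitting the expectation on these two events and using the elementary identity $\ex{\tfrac{1}{N+1}}\le\tfrac{1}{(m+1)p}$ for $N\sim\mathrm{Bin}(m,p)$, we obtain, for every cell with $p_{i,a}>0$ (the $p_{i,a}=0$ terms vanish in the sum),
\[
\ex{\KL_{i,a}}\;\le\;\frac{1}{(m+1)\,p_{i,a}}\;+\;(\ln 2)\,\Pr[N_{i,a}<t].
\]

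Multiplying by $p_{i,a}$ and summing over the $\le n2^d$ cells gives $\ex{\KL(P,Q)}\le \frac{n2^d}{m+1}+(\ln 2)\sum_{i,a}p_{i,a}\Pr[N_{i,a}<t]$. For the tail sum I split cells into \emph{heavy} ($p_{i,a}\ge \epsilon/(n2^d)$, equivalently $mp_{i,a}\ge 2t$) and \emph{light} ($p_{i,a}<\epsilon/(n2^d)$). For heavy cells $t\le \ex{N_{i,a}}/2$, so the multiplicative Chernoff bound gives $\Pr[N_{i,a}<t]\le e^{-mp_{i,a}/8}$; since $p\mapsto pe^{-mp/8}$ is decreasing for $p>8/m$ and $2t/m>8/m$ for our $t$, each heavy cell contributes $p_{i,a}\Pr[N_{i,a}<t]\le \tfrac{2t}{m}e^{-t/4}=\tfrac{\epsilon}{n2^d}(n2^d)^{-3}$, so the heavy cells together contribute $O(\epsilon/(n2^d)^3)$. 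For light cells I simply use $\Pr[N_{i,a}<t]\le 1$ together with the counting bound $\sum_{\text{light}}p_{i,a}< n2^d\cdot \tfrac{\epsilon}{n2^d}=\epsilon$. Combined with $\tfrac{n2^d}{m+1}\le \tfrac{\epsilon}{24\log(n2^d)}$, this yields $\ex{\KL(P,Q)}\le \epsilon$ with room to spare, and Markov's inequality then gives $\Pr[\KL(P,Q)\ge 6\epsilon]\le 1/6<1/4$, which is the claim.

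The one genuinely non-routine step is the i.i.d.\ claim used in the per-cell analysis: one has to verify that conditioning on the count $N_{i,a}$ does not distort the law of the coordinate-$i$ values inside that cell, so that \cref{thm:exKLlearn} may be invoked conditionally. Everything else is bookkeeping — in particular, placing the heavy/light threshold exactly at $\epsilon/(n2^d)$ so that heavy cells concentrate (which is precisely why $t=\Theta(\log(n2^d))$ suffices) while the \emph{total} mass of the light cells stays below $\epsilon$ (which is why it matters that there are only $\le n2^d$ cells), and checking that the constants $24$ and $12$ in $m$ and $t$ leave enough slack for the final Markov step.
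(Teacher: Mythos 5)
Your proposal is correct, and it reaches the paper's conclusion by the same backbone — the chain-rule decomposition of \cref{thm:klsub}, the Laplace-rule expectation bound of \cref{thm:exKLlearn} with $k=2$, a heavy/light split at exactly the threshold $p_{i,a}\ge \epsilon/(n2^d)$, and a final Markov step at deviation $6\epsilon$ — but it organizes the probabilistic bookkeeping differently. The paper first conditions on a global good event ("every heavy cell has $N_{i,a}\ge n2^dP[\Pi[i,a]]t/\epsilon$", which holds with probability $19/20$ by Chernoff plus a union bound over the $\le n2^d$ cells), then bounds the conditional expectation of $\KL(P,Q)$ by roughly $1.1\epsilon$ and applies Markov. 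You instead bound the \emph{unconditional} expectation cell by cell, splitting on $\{N_{i,a}\ge t\}$ versus $\{N_{i,a}<t\}$, using the binomial identity $\E[1/(N+1)]\le 1/((m+1)p)$ to absorb the learned cells and folding the per-cell Chernoff tail $\Pr[N_{i,a}<t]\le e^{-mp_{i,a}/8}$ (for heavy cells) directly into the expectation, so that a single Markov application finishes the proof with no union bound. Your route buys two things: slightly cleaner constants (expected KL at most about $0.85\epsilon$ rather than a $19/20$-plus-Markov budget), and — more substantively — a cleaner justification of the step you correctly flag as the only delicate one: you only ever condition on $N_{i,a}$, which is a function of the parent coordinates of the samples alone, so the coordinate-$i$ values in the cell remain i.i.d.\ $P(i\mid a)$ and \cref{thm:exKLlearn} applies conditionally without fuss. (The paper's global conditioning event involves counts $N_{i',a'}$ that can depend on the values $X_i$ when $i$ is a parent of $i'$, so applying the expectation bound after that conditioning requires a little more care than the write-up spells out; your per-cell split sidesteps this entirely.) The paper's version, in exchange, yields a high-probability statement about all heavy counts that is reusable elsewhere, but for this lemma your argument is complete and, if anything, tighter.
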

\begin{proof}
Call a tuple $(i,a)$ {\em heavy} if $P[\Pi[i,a]]\ge \frac{\epsilon}{2^d n}$ and {\em light} otherwise. 
Let $N_{i,a}$ denote the number of samples where $i$'s parents are $a$. 

Consider the event  ``all heavy $(i,a)$ tuples satisfy $N_{i,a}\ge n2^dP[\Pi[i,a]]t/\epsilon$". It is easy to see from Chernoff and union bounds that this event holds with 19/20 probability.  Hence for the rest of the argument, we condition on this event. In this case, all heavy items satisfy  $N_{i,a}\ge t$.

Now, we see that:
\begin{itemize} 
\item For any heavy $(i,a)$, by \cref{thm:exKLlearn}, \[\ex{\KL(P(i\mid a),Q(i\mid a))} \le \frac{\epsilon}{10n2^d\cdot P[\Pi[i,a]]}.\]
\item
 For any light $(i,a)$ that satisfies $N_{i,a}\ge t$, it follows from \cref{thm:exKLlearn} that $\ex{\KL(P(i\mid a),Q(i\mid a))} \le 1$. 
 \item 
 Items which do not satisfy $N_{i,a}\ge t$ must be light for which $\KL(P(i\mid a),Q(i \mid a)) \le p\ln 2p + (1-p)\ln 2(1-p) \le 1$ where $p=P[i=1|a]$, since in that case $Q(i \mid a)$ is the uniform bit. 
 \end{itemize}
 
Using \cref{thm:klsub}, we get \[\ex{\KL(P,Q)}\le \sum_{(i,a) \text{ heavy}} P[\Pi[i,a]] \cdot \frac{\epsilon}{10n2^d \cdot P[\Pi[i,a]]} + \sum_{(i,a) \text{ light}} \frac{\epsilon}{n2^d}\cdot 1 \le 1.1\epsilon.\] The lemma follows from Markov's inequality.
\end{proof}

Now we reduce the case when $\Sigma$ is not binary to the binary case. We can encode each $\sigma\in \Sigma$ of the Bayesian network as a $\log |\Sigma|$ size boolean string which gives us a Bayesian network of degree $(d+1) \log |\Sigma|$ over $n\log |\Sigma|$ variables. Then we apply \cref{thm:bnlearningapp} to get a learning algorithm with $O(\epsilon)$ error in $\dtv$ and 3/4 success probability. Subsequently we repeat $O(\log {1\over \delta})$ times and find out a successful repetition using \cref{thm:boost}.

\ignore{
\subsection{Very High Success Probability}
\begin{theorem}\label{thm:bnlearningapphigh}
For $m= \Theta(n2^d\log ({n2^d\over \delta})/\epsilon)$ and $t=\Theta(\log ({n2^d\over \delta}))$ with appropriate choice of constants \cref{algo:bn} satisfy $KL(P,Q)=O(\epsilon)$ with probability at least $1-2\delta$ over the randomness of sampling.
\end{theorem}
\begin{proof}
We set $m= \Theta(n2^d\log ({n2^d\over \delta})/\epsilon)$ and $t=\Theta(\log ({n2^d\over \delta}))$, choosing the constant appropriately so that the rest of the proof goes through.
For every vertex $i$, for every fixing $a$ of its parents, let $\Pi[i,a]$ denote the event that parents of $i$ takes value $a$. We call a tuple $(i,a)$ {\em heavy} if $P[\Pi[i,a]]\ge \epsilon/n\cdot 2^d$ and {\em light} otherwise. Let $N_{i,a}$ denote the number of samples where $i$'s parents are $a$. We use the following subadditive result from~\cite{DBLP:conf/colt/CanonneDKS17}.
\begin{theorem}
$$KL(P,Q) \le \sum_{i} \sum_a P[\Pi[i,a]] KL(P(i\mid a),Q(i\mid a))$$
\end{theorem}
We also use the following result for learning a distribution in $KL$ distance.
\begin{theorem}[\cite{canonne-writeup}]\label{thm:highKLlearn}
There is an algorithm that takes $O((k+\log {1\over \delta})/\epsilon)$ samples from an unknown distribution $D$ over $k$ items and returns a distribution $\hat{D}$ such that $KL(D,\hat{D})\le \epsilon$ with probability at least $1-\delta$.
\end{theorem}

Henceforth we condition on the event that  ``all $(i,a)$ tuples which satisfy $N_{i,a}\ge t$ also satisfy $N_{i,a}\ge n2^dP[\Pi[i,a]]t/\epsilon$". From Chernoff's and union bound with our choice of $m$ and $t$ all heavy tuples satisfy both the statements except $\delta$ probability. By definition of light items and $t$, any light item with $N_{i,a}\ge t$ also satisfy this.

Items which do not satisfy $N_{i,a}\ge t$ must be light for which $KL(P(i\mid a),Q(i \mid a)) \le p\ln 2p + (1-p)\ln 2(1-p) \le 1$ where $p=P[i=1|a]$, since in that case $Q(i \mid a)$ is the uniform bit. 

\cref{thm:highKLlearn} gives us $KL(P(i\mid a),Q(i\mid a)) \le O(\epsilon/n2^dP[\Pi[i,a]])$ for items with $N_{i,a}\ge t$ except $\delta$ probability.
We get $KL(P,Q)\le \sum_{(i,a) \text{ heavy}} O(\epsilon/n2^d) + \sum_{(i,a) \text{ light}} \epsilon/n2^d = O(\epsilon)$ except $2\delta$ probability in total.

\end{proof}

\begin{corollary}
There is an algorithm which given $m= O(n2^dP[\Pi[i,a]]\log ({n2^d\over \delta})/\epsilon)$, samples from a bayes net $P$ over $\Sigma^n$ on a graph $G$ of indegree at most $d$ return a bayes net $Q$ on $G$ such that $KL(P,Q) \le \epsilon$ with probability $1-\delta$.
\end{corollary}
\begin{proof}
We can encode each $\sigma\in \Sigma$ of the Bayesian network as a $\lceil\log |\Sigma|\rceil$-size boolean string which gives us a Bayesian net of degree $(d+1) \log |\Sigma|$ over $n\log |\Sigma|$ variables. Then we apply Theorem~\ref{thm:bnlearningapphigh}
\end{proof}
}


\section{Ising Models}\label{sec:Ising}
In this section, we give a distance approximation algorithm for the class of bounded-width ferromagnetic Ising models. Recall from \cref{sec:intro-ising} that a probability distribution $P$ from this class is over the sample space $\{-1,1\}^n$ and that $P(x)$, the probability of an item $x\in \{-1,1\}^n$, is proportional to the numerator: 
\[N(x)=\exp\left(\sum_{i,j} A_{i,j}x_i x_j +\theta \sum_i x_i\right),\] 
where $A_{i,j}$s and $\theta$ are parameters of the model. The constant of proportionality, also called the {\em partition function} of the Ising model is $Z=\sum_x N(x)$, which gives $P(x)=N(x)/Z$. The {\em width} of the Ising model is defined as $\max_i \sum_j |A_{i,j}|+\theta$.  In a {\em ferromagnetic} Ising model, each $A_{ij} \geq 0$.

Given two such Ising models, we give an algorithm for additively estimating their total variation distance. We first learn these two Ising models up to total variation distance $\epsilon/8$ using the following learning algorithm given by Klivans and Meka~\cite{KM17}. In fact, it gives a stronger $(1\pm\epsilon)$ multiplicative approximation guarantee for every probability value.

\begin{theorem}[Theorem 7.3 in \cite{KM17}]\label{thm:IsingLearn}
There is an algorithm which, given independent samples from an unknown Ising model  $P$ with width at most $d$, returns parameters $\hat{A}_{i,j}$ and $\hat{\theta}$ such that the Ising model $\hat{P}$ constructed with the latter parameters satisfies $(1-\epsilon)P(x)\le \hat{P}(x) \le (1+\epsilon) P(x)$ for all $x \in \{-1,1\}^n$. This algorithm takes $m=e^{O(d)}\epsilon^{-4}n^8\log ({n/\delta\epsilon})$ samples, $O(mn^2)$ time and succeeds with probability $1-\delta$. 
\end{theorem}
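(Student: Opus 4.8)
The final statement is the Ising-model structure-learning theorem of Klivans and Meka, so the plan is to reconstruct their argument: learn the model one node at a time by reducing to sparse logistic regression, recover the interaction matrix $A$ and the field $\theta$ entrywise, and finally upgrade entrywise parameter closeness to the claimed $(1\pm\epsilon)$ guarantee on every probability value.

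\textbf{Reduction and per-node learning.} First I would fix a vertex $u$ and note, directly from the Gibbs form \eqref{eqn:ising}, that the conditional law of $X_u$ given the remaining coordinates $x_{-u}$ is logistic,
\[
\Pr_{X\sim P}[X_u = 1 \mid X_{-u} = x_{-u}] = \sigma\!\Bigl(2\theta + 4\sum_{j\neq u} A_{uj}\,x_j\Bigr), \qquad \sigma(z)=\tfrac{1}{1+e^{-z}},
\]
so the coefficient vector $w^{(u)}$ of this generalized linear model has $\ell_1$-norm $O(d)$ by the width bound, and each draw $X\sim P$, read as the pair $(X_{-u},X_u)$, is one labeled example. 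I would then run the Sparsitron — the online multiplicative-weights / Hedge-type learner for $\ell_1$-bounded GLMs — separately for each $u$; by its regret analysis it outputs $\hat w^{(u)}$ with $\E_{x\sim P}\bigl[(\sigma(\langle w^{(u)},x\rangle)-\sigma(\langle\hat w^{(u)},x\rangle))^2\bigr]\le\gamma$ from $e^{O(d)}\gamma^{-2}\log(n/\delta)$ samples, and I would union-bound the failure probability over the $n$ vertices.

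\textbf{From prediction error to parameters, and then to probabilities.} This is the technical heart, and where the bounded-width structure is essential. Because every argument of $\sigma$ above lies in $[-O(d),O(d)]$, we have $\sigma'\ge e^{-O(d)}$ there, so the prediction bound turns into $\E_{x\sim P}[\langle w^{(u)}-\hat w^{(u)},x\rangle^2]\le e^{O(d)}\gamma$; then a quantitative non-degeneracy property of bounded-width Ising models — every conditional bias lies in $[e^{-O(d)},1-e^{-O(d)}]$, so the marginal of $X$ under $P$ is sufficiently well-spread — lets one invert this quadratic form coordinatewise to obtain $\|\hat A - A\|_\infty,\ |\hat\theta-\theta|\le\rho$ with $\rho = e^{O(d)}\,\mathrm{poly}(n)\,\sqrt{\gamma}$ (averaging the per-node fields if a single $\theta$ is desired). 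To finish, pick $\rho$ small enough that $\|\hat A-A\|_1 + n|\hat\theta-\theta|\le\epsilon/3$; since there are $O(n^2)$ parameters this forces $\rho\approx\epsilon/n^2$, which is exactly what inflates the sample count to $m=e^{O(d)}\epsilon^{-4}n^8\log(n/(\delta\epsilon))$. With this choice, for every $x\in\{-1,1\}^n$,
\[
\bigl|\log\hat N(x)-\log N(x)\bigr| = \Bigl|\sum_{i\neq j}(\hat A_{ij}-A_{ij})\,x_ix_j + (\hat\theta-\theta)\textstyle\sum_i x_i\Bigr| \le \tfrac{\epsilon}{3},
\]
hence $\hat N(x)/N(x)\in[e^{-\epsilon/3},e^{\epsilon/3}]$; summing over $x$ gives the same two-sided bound for the partition functions $\hat Z/Z$, and dividing yields $\hat P(x)/P(x)\in[e^{-2\epsilon/3},e^{2\epsilon/3}]\subseteq[1-\epsilon,1+\epsilon]$ after rescaling constants. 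The runtime is dominated by sweeping the $m$ samples through the $n$ conditional-learning problems, i.e.\ $O(mn^2)$.

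The step I expect to be the main obstacle is exactly the passage from a small squared-prediction error on the conditionals to genuine entrywise recovery of $A$ and $\theta$: a good predictor of the conditional probabilities says nothing about the individual couplings on its own, and making it do so requires a real anti-concentration / conditioning statement for the marginals of a bounded-width Ising model. This conversion is also where the polynomial-in-$n$ blow-up enters, which is why the sample complexity degrades to $n^8\epsilon^{-4}$ rather than the $\mathrm{poly}(n)$ one might naively hope for.
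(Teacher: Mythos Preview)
The paper does not prove this statement at all: it is quoted as Theorem~7.3 of Klivans--Meka~\cite{KM17} and used as a black box, so there is no ``paper's own proof'' to compare against. Your reconstruction of the Klivans--Meka argument is accurate in outline and in the accounting --- per-node logistic regression via the Sparsitron, conversion of squared prediction error to $\ell_\infty$ parameter error using bounded-width anticoncentration, and then the lift from $\ell_1$-closeness of the log-polynomial to the pointwise ratio bound on $\hat P(x)/P(x)$; indeed the present paper, in the proof of \cref{thm:ising-main}, explicitly appeals to exactly this intermediate $\|T-\hat T\|_1$ bound from~\cite{KM17}, which confirms your reading.
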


However learning the parameters of an Ising model is not enough to efficiently evaluate the probability at arbitrary points. Naively computing the constant of proportionality $Z$ would take $2^n$ time. For certain classes of Ising models polynomial time algorithms are known which approximates $Z$ up to a $(1\pm\epsilon)$ approximation factor. In particular we use the following approximation algorithm for ferromagnetic\footnote{As pointed out by \cite{SriComm}, Jerrum and Sinclair's result (and hence, our result) extends to the {\em non-uniform external field} setting where there is a $\theta_i$ for each $i$ instead of $\theta_1 = \cdots = \theta_n = \theta$, with the restriction that each $\theta_i \geq 0$.} Ising models due to Jerrum and Sinclair~\cite{JS93}.

\begin{theorem}\label{thm:IsingPartition}
There is an algorithm which given the parameters of a ferromagnetic Ising model distribution $P$, in $O(\epsilon^{-2} n^{17} \log n)$ time returns a number $\hat{Z}$ such that with probability at least 9/10, $(1-\epsilon)Z \le \hat{Z} \le (1+\epsilon)Z$, where $Z$ is the partition function of $P$. 
\end{theorem}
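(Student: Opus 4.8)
The statement is the Jerrum--Sinclair FPRAS for the ferromagnetic Ising partition function; the theorem is simply quoted from \cite{JS93}, so what follows is a sketch of why it holds. \textbf{Step 1 (reduction to weighted even-subgraph counting).} Applying the high-temperature expansion $e^{A_e x_i x_j} = \cosh(A_e)\bigl(1 + x_i x_j\tanh A_e\bigr)$ to every edge and summing over spin configurations, one obtains $Z = 2^n\bigl(\prod_e \cosh A_e\bigr)\cdot\Xi$, where $\Xi = \sum_{S\subseteq E,\ S\text{ even}}\ \prod_{e\in S}\lambda_e$ with $\lambda_e \coloneqq \tanh A_e$ and ``$S$ even'' meaning every vertex has even degree in $(V,S)$; ferromagnetism forces $\lambda_e\in[0,1)$, making this a genuine nonnegative weighted count. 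A (uniform or per-vertex $\theta_i\ge 0$) external field is absorbed by adjoining a ghost vertex $v_0$ joined to each $i$ with coupling $\theta$ (resp.\ $\theta_i$): fixing $x_0=1$ recovers the field model and, by the global flip symmetry, $x_0=-1$ contributes identically, so $Z_{\text{field}} = \tfrac12\widetilde Z$ with $\widetilde Z$ a field-free ferromagnetic partition function on $n+1$ vertices. Since $2^n$ and $\prod_e\cosh A_e$ are computed exactly, it remains to $(1\pm\epsilon)$-approximate $\Xi$.

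\textbf{Step 2 (counting via sampling in the ``subgraphs world'').} By self-reducibility --- deleting edges one at a time and telescoping ratios of $\Xi$-type quantities on smaller graphs --- an FPRAS for $\Xi$ reduces to an efficient approximate sampler for a distribution on subgraphs. The correct state space is the ``subgraphs world'' $\Omega = \{S\subseteq E : S$ has at most two odd-degree vertices$\}$, with $\pi(S)\propto \bigl(\prod_{e\in S}\lambda_e\bigr)\cdot w(\text{odd endpoints of }S)$ for an auxiliary weight $w$ chosen so that the even subgraphs carry an inverse-polynomial fraction of the $\pi$-mass and so that the edge-deletion recursion stays inside the same family. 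The sampler is a Markov chain that at each step picks a uniformly random edge (and endpoint), toggling the edge and, when needed, moving the unpaired odd vertex along it, with a Metropolis-type acceptance rule making it reversible w.r.t.\ $\pi$; irreducibility and aperiodicity are immediate.

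\textbf{Step 3 (rapid mixing --- the crux).} The heart of the proof is that this chain mixes in $\poly(n,\log\epsilon^{-1})$ steps, which I would establish by the canonical-paths / multicommodity-flow method: route $\pi(I)\pi(F)$ units of flow between every ordered pair $(I,F)$ of configurations along a canonical path obtained by decomposing the symmetric difference $I\triangle F$ into paths and cycles and ``unwinding'' them edge by edge in a fixed order. An injective-encoding argument then bounds the congestion through any single transition of the chain, and Sinclair's flow lemma turns a $\poly(n)$ congestion bound into a $\poly(n)$ bound on the inverse spectral gap, hence on mixing time. \emph{This congestion estimate is the main obstacle}: one must check that from a transition edge plus a small auxiliary encoding one can recover the pair $(I,F)$ with only polynomial multiplicity, \emph{and} simultaneously control the weight ratio $\pi(I)\pi(F)/\bigl(\pi(\text{endpoints of the transition})\bigr)$ --- this is exactly where $\lambda_e\in[0,1)$, i.e.\ ferromagnetism, is indispensable, since signed or $>1$ weights would leave this ratio uncontrolled.

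\textbf{Step 4 (assembly and running time).} Finally, for each of the $O(m)=O(n^2)$ edge-deletion sub-instances, run the chain for its $\poly(n)$ mixing time to draw $O(\epsilon^{-2}\poly(n))$ approximately independent samples, form the empirical ratios and multiply; a Chebyshev / median-of-means analysis across the $O(n^2)$ stages yields relative error $\epsilon$ with probability at least $9/10$ (any constant, amplifiable by repetition). Tallying the mixing time, the samples per stage, the $O(n^2)$ stages, and the $O(1)$-vertex blow-up from the ghost vertex gives running time $O(\epsilon^{-2}n^{c}\log n)$ for an explicit constant; the bookkeeping of \cite{JS93} gives $c=17$, which is the claimed bound.
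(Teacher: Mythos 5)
Your proposal matches the paper exactly in approach: the paper gives no proof of this theorem, importing it wholesale from Jerrum and Sinclair \cite{JS93}, which is precisely what you do, and your sketch of their argument (high-temperature expansion to the even-subgraphs sum, subgraphs-world Markov chain, canonical-paths congestion bound, and ratio-telescoping assembly) is a faithful outline of the cited proof. The only small caveat is that Jerrum--Sinclair's actual state space is all edge subsets weighted by $\lambda^{|S|}\mu^{|\mathrm{odd}(S)|}$ rather than subsets with at most two odd vertices, but since the theorem is used as a black box this does not affect anything in the paper.
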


Combining the previous two results with our general distance estimation algorithm, we can now obtain our main result for Ising models which we restate below.
\ignore{
\begin{theorem}
Let $P$ and $Q$ be two unknown ferromagnetic Ising models of width at most $d$ which we can access by samples. Then there is an algorithm which takes $m=O(e^{O(d)}\epsilon^{-4}n^8\log ({n\over \epsilon}))$ samples from $P$ and $Q$, and in $O(mn^2+\epsilon^{-2} n^{17} \log n)$ time\snote{The time complexity turned out more than that stated in the introduction} returns $e$ such that $|e - \dtv(P,Q)|\le\epsilon$ with probability at least 2/3.
\end{theorem}}
\isingmain*
\begin{proof}
We first use~\cref{thm:IsingLearn} to get the parameters for a pair of Ising models $\hat{P}$ and $\hat{Q}$  which are, with probability at least $9/10$, pointwise $(1\pm\epsilon/8)$ approximations to $P$ and $Q$. If $\hat{P}$ or $\hat{Q}$ has any negative pairwise interaction term, then we modify them to zero, thus making $\hat{P}$ and $\hat{Q}$ ferromagnetic.
We claim that since $P$ and $Q$ are ferromagnetic to start with, this can only improve the approximation factor. The reason is that Klivans and Meka, in their proof of \cref{thm:IsingLearn}, show the more general result that for any {\em log-polynomial distribution}, i.e, any distribution $P$ on $\{-1,1\}^n$ where $P(x) \propto \exp(T(x))$ for a bounded-degree polynomial $T$, they can obtain a polynomial $\hat{T}$ with the same degree that satisfies a bound on $\|T-\hat{T}\|_1 = \sum_\alpha |T[\alpha]-\hat{T}[\alpha]|$ where $T[\alpha]$ and $\hat{T}[\alpha]$ are the coefficients of the monomial indexed by $\alpha$. It is clear that if $T[\alpha]\geq 0$, changing $\hat{T}[\alpha]$ to $\max(0,\hat{T}[\alpha])$ can only reduce $\|T-\hat{T}\|_1$. 
\ignore{
Klivans and Meka's algorithm~\cite{KM17} actually gives an additive $\eps$ approximation $\hat{T}$ of the term $T$ inside $\exp()$ where $N(x)=\exp(T)$ to get an $\exp(\epsilon)\simeq (1+\epsilon)$ approximation for $P(x)$. We modify the $\hat{A}_{i,j}<0$ value to $\hat{A}_{i,j}=0$ in $\hat{P}$ and $\hat{Q}$ if there are any one by one. During each modification since other $\hat{A}_{i,j}$ values and $\hat{\theta}$ remain  unchanged, the difference only comes because of the current $\hat{A}_{i,j}$ under modification. Since the ground truth $A_{i,j} \ge 0$ for $P$ and $Q$ whereas the estimate from~\cite{KM17} have $\hat{A}_{i,j}<0$, modifying $\hat{A}_{i,j}=0$ makes $\hat{P}(x)$ closer to the ground truth $P(x)$ for every possibilities of $x\in \{-1,1\}^n$. Thus every such modification goes closer to $T$ than before and we would end up with a ferromagnetic Ising model.
}

Abusing notation for simplicity, henceforth let $\hat{P}$ and $\hat{Q}$ be the distributions after this modification. Let $N_{\hat{P}}(x)$ and $N_{\hat{Q}}(x)$ be the numerators for $\hat{P}$ and $\hat{Q}$ respectively. Then we apply~\cref{thm:IsingPartition} to estimate, with probability $4/5$, the partition functions $\hat{Z}_P$ and $\hat{Z}_Q$ of $\hat{P}$ and $\hat{Q}$ respectively up to a $(1\pm\epsilon/8)$ multiplicative factor. Therefore, $E_P(x)=N_{\hat{P}}(x)/\hat{Z}_P$ and $E_Q(x)=N_{\hat{Q}}(x)/\hat{Z}_Q$ are $(\epsilon/8,\epsilon/4)$-EVAL approximators for $P$ and $Q$ respectively, where the $\epsilon/8$-close distributions are $\hat{P}$ and $\hat{Q}$. 
It follows from~\cref{thm:main} that conditioned on the above, we can estimate $\dtv(P,Q)$ up to an $\epsilon$ additive error with probability at least $9/10$. 
\end{proof}

\subsection{Distance to uniformity}
Next we give an algorithm for estimating the distance between an unknown Ising model and the uniform distribution over $\{-1,1\}^n$. 
\isingsupp*
\begin{proof}
We first learn the ising model using \cref{thm:IsingLearn}. As we noted earlier computing the partition function naively is intractable in general. However computing $N_x/N_z$, the ratio of the probabilities of two items $x,y$ can be computed in $O(n^2)$ time up to $(1\pm\epsilon)$ approximation from \cref{thm:IsingLearn}. Canonne et al.~\cite{DBLP:journals/siamcomp/CanonneRS15} have given an algorithm for computing distance to uniformity from an unknown distribution using sampling and pairwise conditional sampling (PCOND) access to it using $m_1=O(\epsilon^{-19}\log^8 {1\over \epsilon})$ PCOND samples and $m_2=O(\epsilon^{-7}\log^3 {1\over \epsilon})$ samples with probability 2/3 up to a $O(\epsilon)$ additive error. A closer look at their algorithm reveals that all their PCOND accesses are made from a routine called `COMPARE', whose job is to compute the ratio of probabilities of two points $x$ and $z$ with probability $1-\delta$ upto $(1\pm\eta)$-factor using conditional samples. In fact it suffices for their algorithm to correctly compute the ratio if it is in $[1/K,K]$, report `HIGH' if it is in $(K,\infty]$, and `LOW' if it is in $[0,1/K)$ for a parameter $K$. In the case of ising model, assuming success of \cref{thm:IsingLearn} we can replace the routine `COMPARE' by computing $N_x/N_z$ using the parameters of the learnt model upto $(1\pm\epsilon)$ approximation in $O(n^2)$ time with $\delta=0$. Their algorithm makes $m_3=O(\epsilon^{-7}\log^3 {1\over \epsilon})$ calls to `COMPARE'. Using their choices of various parameters our theorem follows.
\end{proof}

\section{Multivariate Gaussians}\label{sec:gaussians}

In this section we give an algorithm for additively estimating the total variation distance between two unknown multidimensional Gaussian distributions. 
For a mean vector $\mu \in \R^n$ and a positive definite covariance matrix $\Sigma \in \R^{n \times n}$, the Gaussian distribution $N(\mu, \Sigma)$ has the pdf:
\begin{equation}
N(\mu, \Sigma; x) = \frac{1}{(2\pi)^{n/2} \sqrt{\det(\Sigma)}} \exp\left(-\frac12(x-\mu)^\top \Sigma^{-1} (x-\mu)\right)
\end{equation}
We use the following folklore learning result for learning the two Gaussians.

\begin{theorem}\label{thm:GaussLearn}
Let $P$ be an $n$-dimensional Gaussian distribution. Let $\hat{\mu} \in \R^n$ and $\hat{\Sigma} \in \R^{n \times n}$ be the empirical mean and the empirical covariance defined by
$O(n^2 \eps^{-2})$ samples from $P$. Then, with probability at least $9/10$, the distribution $\hat{P} = N(\hat{\mu}, \hat{\Sigma})$ satisfies $\dtv(P,\hat{P}) \leq \eps$.
\end{theorem}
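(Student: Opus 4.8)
The plan is to reduce to the standard Gaussian and then pass from total variation distance to a discrepancy in parameters that we can control by elementary concentration. Since total variation distance is invariant under invertible affine maps, applying $x\mapsto \Sigma^{-1/2}(x-\mu)$ lets us assume $P=N(0,I_n)$. Under this map the $m=O(n^2\eps^{-2})$ samples become i.i.d.\ samples $X_1,\dots,X_m\sim N(0,I_n)$, and $N(\hat\mu,\hat\Sigma)$ becomes $N(\hat\mu_0,\hat\Sigma_0)$ where $\hat\mu_0=\frac1m\sum_i X_i$ and $\hat\Sigma_0$ is the empirical covariance of these standard samples (whether one normalizes by $m$ or $m-1$ is immaterial). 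So it suffices to show $\dtv\bigl(N(0,I_n),N(\hat\mu_0,\hat\Sigma_0)\bigr)\le\eps$ with probability at least $9/10$.

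Next I would invoke the closed form of the KL divergence between Gaussians. Writing $\lambda_1,\dots,\lambda_n$ for the eigenvalues of $\hat\Sigma_0$,
\begin{align*}
\KL\bigl(N(\hat\mu_0,\hat\Sigma_0),\,N(0,I_n)\bigr)
&=\tfrac12\Bigl(\mathrm{tr}(\hat\Sigma_0)-n-\ln\det(\hat\Sigma_0)+\|\hat\mu_0\|^2\Bigr)\\
&=\tfrac12\Bigl(\sum_{i=1}^n\bigl(\lambda_i-1-\ln\lambda_i\bigr)+\|\hat\mu_0\|^2\Bigr).
\end{align*}
On the event that $\|\hat\Sigma_0-I_n\|_F$ is small every $\lambda_i$ lies in a neighbourhood of $1$, where the elementary inequality $\lambda-1-\ln\lambda\le(\lambda-1)^2$ holds; hence $\sum_i(\lambda_i-1-\ln\lambda_i)\le\sum_i(\lambda_i-1)^2=\|\hat\Sigma_0-I_n\|_F^2$. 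Combining with Pinsker's inequality ($\dtv^2\le 2\,\KL$), the theorem reduces to proving that, with probability at least $9/10$, $\|\hat\mu_0\|=O(\eps)$ and $\|\hat\Sigma_0-I_n\|_F=O(\eps)$; the latter also gives $\hat\Sigma_0\succ0$, so $N(\hat\mu_0,\hat\Sigma_0)$ is well-defined.

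Both bounds follow from a second-moment computation and Markov's inequality. For the mean, $\hat\mu_0\sim N(0,\tfrac1m I_n)$, so $\E\|\hat\mu_0\|^2=n/m=O(\eps^2/n)$. For the covariance, after discarding the negligible rank-one correction $\hat\mu_0\hat\mu_0^\top$ (of Frobenius norm $\|\hat\mu_0\|^2$), I would compute $\E\bigl\|\tfrac1m\sum_i X_iX_i^\top-I_n\bigr\|_F^2=\sum_{j,k}\Var\bigl(\tfrac1m\sum_i X_{ij}X_{ik}\bigr)$; using $\E X^4=3$ for a standard Gaussian, each diagonal entry contributes $2/m$ and each off-diagonal entry $1/m$, for a total of $(n^2+n)/m=O(n^2/m)=O(\eps^2)$. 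Taking the hidden constant in $m=\Theta(n^2\eps^{-2})$ large enough, Markov's inequality gives $\|\hat\mu_0\|^2\le\eps^2$ and $\|\hat\Sigma_0-I_n\|_F^2\le\eps^2$ each with probability $\ge 19/20$; a union bound and a constant-factor rescaling of $\eps$ finish the proof.

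Everything here is routine; the only point worth stressing is that the notion of closeness of $\hat\Sigma_0$ forced by the KL expression is the \emph{Frobenius} norm, not the operator norm. This is exactly why $\Theta(n^2)$ rather than $\Theta(n)$ samples are needed: the fourth-moment calculation shows $\E\|\hat\Sigma_0-I_n\|_F^2=\Theta(n^2/m)$, so $m=\Omega(n^2\eps^{-2})$ is necessary for this argument, matching the stated sample complexity.
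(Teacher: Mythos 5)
Your proposal is correct. Note that the paper itself gives no proof of this statement: it is invoked as a folklore learning result (stated once in \cref{sec:new} with success probability $3/4$ and again in \cref{sec:gaussians} with $9/10$), so there is no in-paper argument to compare against; what you have written is exactly the standard argument one would supply. The chain affine-invariance of $\dtv$ $\to$ closed-form Gaussian KL $\to$ Pinsker $\to$ second-moment bounds on $\|\hat\mu_0\|^2$ and $\|\hat\Sigma_0-I_n\|_F^2$ with Markov and a union bound is sound, and your fourth-moment computation giving $\E\|\tfrac1m\sum_i X_iX_i^\top-I_n\|_F^2=(n^2+n)/m$ is right, as is the observation that the Frobenius-norm requirement is what forces $m=\Theta(n^2\eps^{-2})$. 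Two small points you should make explicit if this were written out in full: (i) the inequality $\lambda-1-\ln\lambda\le(\lambda-1)^2$ needs $\lambda$ bounded away from $0$ (e.g.\ $\lambda\ge 1/2$), so one should assume without loss of generality that $\eps$ is at most a small constant, which also guarantees $\hat\Sigma_0\succ 0$ on the good event; (ii) the rank-one correction $\hat\mu_0\hat\mu_0^\top$ should be absorbed via the triangle inequality $\|\hat\Sigma_0-I_n\|_F\le\|\tfrac1m\sum_i X_iX_i^\top-I_n\|_F+\|\hat\mu_0\|^2$, which your bound $\|\hat\mu_0\|^2\le\eps^2$ already covers. Neither affects correctness.
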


We are now ready to prove \cref{thm:gaussians-main} restated below.
\ignore{\begin{theorem}\label{thm:tv-approx}
Let $P$ and $Q$ be two $n$ dimensional Gaussian distributions. There is an algorithm that takes $O(n^2/\epsilon^2)$ samples from $P$ and $Q$ and in $O(n^{\omega}\epsilon^{-2})$ time returns a value $e$ such that $|e-\dtv(P,Q)|\le \epsilon$ with probability 2/3, where $\omega\simeq 2.373$ is the matrix multiplication constant.
\end{theorem}}
\gaussianmain*
\begin{proof}
We first apply \cref{thm:GaussLearn} to obtain $\hat{P}$ and $\hat{Q}$ such that each is within $\eps/4$ distance from $P$ and $Q$ respectively. Since we can evaluate the pdf of $\hat{P}$ and $\hat{Q}$ exactly, they serve as $(\epsilon/4,0)$ \eval-approximators for $P$ and $Q$. Each determinant computation costs $O(n^{\omega})$ time.
Subsequently from (the continuous analog of)~\cref{thm:main}, using $O(\epsilon^{-2})$ samples from $P$ and $O(n^{\omega}\epsilon^{-2})$ time, we can estimate $\dtv(P,Q)$ up to an additive $\epsilon$ error with probability at least $4/5$.
\end{proof}
\begin{remark}
The above time analysis uses the unrealistic real RAM model in which real number computations can be carried out exactly upto infinite precision. However, there are strongly polynomial time algorithms for computing matrix determinant and inverse \cite{LovNotes, Wilk65}, so that even in the more realistic word RAM model, the above algorithm runs in polynomial time.
\end{remark}

As a by-product of our analysis, we also obtain an efficient randomized algorithm to compute the total deviation distance between two gaussians specified by their parameters.
\begin{corollary}\label{cor:tv-approx}
For any two vectors $\mu_1,\mu_2\in \R^n$ and two positive-definite matrices $\Sigma_1,\Sigma_2 \in \R^{n\times n}$, $\dtv(N(\mu_1,\Sigma_1)\allowbreak ,N(\mu_1,\Sigma_1))$ can be estimated up to an additive $\epsilon$ error in $O(n^3\eps^{-2})$ time.
\end{corollary}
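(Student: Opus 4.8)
The plan is to observe that \cref{cor:tv-approx} is an immediate consequence of the (continuous analog of) \cref{thm:main}, once one notices that when the parameters $\mu_1,\Sigma_1,\mu_2,\Sigma_2$ are given explicitly, both \emph{sample access} and an \emph{exact} \eval approximator are available essentially for free. Concretely, set $P = N(\mu_1,\Sigma_1)$ and $Q = N(\mu_2,\Sigma_2)$. To sample from $P$, compute a Cholesky factorization $\Sigma_1 = L L^\top$ once in $O(n^3)$ time and return $\mu_1 + L z$ for a fresh standard Gaussian vector $z$; each such sample costs $O(n^2)$. To evaluate the densities, precompute $\Sigma_1^{-1},\Sigma_2^{-1},\det(\Sigma_1),\det(\Sigma_2)$ once in $O(n^3)$ time (or $O(n^\omega)$, and in strongly polynomial time in the word-RAM model, as noted in the remark above); afterwards \cref{eqn:gaussian} can be evaluated at any point $x$ in $O(n^2)$ time, since the only nontrivial operation is forming the quadratic form $(x-\mu_i)^\top\Sigma_i^{-1}(x-\mu_i)$. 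This yields circuits $\cC_P,\cC_Q$ computing the exact densities of $P$ and $Q$, i.e.\ $(0,0)$-\eval approximators, with per-call time $T = O(n^2)$.

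Next I would simply invoke the continuous version of \cref{thm:main} with $\beta=\gamma=0$ and $\delta$ a small constant. It outputs a value $e$ with $|e-\dtv(P,Q)|\le\epsilon$ with probability at least $2/3$ (amplifiable as usual), using $O(\epsilon^{-2})$ samples from $P$ and $O(\epsilon^{-2}\cdot T)=O(n^2\epsilon^{-2})$ additional time. Adding the one-time $O(n^3)$ preprocessing, the total running time is $O(n^3 + n^2\epsilon^{-2})$, which is dominated by $O(n^3\epsilon^{-2})$ since $\epsilon\le 1$. A minor convenience worth recording is that \cref{algo:main} only ever uses the \emph{ratio} $\cC_Q(x)/\cC_P(x)$, so the normalizing constants $(2\pi)^{n/2}$ cancel and one may work with $\sqrt{\det(\Sigma_1)/\det(\Sigma_2)}\cdot\exp(-\tfrac{1}{2}[(x-\mu_2)^\top\Sigma_2^{-1}(x-\mu_2)-(x-\mu_1)^\top\Sigma_1^{-1}(x-\mu_1)])$; this helps numerical stability but is not needed for the stated bound.

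There is essentially no hard step here; the only points to check are routine. One should confirm that drawing a standard Gaussian coordinate is admitted as a unit-cost primitive (consistent with the real-RAM conventions already used for \cref{thm:gaussians-main}), and that the determinant and inverse computations can be carried out in (strongly) polynomial time, which is exactly the content of the preceding remark and its references \cite{LovNotes,Wilk65}. The remaining work is just bookkeeping of the time bounds to confirm $O(n^3+n^2\epsilon^{-2})\subseteq O(n^3\epsilon^{-2})$. The closest thing to an ``obstacle'' is purely presentational: making explicit that, unlike \cref{thm:gaussians-main}, no learning step is invoked here, so there is no $\eps/4$ loss from \cref{thm:GaussLearn} and $\beta=\gamma=0$ throughout.
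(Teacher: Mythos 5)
Your proposal is correct and follows essentially the same route as the paper: obtain exact $(0,0)$-\eval approximators directly from the given parameters and sample from $N(\mu_1,\Sigma_1)$ via a Cholesky factorization, then invoke the (continuous analog of) \cref{thm:main} with $\beta=\gamma=0$. Your bookkeeping is in fact slightly sharper (you amortize the one-time $O(n^3)$ factorization and inverse/determinant computations so each sample and evaluation costs $O(n^2)$, giving $O(n^3+n^2\epsilon^{-2})$, whereas the paper charges $O(n^3)$ per sample), but both yield the stated $O(n^3\epsilon^{-2})$ bound.
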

\begin{proof}
We again invoke \cref{algo:main}. Since the parameters are already provided, we can readily obtain $(0,0)$-\eval approximators for $N(\mu_1, \Sigma_1)$ and $N(\mu_2,\Sigma_2)$. For \cref{algo:main}, we also need sample access to one of the two distributions. It is well known that if $v \sim N(0,I)$ and $\Sigma = LL^\top$, then $Lv+\mu \sim N(\mu,\Sigma)$; the matrix $L$ can be obtained in $O(n^3)$ time using a Cholesky decomposition. Hence, each sample from $N(\mu_1,\Sigma_1)$ costs $O(n^3)$ time, so that the entire algorithm runs in $O(n^3\eps^{-2})$ time.
\end{proof}

\section{Causal Bayesian Networks under Atomic Interventions}\label{sec:cbn}
We describe Pearl's notion of causality from~\cite{Pearl00}. Central to his formalism is the notion of an {\em intervention}. Given a variable set ${V}$ and a subset ${S} \subset {V}$, an intervention $\cdo({s})$ is the process of fixing the set of variables in ${S}$ to the values ${s}$. If the original distribution on $V$ is $P$, we denote the {\em interventional distribution} as $P_s$, intuitively, the distribution induced on ${V}$ when an external force sets the variables in ${S}$ to ${s}$.

Another important component of Pearl's formalism is that some variables may be hidden (latent). The hidden variables can neither be observed nor be intervened upon.  Let ${V}$ and ${U}$ denote the subsets corresponding to observable and hidden variables respectively. 
Given a directed acyclic graph $H$ on ${V \cup U}$ and a subset ${S} \subseteq ({V \cup U})$, we use ${\Pi}_H({S})$ and $\Pa_H({S})$  to denote the set of all parents and observable parents respectively of ${S}$, excluding ${S}$, in $H$. When the graph $H$ is clear, we may omit the subscript. 

\begin{definition}[Causal Bayesian Network] \label{def:causal Bayesnet}
A {\em (semi-Markovian) causal Bayesian  network (\cbn)} on variables $X_1, \dots, X_n$ is a collection of interventional distributions defined by a tuple $\langle {V}, {U}, G,$ $\{\Pr[X_i \mid x_{\Pi(i)}] : i \in {V}, x_{\Pi(i)} \in \Sigma^{|{\Pi}(i)|}\}, \Pr[X_U]\}\rangle$, where (i)  $G$ is a directed acyclic graph on ${V} \cup {U} = [n]$, (ii) $\Pr[X_i \mid x_{\Pi(i)}]$ is the conditional probability distribution of $X_i$ given that its parents $X_{{\Pi}(i)}$ take the values $x_{{\Pi}(i)}$, and (iii)  $\Pr[X_U]$ is the distribution of the hidden variables $\{X_i : i \in U\}$.

A \cbn\ 
$\cP = \langle {V}, {U}, G,$ $\{\Pr[X_i \mid x_{\Pi(i)}]: i \in V, x_{\Pi(i)} \in \Sigma^{|\Pi(i)|}\}, \Pr[X_U]\rangle$ 
defines a unique interventional distribution $P_s$ for every subset ${S} \subseteq {V}$ (including ${S} = \emptyset$) and assignment ${s} \in \Sigma^{|{S}|}$, as follows. For all ${x} \in \Sigma^{|{V}|}$:
$$P_s({x}) = 
\begin{cases}
\sum_{{u}} \prod_{i \in {V}\setminus {S}} \Pr[x_i \mid x_{\pi(i)}] \cdot \Pr[X_U = u] & \text{if }{x} \text{ is consistent with }{s}\\
0 & \text{ otherwise.}
\end{cases}
$$
We use $P$ to denote the observational distribution ($S = \emptyset$).
$G$ is said to be the {\em causal graph} corresponding to the {\cbn} $\cP$.
\end{definition}
It is standard in the causality literature~\cite{10.5555/2073876.2073938,VERMA199069,10.5555/3327546.3327616} to assume that each variable in $U$ is a source node with exactly two children from $V$, since there is a known algorithm~\cite{10.5555/2073876.2073938,VERMA199069} which converts a general causal graph into such graphs. Given such a causal graph, we remove every source node $Z$ from $G$ and put a {\em bidirected} edge between its two observable children $X_1$ and $X_2$. We end up with an Acyclic Directed Mixed Graph (ADMG) graph $G$, having vertex set $V$ and having edge set $E^\to \cup E^\leftrightarrow$ where $E^\to$ are the directed edges and $E^\leftrightarrow$ are the bidirected edges. The {\em in-degree} of $G$ is the maximum number of directed edges coming into any vertex in $V$. A {\em c-component} refers to any maximal subset of $V$ which is interconnected by bidirected edges. Then $V$ gets partitioned into c-components: $S_1, S_2,\dots,S_\ell$. \cref{fig:admg} shows an example.

\begin{figure}
\centering
\includegraphics{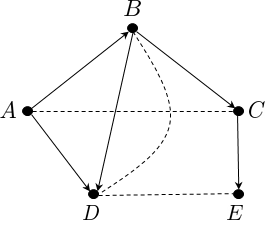}
\caption{An acyclic directed mixed graph (ADMG) where the bidirected edges are depicted as dashed. The in-degree of the graph is 2. The c-components are $\{A,C\}$ and $\{B,D,E\}$.}
\label{fig:admg}
\end{figure}

Throughout this section, we focus on {\em atomic} interventions, i.e. interventions on a single variable.
Let $A \in V$ correspond to this variable. Without loss of generality, suppose $A \in S_1$. 
Tian and Pearl \cite{TP02b} showed that in an ADMG $G$ as above, $P_a$ can be completely determined from $P$ for all $a \in \Sigma$ iff the following condition holds.

\begin{assumption}[Identifiability wrt $A$]
There does not exist a path of bidirected edges between $A$ and any child of $A$. Equivalently, no child of $A$ belongs to $S_1$.
\label{ass:id}
\end{assumption}

Recently algorithms and sample complexity bounds for learning and sampling from identifiable atomic interventional distributions were given in~\cite{BGKMV20} under the following additional assumption.
For $S\subseteq V$, let $\Pa^+(S)=S\cup \Pa(S)$.

\begin{assumption}[$\alpha$-strong positivity wrt $A$]
Suppose $A$ lies in the c-component ${S}_1$, and let $Z = \Pa^+(S_1)$. For every assignment $z$ to $Z$, $P(Z=z) > \alpha$. 
\label{ass:bal}
\end{assumption}

We state the two main results of \cite{BGKMV20}, which given sampling access to the observational distribution $P$ of an unknown causal Bayesian network on a known ADMG return an $(\epsilon,0)$-\eval approximator and an approximate generator for $P_a$.
For the two results below, suppose the CBN $\cP$ satisfies identifiablity (\cref{ass:id}) and $\alpha$-strong positivity (\cref{ass:bal}) with respect to a variable $A \in {V}$ . Let $d$ denote the maximum in-degree of the graph $G$ and $k$ denote the size of its largest c-component. 

\begin{theorem}[\eval approximator~\cite{BGKMV20}]\label{thm:doEVAL}
For any intervention $a$ to $A$ and parameter $\eps \in (0,1)$, there is an algorithm that takes $m = \tilde{O}\left(\frac{|\Sigma|^{5kd}n}{\alpha\epsilon^2}\right)$ samples from $P$, and in $O(mn|\Sigma|^{2kd})$ time, returns a circuit $E_{P,a}$. With probability at least $2/3$,  the circuit $E_{P,a}$ implements an $(\eps,0)$-\eval approximator for $P_a$, and it  runs in $O(n)$ time for all inputs.
\end{theorem}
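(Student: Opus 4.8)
The plan is to turn the Tian--Pearl identification theorem into a Bayesian-network-style learning problem, exactly parallel to \cref{sec:app-bn}. Fix a topological order of $V$ consistent with the directed part of the known ADMG $G$. Under the identifiability condition (\cref{ass:id}), Tian and Pearl~\cite{TP02b} show that $P_a$ is a fixed expression in the conditional probabilities of the observational distribution $P$: grouping the chain-rule factors $P(x_i\mid x_{<i})$ by c-component yields the $Q$-factors $Q[S_1],\dots,Q[S_\ell]$, the identifiability condition lets one delete $A$ from $Q[S_1]$, and
\[
P_a(x)\;=\;Q[S_1\setminus\{A\}](x)\cdot\prod_{j\ge 2}Q[S_j](x)\quad\text{when }x_A=a,
\]
and $0$ otherwise. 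The structural point I would extract from \cite{TP02b,BGKMV20} is that, after this bookkeeping, every conditional probability that actually occurs conditions only on variables inside $\Pa^+(S_j)$ of a single c-component, so the conditioning sets have size $O(kd)$ and the whole of $P_a$ is determined by a table of $n\,|\Sigma|^{O(kd)}$ numbers $\{\,P(X_i=\sigma\mid X_{C_i}=c)\,\}$. This is what forces the $|\Sigma|^{O(kd)}$ dependence, and nothing more.

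Given the structure, the algorithm mirrors \cref{algo:bn}: draw $m$ observational samples; for each coordinate $i$ and each setting $c$ of its (bounded) conditioning set $C_i$, output the Laplace-corrected empirical conditional $\hat{P}(X_i\mid X_{C_i}=c)$, defaulting to the uniform distribution on $\Sigma$ when the sample count at $c$ is below a threshold $t$; and let $E_{P,a}$ be the circuit that, on input $v$, overwrites $v_A$ by $a$ and multiplies the $n$ table entries dictated by the Tian--Pearl formula. This is exact (hence $\gamma=0$), evaluable in $O(n)$ time, and represents an honest distribution $\hat{P}_a$ --- the same plug-in formula with $\hat{P}$ in place of $P$. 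To bound $\dtv(P_a,\hat{P}_a)$ I would prove the $Q$-decomposition analogue of \cref{thm:klsub}: $\KL(P_a,\hat{P}_a)$ is dominated by $\sum_i\sum_c P_a[X_{C_i}=c]\cdot\KL\!\left(P(\cdot\mid X_{C_i}=c),\hat{P}(\cdot\mid X_{C_i}=c)\right)$. Then the heavy/light split of \cref{thm:bnlearningapp} together with the Laplace-estimator bound \cref{thm:exKLlearn} controls each term, and Pinsker converts the resulting $O(\eps^2)$ KL bound into an $O(\eps)$ bound in total variation; since the analysis is a union bound over $n\,|\Sigma|^{O(kd)}$ events, it is the source of the $\tilde{O}(\cdot)$ logarithmic factors.

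The main obstacle is that the $\KL$-subadditivity sum above is weighted by the \emph{interventional} measure $P_a$, while the empirical conditionals $\hat{P}(\cdot\mid X_{C_i}=c)$ are estimated from \emph{observational} samples: a setting $c$ carrying non-negligible $P_a$-mass of $X_{C_i}$ need not carry comparable $P$-mass, and on such settings the empirical estimate could be far from $P(\cdot\mid X_{C_i}=c)$. Controlling this requires bounding the Radon--Nikodym factor $\mathrm{d}P_a/\mathrm{d}P$ restricted to these $O(kd)$-variable conditioning events; this is exactly what $\alpha$-strong positivity (\cref{ass:bal}) on $\Pa^+(S_1)$ buys, at a cost of a $1/\alpha$ per affected coordinate and an extra power of $|\Sigma|^{kd}$ for each c-component that the descendants of $A$ can reach --- which is precisely why the sample complexity is $\tilde{O}\!\bigl(|\Sigma|^{5kd}n/(\alpha\eps^2)\bigr)$ rather than matching the $|\Sigma|^{2kd}$ of the running time. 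I expect the bulk of the work to be this change-of-measure bookkeeping; the remainder is a routine transcription of the fixed-structure Bayesian network argument of \cref{sec:app-bn} to the c-component factorization, as carried out in \cite{BGKMV20}.
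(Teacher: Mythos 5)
There is nothing in this paper for your proposal to be checked against: \cref{thm:doEVAL} is imported verbatim from \cite{BGKMV20} and is used as a black box in \cref{sec:cbn}, so the ``paper's own proof'' simply does not exist here. Judged on its own terms, your skeleton is a reasonable reconstruction of the kind of argument that the cited work carries out: Tian--Pearl identification via the c-component ($Q$-factor) decomposition, the observation that every conditional that appears conditions only on a set inside $\Pa^+(S_j)$ of size $O(kd)$, plug-in estimation of those $n\,|\Sigma|^{O(kd)}$ conditionals, a \cref{thm:klsub}-style subadditivity bound plus Pinsker, and the remark that the plug-in object is itself a normalized distribution so that $\gamma=0$ and evaluation is an $O(n)$ table-lookup product. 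That much is consistent in spirit with \cite{BGKMV20} and with the fixed-structure Bayes-net analysis of \cref{sec:app-bn}.

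The genuine gap is exactly the step you defer: the transfer from estimation error measured under the observational distribution $P$ to error measured under the interventional distribution $P_a$. You correctly identify it as the crux, but you do not execute it, and your accounting of how \cref{ass:bal} is spent is speculative and partly off. Strong positivity is assumed only for $Z=\Pa^+(S_1)$, so there is no per-component Radon--Nikodym control available ``for each c-component that the descendants of $A$ can reach''; nothing in the hypotheses lower-bounds $P$ on conditioning events of the other components $S_j$, $j\ge 2$. The mechanism has to be global: for $x$ consistent with $a$, the ratio $P_a(x)/P(x)$ involves only the $Q[S_1]$-factor (the factors $Q[S_j]$, $j\ge 2$, cancel), and it is the $\alpha$-lower bound on assignments of $Z$ that bounds this ratio, which in turn controls the $P_a$-weights appearing in your subadditivity sum for \emph{all} coordinates, including those in other c-components. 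Until that change-of-measure lemma is stated and proved (and the heavy/light thresholding of \cref{algo:bn} re-done under the reweighted measure), the claimed $\dtv(P_a,\hat P_a)\le\eps$ bound, the $1/\alpha$ dependence, and the $|\Sigma|^{5kd}$ exponent are asserted rather than derived; your explanation of the exponent in particular is a guess, and the actual constant in \cite{BGKMV20} comes out of their specific decomposition (learning $P(X_{V\setminus Z}\mid X_Z)$ and the low-dimensional interventional marginal on $Z\setminus\{A\}$ separately), not from a per-component surcharge. So as a proof the proposal is incomplete at its central step, though as a reading guide to the cited result it points in the right direction.
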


\begin{theorem}[Generator~\cite{BGKMV20}]\label{thm:doSAMP}
For any intervention $a$ to $A$ and parameter $\eps \in (0,1)$, there is an algorithm that takes $m = \tilde{O}\left(\frac{|\Sigma|^{5kd}n}{\alpha\epsilon^2}\right)$ samples from $P$, and in $O(mn|\Sigma|^{2kd})$ time, returns a probabilistic circuit $G_{P,a}$ that generates samples of a distribution $\tilde{P}_a$ satisfying $\dtv(P_a,\tilde{P}_a) \leq \eps$. On each call, the circuit takes $O(n |\Sigma|^{2kd}\eps^{-1}\log \delta^{-1})$ time and outputs a sample of $\tilde{P}_a$ with probability at least $1-\delta$. 
\ignore{
takes $m = \tilde{O}\left(\frac{|\Sigma|^{5kd}n^2}{\alpha\epsilon^2}\right)$ samples of $P(V)$, and in $O(mn|\Sigma|^{2kd})$ time produces a distribution $\hat{P}_{\samp}$ such that $\dtv(P(X_{V\setminus \{i^*\}}\mid\cdo(X_{i^*}=\sigma)),\hat{P}_{\samp}) \le \epsilon$ with probability at least $1-\delta$. $\hat{P}_{\samp}$ returns a set of $t$ samples in $O(nt|\Sigma|^{2kd}\log {t\over \gamma}/\epsilon)$ time with probability at least $1-\gamma$ for any $t,\gamma$.}
\end{theorem}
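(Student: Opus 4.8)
\textbf{Proof proposal (for \cref{thm:doSAMP}).} The plan is to reduce sampling from $P_a$ to drawing a sequence of conditional samples dictated by the Tian--Pearl identification formula for atomic interventions. Fix a topological order $v_1 < v_2 < \cdots$ of $V$ with respect to the directed edges of $G$, write $v^{(i-1)}=(v_1,\dots,v_{i-1})$, and for each c-component $S_j$ let the c-factor be $Q[S_j](v)=\prod_{i:\,v_i\in S_j}\Pr_P[v_i\mid v^{(i-1)}]$. Tian's factorization gives $P(v)=\prod_j Q[S_j](v)$, and under identifiability (\cref{ass:id}) the interventional distribution is obtained by replacing $Q[S_1]$ (the c-factor of the component $S_1\ni A$) with its post-intervention counterpart $Q[S_1\setminus\{A\}]_{A=a}$ --- itself expressible as a ratio of marginals of $Q[S_1]$, hence of products of observational conditionals --- while leaving $Q[S_j]$, $j\ge 2$, unchanged. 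The structural refinement that makes this usable is Tian's observation that each conditional $\Pr_P[v_i\mid v^{(i-1)}]$ occurring here in fact depends only on an \emph{effective context} of size $O(kd)$ governed by $S_j\cup\Pa(S_j)$, so the whole formula is parametrized by only $|\Sigma|^{O(kd)}\cdot n$ numbers rather than $|\Sigma|^n$.

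Given this, the generator circuit is immediate in shape: to produce a sample of $P_a$, process $v_1,v_2,\dots$ in topological order; set $A:=a$ when $v_i=A$; sample every other $v_i\notin S_1$ from the (estimated) conditional $\Pr_P[v_i\mid\text{effective context already drawn}]$; and for $v_i\in S_1$ downstream of $A$ use the adjusted $S_1$-rule, which because $|S_1|\le k$ and in-degrees are $\le d$ can be implemented by a local summation/rejection over $|\Sigma|^{O(kd)}$ alphabet values. All true conditionals are replaced by Laplace-corrected empirical conditionals tabulated from the $m$ observational samples; a call then costs $O(n)$ table lookups together with the rejection re-sampling inside $S_1$, which is what produces the $|\Sigma|^{2kd}$ factor, the $\eps^{-1}\log\delta^{-1}$ overhead, and the ``output with probability $\ge 1-\delta$'' guarantee. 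An essentially equivalent alternative worth noting is to take the $(\eps,0)$-\eval approximator $E_{P,a}$ of \cref{thm:doEVAL} together with a directly sampleable Bayes-net estimate $\hat P$ of the observational $P$ and perform rejection sampling of $x\sim\hat P$ with acceptance probability proportional to $E_{P,a}(x)/\hat P(x)$; strong positivity (\cref{ass:bal}) bounds this ratio by $\alpha^{-O(kd)}$, so the expected number of trials is polynomial.

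Correctness is a hybrid / chain-rule telescoping argument: $\dtv(P_a,\tilde P_a)$ is at most the sum, over the $n$ generation steps, of the expected $\ell_1$ error between the true conditional and its estimate evaluated at the context actually realized. Standard concentration (\cref{thm:exKLlearn}, or a Chernoff bound) shows a context seen $\ge t$ times yields $\ell_1$ error $\tilde{O}(\sqrt{|\Sigma|/t})$; $\alpha$-strong positivity guarantees that every context relevant to $Z=\Pa^+(S_1)$ has probability $\ge \alpha\cdot|\Sigma|^{-O(kd)}$, hence is seen $\gtrsim m\alpha|\Sigma|^{-O(kd)}$ times, while ``rare'' contexts (probability below $\eps/(n|\Sigma|^{O(kd)})$) are carved out and shown to contribute at most $\eps$ in aggregate. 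Taking $m=\tilde{O}(|\Sigma|^{5kd}n/(\alpha\eps^2))$ makes each heavy-context conditional accurate to $O(\eps/n)$ in $\ell_1$, so the telescoped error is $O(\eps)$; the $O(mn|\Sigma|^{2kd})$ preprocessing time is just the cost of binning the samples into the $|\Sigma|^{O(kd)}n$ context tables.

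The main obstacle is not the estimation or the telescoping (routine once set up) but two structural points. First, proving that each conditional in the identification formula collapses to an effective context of size $O(kd)$: this requires careful use of the d-separation and c-component machinery of Tian and Verma, and is the only place the bounded-in-degree and bounded-c-component hypotheses are genuinely used. Second, making the $S_1$ ``do--see'' adjustment simultaneously correct and efficient: the adjusted conditional for a vertex of $S_1$ downstream of $A$ is a ratio of marginals of $Q[S_1]$, and one must show it too depends on only a bounded context and can be (approximately) sampled from in $\poly(|\Sigma|^{kd})$ time --- this is the source of the extra $|\Sigma|^{2kd}$ and of the $\eps^{-1}\log\delta^{-1}$ rejection overhead in the per-call running time.
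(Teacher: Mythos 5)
There is nothing in this paper to compare your argument against: \cref{thm:doSAMP} is quoted verbatim from \cite{BGKMV20} and the present paper contains no proof of it --- it is used as a black box (together with \cref{thm:doEVAL}) inside the distance-approximation algorithm of \cref{sec:cbn}. So the only question is whether your sketch is a credible reconstruction of the cited result, and on that score it is broadly on the right track: sequentially sampling the Tian--Pearl identification formula in topological order, with the c-factor $Q[S_1]$ replaced by its post-interventional adjustment and all observational conditionals replaced by empirical estimates over bounded ``effective contexts,'' followed by a chain-rule/telescoping error bound using $\alpha$-strong positivity to guarantee enough samples per heavy context, is indeed the shape of the argument in \cite{BGKMV20}, and it is consistent with the stated complexities (the $\eps^{-1}\log\delta^{-1}$ per-call factor signalling a rejection step inside the modified $S_1$ block).

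That said, as a proof your text is a program rather than an argument. The two points you yourself flag as ``structural obstacles'' --- (i) that every conditional appearing in the identification formula collapses to a context of size $O(kd)$ determined by $\Pa^+(S_j)$, and (ii) that the do--see adjustment for vertices of $S_1$ downstream of $A$ is both a bounded-context object and efficiently sampleable --- are precisely the technical content of the cited theorem, and without them neither the sample bound $\tilde{O}(|\Sigma|^{5kd} n/(\alpha\eps^2))$ nor the per-call runtime can be justified; strong positivity is only assumed for $Z=\Pa^+(S_1)$, so the ``every relevant context has probability $\ge \alpha|\Sigma|^{-O(kd)}$'' step needs a separate treatment for contexts outside $S_1$ (there the light/heavy carving must do the work, as in the Bayes-net learning proof of \cref{thm:bnlearningapp}). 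Also, your ``essentially equivalent alternative'' is not equivalent as stated: rejection sampling against a learned observational model with acceptance ratio bounded only by $\alpha^{-O(kd)}$ gives a per-call cost scaling with $\alpha^{-O(kd)}$, not the claimed $O(n|\Sigma|^{2kd}\eps^{-1}\log\delta^{-1})$, and the observational distribution of a semi-Markovian model with latents is not in general directly sampleable as a bounded-in-degree Bayes net on $V$, so that shortcut would need its own justification.
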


We give a distance approximation algorithm for identifiable atomic interventional distributions using the above two results and \cref{thm:main}.

\begin{theorem}[Formal version of \cref{thm:cbninf}]
\ignore{Let $G_1$ and $G_2$ be two known ADMG's on a common observable set $V$ containing a special node $A$. Suppose $G_1$ and $G_2$ have maximum in-degree $\leq d$ and largest c-components of size $\leq k$.
Suppose both $G_1$ and $G_2$ satisfy \cref{ass:id}.

Fix any $a \in \Sigma$. Define $\cD_1$ and $\cD_2$ to be the families of interventional distributions $P_a$ and $Q_a$ obtained from CBN's $\cP$ and $\cQ$  over $G_1$ and $G_2$ respectively satisfying \cref{ass:bal} wrt $A$. Then there is a distance approximation algorithm for $(\cD_1, \cD_2)$ that takes $m = \tilde{O}\left(\frac{|\Sigma|^{5kd}n^2}{\alpha\epsilon^2}\right)$ samples from $P$ and $Q$, runs in time $\tilde{O}(mn|\Sigma|^{2kd}+n|\Sigma|^{2kd}\epsilon^{-3})$}

Suppose $\cP, \cQ$ are two unknown CBN's on two known ADMGs $G_1$ and $G_2$ on a common observable set $V$ both satisfying \cref{ass:id} and \cref{ass:bal} wrt a special vertex $A$. Let $d$ denote the maximum in-degree, and $k$ denote the size of the largest c-component of $G_1$ and $G_2$. 

Then there is an algorithm which for any $a \in \Sigma$ and parameter $\eps \in (0,1)$, takes $m=\tilde{O}\left(\frac{|\Sigma|^{5kd}n}{\alpha\epsilon^2}\right)$ samples from $P$ and $Q$, runs in time $\tilde{O}(mn|\Sigma|^{2kd}+n|\Sigma|^{2kd}\epsilon^{-3})$ and returns a value $e$ such that
$|e-\dtv(P_a,Q_a)|\le \epsilon$ with probability at least 2/3.
\end{theorem}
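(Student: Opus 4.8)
The plan is to mimic the proofs of \cref{thm:bn-main} and \cref{thm:ising-main}: build \eval approximators for the two distributions whose distance we want and feed them into \cref{thm:main}. The one genuinely new issue is that \cref{algo:main} needs sample access to (one of) the distributions being compared, and here those distributions are the interventional distributions $P_a$ and $Q_a$, whereas we are only given observational samples from $P$ and $Q$. Both halves of this difficulty are already resolved by \cite{BGKMV20}: \cref{thm:doEVAL} turns observational samples into an \eval approximator for $P_a$ (resp.\ $Q_a$), and \cref{thm:doSAMP} turns observational samples into a generator for a distribution $\eps$-close to $P_a$, which is exactly the sample access \cref{algo:main} requires.

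Concretely, fix a small absolute constant $c$ and set $\epsilon' = c\,\epsilon$. \textbf{Step 1.} Run the algorithm of \cref{thm:doEVAL} on the observational samples from $P$ (with parameter $\epsilon'$) and, separately, on the observational samples from $Q$, obtaining circuits $E_{P,a}$ and $E_{Q,a}$. With high probability (after driving the failure probabilities of the \cite{BGKMV20} subroutines down to a small constant by standard amplification) these are $(\epsilon',0)$-\eval approximators, i.e.\ there are distributions $\hat{P}_a,\hat{Q}_a$ with $\dtv(P_a,\hat{P}_a)\le\epsilon'$ and $\dtv(Q_a,\hat{Q}_a)\le\epsilon'$ that the two circuits compute exactly. \textbf{Step 2.} Run the algorithm of \cref{thm:doSAMP} on the observational samples from $P$ (with parameter $\epsilon'$ and per-call failure probability $\delta=\Theta(\epsilon^2)$), obtaining a probabilistic circuit generating samples from a distribution $\tilde{P}_a$ with $\dtv(P_a,\tilde{P}_a)\le\epsilon'$. \textbf{Step 3.} Invoke \cref{thm:main} (i.e.\ run \cref{algo:main}) with sample access supplied by this generator and with $E_{P,a},E_{Q,a}$ as the two \eval approximators, using internal accuracy $\epsilon'$ and a small constant failure probability, and output the value it returns.

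For correctness, observe that $E_{P,a}$, besides being an $(\epsilon',0)$-\eval approximator for $P_a$, is also a $(2\epsilon',0)$-\eval approximator for $\tilde{P}_a$ with the same witness distribution $\hat{P}_a$, since $\dtv(\tilde{P}_a,\hat{P}_a)\le\dtv(\tilde{P}_a,P_a)+\dtv(P_a,\hat{P}_a)\le 2\epsilon'$. Hence \cref{thm:main}, applied with ``$P$'' $=\tilde{P}_a$ (to which we have sample access via the generator), ``$Q$'' $=Q_a$, and both approximators regarded as $(2\epsilon',0)$-\eval approximators, returns a value $e$ with $|e-\dtv(\tilde{P}_a,Q_a)|\le 3\cdot 2\epsilon'+\epsilon' = 7\epsilon'$ (the $\tfrac{2\gamma}{1-\gamma}$ term vanishes since $\gamma=0$), and one more triangle inequality $|\dtv(\tilde{P}_a,Q_a)-\dtv(P_a,Q_a)|\le\dtv(\tilde{P}_a,P_a)\le\epsilon'$ yields $|e-\dtv(P_a,Q_a)|\le 8\epsilon'\le\epsilon$ once $c\le 1/8$. (Equivalently, one re-reads the computation in the proof of \cref{thm:main}: drawing samples from $\tilde{P}_a$ rather than from the witness distribution $\hat{P}_a$ perturbs the estimated expectation by at most $\dtv(\tilde{P}_a,\hat{P}_a)\le 2\epsilon'$, since the summand lies in $[0,1]$.) The sample complexity is dominated by Steps~1 and~2, namely $m=\tilde{O}(|\Sigma|^{5kd}n/(\alpha\epsilon^2))$ observational samples from each of $P$ and $Q$; the running time is $O(mn|\Sigma|^{2kd})$ for the constructions plus, in Step~3, $O(\epsilon^{-2}\log\delta^{-1})$ generator calls at $\tilde{O}(n|\Sigma|^{2kd}\epsilon^{-1})$ time each (accounting for the $\tilde{O}(n|\Sigma|^{2kd}\epsilon^{-3})$ term) together with the same number of $O(n)$-time \eval-circuit calls.

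I expect the only real obstacle to be bookkeeping for the overall success probability: one must union-bound over the success events of the two \eval constructions and of the generator construction, over the possibility that some of the $O(\epsilon^{-2})$ per-call generator invocations fail (handled by $\delta=\Theta(\epsilon^2)$, so the union over all calls still fails with only small constant probability), and over the Chernoff concentration inside \cref{algo:main}; amplifying each constant-probability subroutine of \cite{BGKMV20} to an arbitrarily small constant failure probability (at polylogarithmic overhead, absorbed by $\tilde{O}(\cdot)$) makes the total failure probability below $1/3$. Everything substantive — the estimator itself and, crucially, the reduction from observational to interventional samples — is already packaged in \cref{thm:main}, \cref{thm:doEVAL} and \cref{thm:doSAMP}; nothing beyond these and the triangle inequality is required.
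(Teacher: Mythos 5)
Your proposal is correct and follows essentially the same route as the paper: obtain \eval approximators for the interventional distributions via \cref{thm:doEVAL}, obtain sample access via the generator of \cref{thm:doSAMP}, and feed both into \cref{thm:main} with triangle inequalities to transfer the error back to $\dtv(P_a,Q_a)$. The only (immaterial) difference is bookkeeping: the paper builds generators for both $\tilde{P}_a$ and $\tilde{Q}_a$ and estimates $\dtv(\tilde{P}_a,\tilde{Q}_a)$, whereas you sample only from $\tilde{P}_a$ and compare directly against $Q_a$, which works equally well since \cref{algo:main} needs samples from just one of the two distributions.
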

\begin{proof}
We first invoke \cref{thm:doSAMP} to 
obtain the generators for distributions $\tilde{P}_a$ and $\tilde{Q}_a$ that are $\epsilon/10$ close to the two interventional distributions $P_a$ and $Q_a$ respectively in $\dtv$. By triangle inequality, it suffices to estimate $\dtv(\tilde{P}_a,\tilde{Q}_a)$ up to an additive $4\epsilon/5$ error. Next we 
invoke \cref{thm:doEVAL} to obtain circuits $E_{P,a}$ and $E_{\cQ,a}$ that each implement  $(\epsilon/10,0)$-\eval approximators for  the two interventional distributions $P_a$ and $Q_a$ respectively. Let $\hat{P}_a$ and $\hat{Q}_a$ denote the two distributions that $E_{P,a}$ and $E_{Q,a}$ respectively compute evaluations of.
Using the triangle inequality, $\dtv(\tilde{P}_a,\hat{P}_a) \le\epsilon/5$ and $\dtv(\tilde{Q}_a,\hat{Q}_a) \le\epsilon/5$. Thus $E_{P,a}$ and $E_{Q,a}$ are $(\epsilon/5,0)$-\eval approximators for $\tilde{P}_a$ and $\tilde{Q}_a$ respectively. From \cref{thm:main}, we need $O(\epsilon^{-2})$ samples from $\tilde{P}_a$ and $O(\eps^{-2})$ calls to $E_{P,a}$ and $E_{Q,a}$  to estimate $\dtv(\tilde{P}_a,\tilde{Q}_a)$ up to an additive $4\epsilon/5$ error.
\end{proof}
\section{Improving Success of Learning Algorithms Using Distance Estimation}\label{sec:boost}
In this section we give a general algorithm for improving the success probability of learning certain families of distributions. Specifically, let $\mathcal{D}$ be a family of distributions for which we have a learning algorithm $\mathcal{A}$ in $\dtv$ distance $\epsilon$ that succeeds with probability 3/4. Suppose there is also a distance approximation algorithm $\mathcal{B}$ for $\mathcal{D}$. The algorithm presented below, which uses $\mathcal{A}$ and $\mathcal{B}$, learns an unknown distribution from $\mathcal{D}$ with probability at least $(1-\delta)$.

\begin{algorithm}[H]
\KwData{Samples from an unknown distribution $P$}
\KwResult{A distribution $\hat{P}$ such that $\dtv(P,\hat{P})\le \epsilon$ with probability $1-\delta$}
\For{$0\le i \le R=O(\log {1\over \delta})$}{
$P_i \gets$ Run $\mathcal{A}$ on samples from $P$ to get a learnt distribution\;
$count_i \gets 0$\;
}
\For{every unordered pair $0\le i<j\le R$}{
$d_{ij} \gets$ Estimate distance between $P_i$ and $P_j$ up to additive error $\epsilon$ using $\mathcal{B}$\;
\If{$d_{ij} \le 3\epsilon$}{
$count_i \gets count_i+1$\;
$count_j \gets count_j+1$\;
}
}
$i^*=\arg \max_i count_i$\;
\Return $P_{i^*}$\;
\caption{High probability distribution learning}
\label{algo:boost}
\end{algorithm}

\boostmain*
\ignore{
\begin{theorem}\label{thm:boost}
Let $\mathcal{D}$ be a family of distributions. Suppose there is a learning algorithm $\mathcal{A}$ which for any $P\in \mathcal{D}$ takes $m_{\mathcal{A}}(\epsilon)$ samples from $P$ 
and in time $t_{\mathcal{A}}(\epsilon)$ outputs a distribution $P_1$ such that $\dtv(P,P_1)\le \epsilon$ with probability at least 3/4. Suppose there is a distance approximation 
algorithm $\mathcal{B}$ for $\mathcal{D}$ that given any two completely specified distributions $P_1$ and $P_2$ estimates $\dtv(P_1,P_2)$ up to an additive error $\epsilon$ in 
$t_{\mathcal{B}}(\epsilon,\delta)$ time with probability at least $(1-\delta)$. Then Algorithm~\ref{algo:boost} uses $\mathcal{A}$ and $\mathcal{B}$, takes $O(m_{\mathcal{A}}
(\epsilon/4)\log {1\over \delta})$ samples from $P$, runs in $O(t_{\mathcal{A}}(\epsilon/4)\log {1\over \delta}+t_{\mathcal{B}}(\epsilon/4,{\delta\over 210000 \log^2 {2\over \delta}})\log^2 {1\over \delta})$ time and returns a distribution $\hat{P}$ such that $\dtv(P,\hat{P})\le \epsilon$ with probability at least $1-\delta$.  
\end{theorem}}
\begin{proof}
The boosting algorithm is given in Algorithm~\ref{algo:boost}. We take $R=324\log {2\over \delta}$  repetitions of $\mathcal{A}$ to get the distributions $P_i$s. From Chernoff's bound at least $2R/3$ distributions (successful) satisfy $\dtv(P_i,P)\le \epsilon$ with probability at least $1-\delta/2$, which we condition on henceforth. These successful distributions have pairwise distance at most $2\epsilon$. Conditioned on the ${R \choose 2}$ calls to $\mathcal{B}$ succeeding, the pairwise distances between the successful distributions are at most $3\epsilon$. Hence every successful $i$ has its count value at least $2R/3-1$. This means $i^*$, which has the maximum count value $(\ge 2R/3-1)$ must intersect at least one successful $i'$ such that $\dtv(P_{i^*},P_{i'})\le 3\epsilon$. By triangle inequality we get $\dtv(P_{i^*},P)\le 4\epsilon$.

It suffices for each call to $\mathcal{B}$ succeed with probability at least ${\delta\over 2R^2}$.
\end{proof}

Assuming black-box access to $\mathcal{A}$ $O(m_{\mathcal{A}}\log {1\over \delta})$ samples are needed in the worst case to learn with $1-\delta$ probability since otherwise all the $o(\log {1\over \delta})$ repetitions may fail. We can apply the above algorithm to improve the success probability of learning bayesian networks on a given graph with small indegree and multidimensional Gaussians.
\bibliographystyle{alpha}
\bibliography{reflist}

\end{document}